\newtheorem{theorem}{Theorem}[section]
\newtheorem{corollary}[theorem]{Corollary}
\newtheorem{lemma}[theorem]{Lemma}
\newcommand{\lrule}[3]{\label{#1} #2 & \to #3} 
\newcommand{\rrule}[3]{#2 & \to #3 \tag{\ref{#1}}} 
\newcommand{\urule}[2]{#1 & \to #2} 
\newcommand{\rLem}[1]{Lemma~\ref{#1}}
\newcommand{\rDef}[1]{Definition~\ref{#1}}
\newcommand{\rSec}[1]{Section~\ref{#1}}
\newcommand{\rThm}[1]{Theorem~\ref{#1}}
\newcommand{\rEx}[1]{Example~\ref{#1}}
\newcommand{\rAlg}[1]{Algorithm~\ref{#1}}
\newcommand{\Unlab}{\mathit{Unlab}}
\renewcommand{\P}{\mathcal{P}}
\newcommand{\R}{\mathcal{R}}
\newcommand{\D}{\mathcal{D}}
\newcommand{\T}{\mathcal{T}}
\newcommand{\V}{\mathcal{V}}
\newcommand{\A}{\mathcal{A}}
\newcommand{\Ff}{\mathsf{f}}
\newcommand{\Fz}{\mathsf{0}}
\newcommand{\Fg}{\mathsf{g}}
\newcommand{\Fs}{\mathsf{s}}
\newcommand{\Fp}{\mathsf{p}}
\newcommand{\Fu}{\mathsf{u}}
\newcommand{\FU}{\mathsf{U}}
\newcommand{\Frotate}{\mathsf{rotate}}
\newcommand{\Fstol}{\mathsf{s2\ell}}
\newcommand{\mIn}{\mbox{\sl \textbf{in}}}
\newcommand{\mOut}{\mbox{\sl \textbf{out}}}
\newcommand{\Fq}{\mathsf{q}}
\newcommand{\Fappend}{\mathsf{append}}
\newcommand{\Fcons}{\mathsf{cons}}
\newcommand{\sorsuccsim}{{\,{}_{_(}\!\!\succsim_{_)}}}
\newcommand{\Proc}{\mathit{Proc}}
\newcommand{\CAP}{\mathit{CAP}}
\newcommand{\rt}{\mathrm{root}}
\newcommand{\pin}{p_{in}}
\renewcommand{\to}{\rightarrow}
\newcommand{\from}{\mbox{\rm $\,$:--$\;$}}
\newcommand{\N}{\mathbb{N}}
\newcommand{\el}{\ell}
\newcommand{\email}[1]{{\tt #1}}
\newcommand{\point}{\raisebox{-0.25ex}[0cm]{\ensuremath{\bullet}}}
\title{Automated Termination Proofs for\\Logic Programs by Term
Rewriting} 
\author{
Peter Schneider-Kamp, RWTH Aachen, Germany\\
J\"{u}rgen Giesl, RWTH Aachen, Germany\\
Alexander Serebrenik, TU Eindhoven, The Netherlands\\
Ren\'{e} Thiemann, University of Innsbruck, Austria
}
\begin{abstract}
There
are two kinds of approaches for termination analysis of logic programs:
``transformational'' and ``direct''
ones. Direct approaches prove termination directly on the basis of the logic
program. Transformational approaches
transform a logic program into a
term rewrite system (TRS) and then analyze termination of
the resulting TRS instead. Thus, transformational approaches make all methods
previously developed for TRSs available for logic programs as well. However,
the applicability of most existing transformations is quite restricted, as
they can only be used for certain subclasses of logic programs. (Most of them
are restricted to \emph{well-moded} programs.)
In this paper we improve these transformations such that they become applicable
for \emph{any} definite logic program. To si\-mu\-late the behavior of logic
programs by TRSs, we slightly modify the notion of rewriting by permitting
infinite terms. We show that our transformation
results in TRSs which are indeed
suitable for \emph{automated} termination analysis.
In contrast to most other methods for termination of
logic programs, our technique is also sound for logic programming
\emph{without occur check}, which is typically used in
practice. We implemented
our approach
in the termination prover {\sf AProVE} and successfully evaluated it on a
large collection of examples.
\end{abstract}
\keywords{logic programming, termination analysis, term rewriting, dependency pairs}
\begin{document}
\begin{bottomstuff}Supported by the Deutsche
Forschungsgsmeinschaft DFG under grant GI 274/5-2.\\

Authors' addresses:\\
Peter Schneider-Kamp, J\"{u}rgen Giesl, 
LuFG Informatik 2, RWTH Aachen, Ahornstr.\ 55, 52074 Aachen, Germany,
\email{\{psk,giesl\}@informatik.rwth-aachen.de}\\
Alexander Serebrenik,
Dept.\ of Mathematics and Computer Science, TU Eindhoven, P.O.\ Box 513, 5600
MB Eindhoven, The Netherlands,
\email{a.serebrenik@tue.nl}\\
Ren\'{e} Thiemann, Institute of Computer Science, University of Innsbruck,
Technikerstr.\ 21a,
6020 Innsbruck, Austria,
\email{rene.thiemann@uibk.ac.at}. Most of the work was carried out while R.\
Thiemann was at the LuFG Informatik 2, RWTH Aachen.
\end{bottomstuff}

\maketitle

\section{Introduction}\label{sec:Introduction}

Termination of logic programs is widely studied. Most automated techniques try to prove
\emph{universal termination} of definite logic programs, 
i.e., one tries to show that all derivations
of a logic program are finite w.r.t.\ the left-to-right selection
rule.

Both ``direct'' and ``transformational'' approaches have been proposed in
the literature (see,
e.g.,~\cite{DeSchreye:Decorte:NeverEndingStory} for an
overview
and~\cite{CodishTOPLAS,Codish:Lagoon:Stuckey,Codish:Lagoon:Stuckey:ESOP06,%
DeSchreye:Serebrenik:Kowalski,LMS03,Mesnard:Ruggieri,%
Mesnard:Serebrenik,Nguyen:DeSchreye,Nguyen:DeSchreye06,LOPSTR07,%
SerebrenikS05,Smaus}
for more recent work on ``direct'' approaches). ``Transformational''
approaches have been developed in
\cite{Aguzzi:Modigliani,Arts:Zantema:95,Chtourou:Rusinowitch,Ganzinger:Waldmann,%
Rao,Marchiori:ALP94,Marchiori:AMAST96,Raamsdonk}
and a comparison of these approaches is given in \cite{Ohlebusch01}.
Moreover, similar transformational approaches also exist for other programming
languages (e.g., see \cite{RTA06} for an approach to prove termination of {\sf
Haskell}-programs via a transformation to term rewriting). Moreover, there is
also work in progress to develop such approaches for imperative programs.

In order to be successful for termination analysis of logic programs, transformational methods

\begin{tabbing}
(II) \=\kill
(I)\>should be \emph{applicable} for a class of logic programs as large
as possible and\\
(II)\>should produce TRSs whose termination is \emph{easy to analyze
automatically}.
\end{tabbing}

\noindent
Concerning (I), the above existing transformations
can only be used for certain subclasses of logic
programs.
More precisely,
all approaches except \cite{Marchiori:ALP94,Marchiori:AMAST96} are restricted to \emph{well-moded}
programs. The transformations of \cite{Marchiori:ALP94,Marchiori:AMAST96} also
consider the classes of
\emph{simply
well-typed} and \emph{safely typed} programs.
However in contrast to all previous
transformations,
we present
a new transformation which is applicable for \emph{any}  (definite) logic program.
Like most
approaches for termination of logic programs, we restrict
ourselves to programs without cut and negation.
While there are transformational approaches which go beyond definite
programs \cite{Marchiori:AMAST96}, it is not clear how to transform non-definite logic programs
into TRSs that are suitable for \emph{automated} termination analysis, cf.\ (II).

Concerning (II),  one needs an implementation and an empirical
evaluation to find out  whether termination of the transformed TRSs can indeed
be verified automatically for a large class of examples. Unfortunately, to our
knowledge there is only a single other termination tool available which
implements a transformational approach. This tool  \textsf{TALP} \cite{TALP}
is based on the
transformations of \cite{Arts:Zantema:95,Chtourou:Rusinowitch,Ganzinger:Waldmann}
which are shown to be equally powerful in
\cite{Ohlebusch01}. So these transformations are indeed
suitable for automated termination analysis, but
consequently,  \textsf{TALP} only accepts well-moded
logic programs. This is in contrast to our approach which we
implemented  in our termination prover
{\sf AProVE} \cite{IJCAR06}.  Our
experiments on large collections of examples in
\rSec{sec:experiments} show that our transformation indeed produces TRSs
that are suitable for automated termination analysis and that
{\sf AProVE} is currently among the most powerful termination provers for logic
programs.

To illustrate the starting point for our research, we briefly review
related work on connecting termination analysis of logic programs and term
rewrite systems: in
\rSec{The Classical Transformation}
we recapitulate the
classical
transformation of  \cite{Arts:Zantema:95,Chtourou:Rusinowitch,Ganzinger:Waldmann,Ohlebusch01}
and  in \rSec{Term
Rewriting Techniques for Termination of LPs} we discuss the
approach of adapting TRS-techniques to the logic programming setting (which can be seen as an
alternative to our approach of transforming logic programs to TRSs). Then in
\rSec{Structure of the Paper} we give an overview on the structure of the
remainder of the paper.

\subsection{The Classical Transformation}\label{The Classical Transformation}

Our transformation is inspired by the
transformation of
\cite{Arts:Zantema:95,Chtourou:Rusinowitch,Ganzinger:Waldmann,Ohlebusch01}. In
this classical transformation,  each argument position of
each predicate is either determined to be an {\em input} or an {\em
output} position by a \emph{moding function} $m$. So for every predicate symbol $p$
of arity $n$ and every $1 \leq i \leq n$, we have $m(p,i) \in \{\mIn,
\mOut\}$. Thus, $m(p,i)$ states whether the $i$-th 
argument of $p$ is an input (\mIn) or an output (\mOut)
argument.

As mentioned, the moding must be such that the logic program is  {\em
well moded}~\cite{Apt:Etalle}.
Well-modedness guarantees
that each atom selected by the left-to-right selection rule is ``sufficiently''
instantiated during any derivation with a query that is ground on all input 
positions. More precisely, a program is well moded iff
for any of its clauses $H \from B_1,\ldots,B_k$ with $k \geq 0$, we have
\begin{itemize}
\item[(a)] $\V_{out}(H) \subseteq \V_{in}(H) \cup \V_{out}(B_1) \cup \ldots \cup \V_{out}(B_k)$
and
\item[(b)] $\V_{in}(B_i) \subseteq \V_{in}(H) \cup \V_{out}(B_1) \cup \ldots \cup \V_{out}(B_{i-1})$
for all $1 \leq i \leq k$
\end{itemize}
$\V_{in}(B)$
and $\V_{out}(B)$ are the variables in terms
on $B$'s input and output positions.

\begin{example}
\label{ex_not_sm}
{\sl Consider the following variant of a small example
from
\cite{Ohlebusch01}.
\[\begin{array}{lll}
{\sf p}(X, X). & &\\
{\sf p}({\sf f}(X), {\sf g}(Y)) &\from& {\sf p}({\sf f}(X), {\sf f}(Z)),
{\sf p}(Z,{\sf g}(Y)).
\end{array}\]
\noindent
Let $m$ be a moding with $m({\sf p},1) = \mIn$ and $m({\sf p},2) = \mOut$.
Then the program is well moded:  This is obvious for the first clause.
 For the second clause, (a) holds since the
output variable $Y$~of the head is also an output variable of the second body atom.
Similarly, (b) holds since
the input variable $X$~of the first body atom is also
an input variable of the head, and the input variable
$Z$~of the second body atom is also an output variable of the first body atom.}
\end{example}

\noindent
\hspace*{.2cm} In the classical transformation from logic programs to TRSs,
two new function symbols
$p_{in}$ and $p_{out}$ are introduced for each predicate $p$.
We write ``$p(\vec{s},\vec{t})$'' to denote that $\vec{s}$
and $\vec{t}$ are
the sequences of terms on $p$'s in- and output positions.
\begin{itemize}
\item[$\bullet$] For each fact
$p(\vec{s},\vec{t})$, the TRS contains the rule
$p_{in}(\vec{s}) \to p_{out}(\vec{t})$.
\item[$\bullet$] For each clause $c$ of the form
$p(\vec{s},\vec{t}) \from p_1(\vec{s}_1,\vec{t}_1), \ldots, p_k(\vec{s}_k,\vec{t}_k)$,
the resulting TRS contains the following
rules:
\begin{align*}
\urule{p_{in}(\vec{s})}{u_{c,1}(p_{1_{in}}(\vec{s}_1), \V(\vec{s}))}\\
\urule{u_{c,1}(p_{1_{out}}(\vec{t}_1), \V(\vec{s}))}{
u_{c,2}(p_{2_{in}}(\vec{s}_2), \V(\vec{s}) \cup \V(\vec{t}_1))}\\
& \ldots \\
\urule{u_{c,k}(p_{k_{out}}(\vec{t}_k), \V(\vec{s}) \cup \V(\vec{t}_1) \cup  \ldots \cup \V(\vec{t}_{k-1}))}{  p_{out}(\vec{t})}
\end{align*}
Here,  $\V(\vec{s})$ are the variables occurring in $\vec{s}$.
Moreover, if $\V(\vec{s}) =
\{x_1, \ldots, x_n\}$, then ``$u_{c,1}(p_{1_{in}}(\vec{s}_1), \V(\vec{s}))$''
abbreviates the term $u_{c,1}(p_{1_{in}}(\vec{s}_1), x_1, \ldots, x_n)$,
etc.
\end{itemize}
If  the resulting TRS is terminating, then the original logic program
terminates for any query with ground  terms on all input positions of the
predicates, cf.\ \cite{Ohlebusch01}.
However, the converse does not hold.

\begin{example}
\label{old transformation}
{\sl
For the program of \rEx{ex_not_sm},  the transformation results in the
following TRS $\R$.
\begin{align*}
\urule{{\sf p}_{in}(X)}{{\sf p}_{out}(X)} \\
\urule{{\sf p}_{in}({\sf f}(X))}{{\sf u}_1({\sf p}_{in}({\sf f}(X)),X)} \\
\urule{{\sf u}_1({\sf p}_{out}({\sf f}(Z)),X)}{{\sf u}_2({\sf p}_{in}(Z),X,Z)}\\
\urule{{\sf u}_2({\sf p}_{out}({\sf g}(Y)),X,Z)}{{\sf p}_{out}({\sf g}(Y))}
\end{align*}
The original logic program is terminating for any
query ${\sf p}(t_1,t_2)$ where $t_1$ is a ground term. However, the above TRS is not terminating:
\[{\sf p}_{in}({\sf f}(X)) \to_\R {\sf u}_1({\sf p}_{in}({\sf f}(X)),X) \to_\R {\sf u}_1({\sf
  u}_1({\sf p}_{in}({\sf f}(X)),X),X) \to_\R \ldots\]
In the logic program, after resolving with the second clause,
one obtains a query starting with ${\sf p}({\sf f}(\ldots), {\sf f}(\ldots))$.
Since  ${\sf p}$'s output argument  ${\sf f}(\ldots)$ is
already partly instantiated, the second clause  cannot be
applied again for this atom. However, this information is neglected in the translated TRS.
Here,
one only regards the input
argument of ${\sf p}$ in order to determine whether a rule can be applied.
Note that many
current tools for termination proofs of logic programs like
{\sf cTI}~\cite{Mesnard:Bagnara},
{\sf TALP}~\cite{TALP},
{\sf TermiLog}~\cite{Lindenstrauss:Sagiv:Serebrenik}, and
\mbox{{\sf
TerminWeb}}~\cite{Codish:Taboch} fail on
\rEx{ex_not_sm}.
}
\end{example}

So this example already illustrates a drawback of the classical
transformation: there are several terminating well-moded logic programs which
are transformed into
non-terminating TRSs. In such cases, one fails in
proving the termination of the logic program.
Even worse, most of the existing transformations are not applicable for logic
programs that are not well moded.\footnote{\rEx{append} is neither well
moded nor simply well typed nor safely typed (using the types ``\emph{Any}''
and ``\emph{Ground}'') as required by the transformations of
\cite{Aguzzi:Modigliani,Arts:Zantema:95,Chtourou:Rusinowitch,Ganzinger:Waldmann,Rao,Marchiori:ALP94,Marchiori:AMAST96,Raamsdonk}.}

\noindent\begin{example}\label{append}{\sl
We modify \rEx{ex_not_sm} by replacing ${\sf g}(Y)$ with ${\sf g}(W)$ in the
body of the second clause:
\[\begin{array}{lll}
{\sf p}(X, X). & &\\
{\sf p}({\sf f}(X), {\sf g}(Y)) &\from& {\sf p}({\sf f}(X), {\sf f}(Z)),
{\sf p}(Z,{\sf g}(W)).
\end{array}\]
Still, all queries ${\sf p}(t_1,t_2)$ terminate if $t_1$ is ground.
But this program is not well moded,
as the second clause violates Condition (a): $\V_{out}({\sf p}({\sf f}(X), {\sf
g}(Y))) = \{ Y \} \not\subseteq \V_{in}({\sf p}({\sf f}(X), {\sf g}(Y))) \cup
\V_{out}({\sf p}({\sf f}(X), {\sf f}(Z))) \cup \V_{out}({\sf p}(Z,{\sf g}(W)))
= \{ X, Z, W \}$.
Transforming the program as before yields a TRS with the rule
${\sf u}_2({\sf p}_{out}({\sf g}(W)),X, Z) \to {\sf p}_{out}({\sf g}(Y))$.
So non-well-moded programs result in rules with
variables like $Y$  in the
right- but not in the left-hand side.   Such rules are usually forbidden in
term rewriting and they do not terminate, since $Y$~may be
instantiated arbitrarily.
}
\end{example}

\begin{example}\label{real_append}{\sl
A natural non-well-moded example is the
{\sf append}-program with the clauses
\[\begin{array}{lll}
{\sf append}([\,], \mathit{M}, \mathit{M}).\\
{\sf append}([X|\mathit{L}], \mathit{M}, [X|\mathit{N}]) & \from & {\sf append}(\mathit{L},\mathit{M},\mathit{N}).
\end{array}\]
and the moding $m({\sf append},1) = \mIn$ and $m({\sf append},2) = m({\sf append},3) = \mOut$,
i.e., one only considers {\sf append}'s first argument as input.
Due to the first clause ${\sf append}([\,], \mathit{M}, \mathit{M})$, this
program is not well moded although all queries of the form ${\sf append}(t_1,t_2,t_3)$
are terminating if $t_1$ is ground.
}
\end{example}

\subsection{Term Rewriting Techniques for Termination of Logic Programs}\label{Term
Rewriting Techniques for Termination of LPs}

Recently, several authors tackled
the problem of
applying termination techniques from term rewriting for
(possibly non-well-moded) logic
programs. A framework for integrating
orders from term rewriting
into \emph{direct} termination
approaches for  logic programs
is discussed in \cite{DeSchreye:Serebrenik:Kowalski}.\footnote{But
 in   contrast
to   \cite{DeSchreye:Serebrenik:Kowalski}, 
transformational approaches like the one presented in this paper
can also apply
more recent
termination   techniques
from term rewriting
for termination of logic
programs
(e.g., refined variants of the
\emph{dependency pair} method like
\cite{LPAR04,JAR06,HM05}, \emph{semantic labelling} \cite{Z95},
\emph{matchbounds} \cite{GHW}, etc.).}
However, the automation of this framework
is non-trivial in general.
As an instance of this framework, the automatic
application of polynomial interpretations (well-known in term rewriting) to
termination analysis of logic programs is investigated in
\cite{Nguyen:DeSchreye,Nguyen:DeSchreye06}. Moreover, \cite{LOPSTR07}
extend this work further by also adapting a basic version of the
\emph{dependency pair approach} \cite{AG00} from TRSs to the logic programming
setting. This provides additional evidence that techniques developed for term rewriting
can successfully be applied to termination analysis of logic programs.

Instead of integrating each termination technique from term
rewriting separately, in the current paper we want to make all these techniques available at once.
Therefore,
unlike~\cite{DeSchreye:Serebrenik:Kowalski,Nguyen:DeSchreye,Nguyen:DeSchreye06,LOPSTR07},
we choose a transformational approach.
Our goal is a method which
\begin{itemize}
\item[(A)] handles programs like \rEx{ex_not_sm} where classical transformations
like the one of \rSec{The Classical Transformation} fail,
\item[(B)] handles non-well-moded programs like \rEx{append}
where most current transformational techniques
are not even applicable,
\item[(C)] allows the successful \emph{automated} application of powerful
techniques
from
rewriting for logic programs like \rEx{ex_not_sm} and \ref{append}
where current tools based on
direct
approa\-ches fail. For larger and more realistic examples we refer to
the experiments in \rSec{sec:experiments}.
\end{itemize}

\subsection{Structure of the Paper}\label{Structure of the Paper}

After presenting required preliminaries in
\rSec{sec_preliminaries},
in \rSec{sec_translation}
we modify the transformation from logic programs to TRSs
to achieve (A) and (B). So restrictions like
well-modedness, simple well-typedness, or safe typedness are no
longer required.
Our new transformation results in TRSs where the notion of
``rewriting'' has to be slightly modified: we regard a restricted form of
infinitary rewriting,
called {\em infinitary
constructor rewriting}. The reason is that
logic
programs use
\emph{unification}, whereas TRSs use
\emph{matching}.

To illustrate this difference, consider the logic program  $\Fp(\Fs(X)) \from
\Fp(X)$ which
does not terminate for the query $\Fp(X)$:
Unifying the query $\Fp(X)$ with the 
head of the variable-renamed rule $\Fp(\Fs(X_1)) \from \Fp(X_1)$ 
yields the new query $\Fp(X_1)$. Afterwards, unifying the new query
$\Fp(X_1)$ with the head of the variable-renamed rule $\Fp(\Fs(X_2)) \from \Fp(X_2)$ 
yields the new query $\Fp(X_2)$, etc.

In contrast, the related TRS
$\Fp(\Fs(X)) \to \Fp(X)$
terminates for all finite terms.
When applying the rule to some subterm $t$, 
one has to
\emph{match} the left-hand side $\ell$ of the rule against $t$.
For example,
when applying the rule to the term  $\Fp(\Fs(\Fs(Y)))$, one 
would use the matcher that instantiates $X$ with $\Fs(Y)$. Thus,
$\Fp(\Fs(\Fs(Y)))$ would be
rewritten to  the instantiated
right-hand side $\Fp(\Fs(Y))$. Hence, one occurrence of the symbol $\Fs$ is
eliminated in every rewrite step. This implies that rewriting will always
terminate.
So in contrast to unification (where one searches for a substitution $\theta$
with $t \theta = \ell \theta$), here we only use matching (i.e., we search for
a substitution $\theta$ with $t = \ell \theta$, but we do not instantiate the
term $t$ that is being rewritten).

However, the infinite derivation of the logic
program above corresponds to an infinite reduction of the TRS above
 with the \emph{infinite} term
$\Fp(\Fs(\Fs(\ldots)))$ containing infinitely many nested {\sf s}-symbols.
So to simulate unification by matching, we have to regard TRSs
where the variables in rewrite rules may be instantiated by infinite
constructor terms.
It turns out that this form of rewriting also allows us to analyze
the termination behavior of logic programming with infinite terms, i.e., of logic
programming without occur check.

\rSec{sec:termination} shows
that the existing termination techniques for TRSs can easily be
adapted
in order to prove termination of infinitary
constructor rewriting. For a full automation of the approach,
one has to transform the set of queries that has to be analyzed for the logic
program to a corresponding set of terms that has to be analyzed for the
transformed TRS. This set of terms is characterized by a so-called
\emph{argument filter} and we present heuristics to find a suitable
argument filter in
\rSec{sec:refining}.
\rSec{sec:previous} gives a formal proof that our new transformation and our
approach to automated termination analysis are strictly
more powerful than the classical ones of \rSec{The Classical Transformation}.
We present and discuss an extensive experimental evaluation of our results in
\rSec{sec:experiments}
which shows that Goal (C) is achieved as well.  In other words,
the implementation of our approach can indeed compete with modern tools
for direct termination analysis of logic programs and  it succeeds for many
programs where these tools fail. Finally, we conclude in
\rSec{sec:conclusion}.

Preliminary versions of parts of this paper appeared in
\cite{LOPSTR06}.
However, the present article extends \cite{LOPSTR06}
considerably (in particular, by the results of the Sections
\ref{sec:refining} and \ref{sec:previous}).
Section \ref{sec:previous} contains a new formal comparison with the existing
classical transformational approach to termination of logic programs and
\emph{proves} formally that our approach is more powerful.
The new contributions of
\rSec{sec:refining}
improve the power of our
method substantially as can be seen in
our new experiments in
 \rSec{sec:experiments}.
Moreover, in contrast to \cite{LOPSTR06}, this
article contains the full proofs of all results and a discussion on the
limitations of our approach in \rSec{sec:limitations}.

\section{Preliminaries on Logic Programming and Rewriting}\label{sec_preliminaries}

We start with introducing the basics on (possibly infinite) terms and
atoms. Then we present the required notions on logic programming and on
term rewriting in Sections \ref{Logic Programming} and \ref{Term Rewriting}, respectively.

A \emph{signature} is a pair $(\Sigma, \Delta)$ where
$\Sigma$ and $\Delta$
are finite sets of
function  and predicate symbols.
Each $f \in \Sigma \cup \Delta$ has an {\em arity} $n \geq 0$ and we often write
$f/n$ instead of $f$. We always assume
that $\Sigma$ contains at least one
constant $f/0$.
This is not a restriction, since
enriching the signature by a fresh constant would not
change the termination behavior.

\begin{definition}[(Infinite Terms and Atoms)]
\label{Finite, Rational, and Infinite Terms, Atom}
A
\emph{term} over $\Sigma$
is a tree where every leaf node
is labelled with a variable $X \in\V$ or with $f/0 \in \Sigma$
and every inner node with $n$ children ($n > 0$) is labelled with some $f/n \in \Sigma$.
We write $f(t_1,\ldots,t_n)$ for the term with root $f$ and direct subtrees
$t_1,\ldots,t_n$.
A term $t$ is called \emph{finite} if all paths in the tree $t$
are finite, otherwise it is  \emph{infinite}.
A term is \emph{rational} if it only contains finitely many different subterms. The sets
of all finite terms, all rational terms, and all (possibly infinite) terms
over $\Sigma$ are denoted by
$\T(\Sigma,\V)$,  $\T^{rat}(\Sigma, \V)$, and $\T^\infty(\Sigma,\V)$,
respectively.
If $\vec{t}$ is the sequence   $t_1, \ldots, t_n$, then
$\vec{t} \in \vec{\T}^\infty(\Sigma,\V)$ means that
$t_i \in \T^\infty(\Sigma,\V)$ for all $i$.
$\vec{\T}(\Sigma,\V)$ is defined analogously. For a term $t$, let $\V(t)$ be
the set of all variables occurring in $t$.
A \emph{position} $\mathit{pos} \in \N^*$ in a (possibly infinite) term $t$
addresses a subterm
$t|_{\mathit{pos}}$ of $t$.
We denote the empty word (and thereby the top 
position) by $\varepsilon$.
 The term
$t[s]_{\mathit{pos}}$ results from replacing the subterm $t|_{\mathit{pos}}$ at position ${\mathit{pos}}$ in $t$ by
the term $s$.
So for ${\mathit{pos}} = \varepsilon$ we have $t|_\varepsilon = t$ and $t[s]_\varepsilon =
s$. 
Otherwise ${\mathit{pos}} = i\,\mathit{pos}'$ for some $i \in \N$ and
$t = f(t_1,\ldots,t_n)$. Then we have
$t|_{\mathit{pos}} = t|_{i\,\mathit{pos}'} = t_i|_{\mathit{pos}'}$ and
$t[s]_{\mathit{pos}} = t[s]_{i\,\mathit{pos}'} = f(t_1,\ldots,t_i[s]_{\mathit{pos}'}\ldots,t_n)$.

An \emph{atom} over $(\Sigma,\Delta)$
is a tree  $p(t_1,\ldots,t_n)$, where $p/n \in \Delta$  and
$t_1,\ldots,t_n \in \T^\infty(\Sigma,\V)$.  $\A^\infty(\Sigma,\Delta,\V)$ is the set of  atoms
and $\A^{rat}(\Sigma,\Delta,\V)$ (and $\A(\Sigma,\Delta,\V)$, resp.)
 are the atoms
$p(t_1,\ldots,t_n)$
where $t_i \in \T^{rat}(\Sigma,\V)$
(and $t_i \in \T(\Sigma,\V)$, resp.) for all $i$.
We write $\A(\Sigma,\Delta)$ and $\T(\Sigma)$ instead of $\A(\Sigma,\Delta,\varnothing)$ and $\T(\Sigma,\varnothing)$.
\end{definition}

\subsection{Logic Programming}\label{Logic Programming}

A {\em   clause}
$\, c $
is a formula
$H \from B_1,\ldots, B_k$ with $k\geq0$ and
$H, B_i \in \A(\Sigma,\Delta,\V)$. $H$ is $c$'s  {\em head} and
$B_1,\ldots,B_k$ is $c$'s {\em body}.
A finite set of clauses $\P$ is a {\em definite logic program}.
A clause with empty body is a {\em fact} and a clause with empty head is a {\em query}.
We usually omit ``\from'' in
queries and just write ``$B_1,\ldots,B_k$''.
The empty query is denoted $\Box$. In queries, we also admit rational instead of finite atoms
$B_1,\ldots,B_k$.

 Since we are also interested in
logic programming without occur check
we consider
infinite \emph{substitutions}
$\theta: \V \rightarrow \T^\infty(\Sigma,\V)$.
 Here, we allow $\theta(X) \neq X$ for infinitely
many $X\in \V$.
Instead of $\theta(X)$
we often write $X\theta$. If $\theta$ is
a variable renaming (i.e., a one-to-one correspondence on
$\V$), then $t\theta$ is a {\em variant} of $t$, where
$t$ can be any expression (e.g., a term, atom, clause, etc.).
We write $\theta\sigma$ to denote that
the application of $\theta$ is followed by the application of
$\sigma$.

A substitution $\theta$ is a \emph{unifier} of two terms $s$ and $t$ if and only if
$s\theta = t\theta$. We call $\theta$ the \emph{most general unifier (mgu)} of $s$ and $t$
if and only if $\theta$ is a unifier of $s$ and $t$ and for all unifiers $\sigma$ of
$s$ and $t$ there is a substitution $\mu$ such that $\sigma = \theta\mu$.

We  briefly present the procedural semantics of logic programs based on
SLD-resolution
using the
left-to-right selection rule implemented by
most
\textsf{Prolog}  systems.
 More details on logic
programming
can be found in~\cite{Apt:Book}, for example.

\begin{definition}[(Derivation, Termination)]\label{termination}
Let $Q$ be a query $A_1,\ldots,A_m$, let $c$ be
a clause $H\from B_1,\ldots,B_k$.
Then $Q'$ is a {\em resolvent} of $Q$ and $c$ using $\theta$ (denoted
$Q \vdash_{c,\theta} Q'$)
if
$\theta$ is the mgu\footnote{
Note that for finite sets of \emph{rational} atoms or terms,
unification is decidable,
the mgu is unique modulo renaming, and it is a
substitution with \emph{rational} terms~\cite{Huet76}.} of $A_1$ and $H$,
and $Q' = (B_1,\ldots,B_k,A_{2},\ldots,A_m)\theta$.

A {\em derivation\/} of a program $\P$ and $Q$
is a possibly infinite sequence $Q_0, Q_1, \ldots$ of queries with $Q_0
= Q$ where for all $i$, we have
$Q_i \vdash_{c_{i+1},\theta_{i+1}} Q_{i+1}$ for some
substitution $\theta_{i+1}$ and some fresh variant $c_{i+1}$ of a
clause
of $\P$. For a derivation $Q_0, \ldots, Q_n$ as above,
we also write $Q_0 \vdash^n_{\P,\theta_1\ldots\theta_n} Q_n$
or $Q_0 \vdash^n_{\P} Q_n$, and we also write
$Q_i \vdash_\P Q_{i+1}$ for $Q_i \vdash_{c_{i+1},\theta_{i+1}} Q_{i+1}$.
The query
$Q$ {\em terminates} for  $\P$ if all
derivations of $\P$ and $Q$ are  finite.
\end{definition}

Our  notion of derivation  coincides with logic programming without
an occur check~\cite{Colmerauer82}
as  implemented in recent \textsf{Prolog} systems such as \textsf{SICStus} or
\textsf{SWI}.
Since we consider only
definite logic programs, any program which is terminating without occur
check is also terminating with occur check, but not vice versa. So
if our approach detects ``termination'', then the program is indeed
terminating, no matter whether one uses logic programming with or without
occur check. In other words,
our
approach is sound for both kinds of logic programming, whereas most other approaches
only consider logic programming with occur check.

\begin{example}{\sl
Regard a program $\P$ with the clauses
${\sf p}(X) \from {\sf equal}(X,{\sf s}(X)), {\sf p}(X)$ and ${\sf equal}(X,X)$.
We obtain ${\sf p}(X) \vdash^2_{\P} {\sf p}({\sf s}({\sf s}(\ldots)))
\vdash^2_{\P} {\sf p}({\sf s}({\sf s}(\ldots)))\vdash^2_{\P} \ldots$, where ${\sf s}({\sf
s}(\ldots))$ is the term containing infinitely many nested {\sf s}-symbols.
So the finite
query ${\sf p}(X)$ leads to a derivation with infinite (rational) queries.
While  ${\sf p}(X)$ is not terminating according to
\rDef{termination}, it would be terminating if one uses logic programming with occur check.
Indeed, tools like {\sf cTI}
\cite{Mesnard:Bagnara} and {\sf TerminWeb}~\cite{Codish:Taboch}
report that such queries are
``terminating''. So in contrast to our technique, such tools are in general not sound for
logic programming without occur check, although this form of logic
programming is typically used in practice.
}
\end{example}

\subsection{Term Rewriting}\label{Term Rewriting}

Now we define TRSs
and introduce the notion of \emph{infinitary constructor rewriting}. For
further details on term rewriting we refer to \cite{BN98}.

\begin{definition}[(Infinitary Constructor Rewriting)]
A \emph{TRS} $\R$ is a finite set
of rules $\el\to r$ with $\el,r \in \T(\Sigma,\V)$ and $\el \notin
\V$.\footnote{In standard term rewriting, one usually requires $\V(r)
\subseteq \V(\el)$ for all rules $\el \to r$. The reason is that otherwise the
standard rewrite relation is
never well founded. However, the \emph{infinitary constructor rewrite relation} defined
here can be well founded even if $\V(r)
\not\subseteq \V(\el)$.}
We divide the signature into  \emph{defined}
symbols $\Sigma_D = \{ f \mid \el \to r \in \R, \rt(\el) = f \}$
and  \emph{constructors} $\Sigma_C = \Sigma \setminus \Sigma_D$.
$\R$'s \emph{infinitary constructor rewrite relation}
is denoted  $\to_\R$: for $s,t \in \T^\infty(\Sigma,\V)$
we have $s \to_\R t$ if there is a rule $\el \to r$, a position ${\mathit{pos}}$ and a
substitution
 $\sigma : \V \to \T^\infty(\Sigma_C,\V)$
with
$s|_{\mathit{pos}} = \el\sigma$ and $t = s[r\sigma]_{\mathit{pos}}$.
Let $\to_\R^n$, $\to_\R^{\geq n}$, $\to^*_\R$ denote rewrite sequences
of $n$ steps, of at least $n$ steps, and of arbitrary many steps, respectively (where $n
\geq 0$).
A term $t$ is \emph{terminating} for
$\R$ if there is no infinite sequence of the form $t \to_\R t_1 \to_\R t_2 \to_\R \ldots$
A TRS $\R$ is \emph{terminating} if all terms are terminating for
$\R$.
\end{definition}

The above definition of $\to_\R$ differs from the usual rewrite relation in
two aspects:
\begin{itemize}
\item[(i)\phantom{i}] We only permit
instantiations of rule variables by constructor terms.
\item[(ii)]
We use
substitutions with possibly non-rational infinite terms.
\end{itemize}
In
\rEx{ex_not_sm_we} and \ref{Modification} in the next section, we will
motivate these modifications
and show that
there are TRSs which terminate w.r.t.\ the
usual rewrite relation, but are  non-terminating w.r.t.\ infinitary
constructor rewriting and vice versa.

\section{Transforming Logic Programs into Term Rewrite Systems}
\label{sec_translation}

Now we modify the transformation of logic programs into TRSs from
\rSec{sec:Introduction}  to make it applicable for
\emph{arbitrary} (possibly non-well-moded) programs as well.
We present the new transformation in \rSec{The Improved Transformation}
and prove its soundness in \rSec{sec Soundness of the Transformation}.
Later in \rSec{sec:previous} we will formally
prove that the classical transformation is  strictly subsumed by our new one.

\subsection{The Improved Transformation}
\label{The Improved Transformation}

Instead of separating between input and output positions of a predicate
$p/n$, now we keep
\emph{all} arguments both for $p_{in}$ and $p_{out}$ (i.e.,  $p_{in}$
and $p_{out}$ have arity $n$).

\begin{definition}[(Transformation)]
\label{unmodedTranslation}
A logic program $\P$ over $(\Sigma,\Delta)$
is transformed into the following TRS $\R_\P$ over $\Sigma_\P = \Sigma \cup
\{p_{in}/n, \,
p_{out}/n \mid p/n\in \Delta\} \cup
\{u_{c,i} \mid c \in \P, 1\leq i\leq k,$ where $k$ is the number of atoms
in the body of $c\,\}$.
\begin{itemize}
\item[$\bullet$] For each fact
$p(\vec{s})$ in $\P$, the TRS $\R_\P$ contains the rule
$p_{in}(\vec{s}) \to p_{out}(\vec{s})$.
\item[$\bullet$] For each clause $c$ of the form
$p(\vec{s}) \from p_1(\vec{s}_1), \ldots, p_k(\vec{s}_k)$ in $\P$,
 $\R_\P$ contains:
\begin{align*}
\urule{p_{in}(\vec{s}) }{ u_{c,1}(p_{1_{in}}(\vec{s}_1), \V(\vec{s}))}\\
\urule{u_{c,1}(p_{1_{out}}(\vec{s}_1), \V(\vec{s})) }{
u_{c,2}(p_{2_{in}}(\vec{s}_2), \V(\vec{s}) \cup \V(\vec{s}_1))}\\
& \ldots \\
\urule{u_{c,k}(p_{k_{out}}(\vec{s}_k), \V(\vec{s}) \cup \V(\vec{s}_1) \cup \ldots
\cup \V(\vec{s}_{k-1})) }{ p_{out}(\vec{s})}
\end{align*}
\end{itemize}
\end{definition}

The following two examples
motivate the need for infinitary constructor rewriting
in \rDef{unmodedTranslation},
i.e., they motivate Modifications (i) and (ii) in \rSec{Term Rewriting}.

\begin{example}\label{ex_not_sm_we}
{\sl For the logic program of \rEx{ex_not_sm}, the transformation of
\rDef{unmodedTranslation}
yields the following
TRS.
\begin{align}
\lrule{32-1}{
{\sf p}_{in}(X,X) }{ {\sf p}_{out}(X,X)}\\
\lrule{32-2}{
{\sf p}_{in}({\sf f}(X), {\sf g}(Y)) }{ {\sf u}_1({\sf p}_{in}({\sf f}(X),
{\sf f}(Z)),X,Y)} \\
\lrule{32-3}{
{\sf u}_1({\sf p}_{out}({\sf f}(X), {\sf f}(Z)),X,Y) }{ {\sf u}_2({\sf p}_{in}(Z,{\sf g}(Y)),X,Y,Z)}\\
\lrule{32-4}{
{\sf u}_2({\sf p}_{out}(Z,{\sf g}(Y)),X,Y,Z) }{ {\sf p}_{out}({\sf f}(X),{\sf g}(Y))}
\end{align}
This example shows why rules of TRSs may only be instantiated with constructor
terms (Modification (i)). The reason is that local variables
like $Z$ (i.e., variables occurring in the body but not in the head of a clause)
give rise to rules $\el \to r$ where $\V(r) \not\subseteq
\V(\el)$ (cf.\ Rule (\ref{32-2})).
Such rules are never terminating in standard term rewriting.
However, in our setting one may only instantiate $Z$ with
constructor terms. So in contrast to the old transformation in \rEx{old transformation},
now all terms ${\sf p}_{in}(t_1,t_2)$ terminate for the TRS
if $t_1$ is finite, since now the second argument of ${\sf p}_{in}$ prevents an
infinite application of Rule (\ref{32-2}). Indeed, constructor rewriting correctly
simulates the behavior of logic programs,
since the variables in a logic program are only instantiated by ``constructor
terms''.

For the non-well-moded program of \rEx{append}, one obtains a similar TRS
where ${\sf g}(Y)$ is replaced
by ${\sf g}(W)$
in the right-hand side of
Rule (\ref{32-3}) and the left-hand side of Rule (\ref{32-4}). Again, all
terms
${\sf p}_{in}(t_1,t_2)$ are terminating for this TRS provided that $t_1$ is finite.
 Thus, we
can now handle programs where the classical transformation  of
\cite{Arts:Zantema:95,Chtourou:Rusinowitch,Ganzinger:Waldmann,Ohlebusch01} failed,
cf.\ Goals (A) and (B) in \rSec{Term
Rewriting Techniques for Termination of LPs}.
}
\end{example}

Derivations in logic programming use \emph{unification}, while
rewriting is defined by \emph{matching}. \rEx{Modification} shows that
to simulate unification by matching, we have to consider
substitutions with infinite and
even non-rational terms (Modification
(ii)).

\begin{example}\label{Modification}
{\sl Let
$\P$ be
\mbox{\small ${\sf ordered}({\sf cons}(X,{\sf cons}({\sf s}(X),\mathit{XS}))) \from {\sf ordered}({\sf
cons}({\sf s}(X),\mathit{XS}))$}.
If one only considers
rewriting with finite or  rational
terms, then the transformed TRS $\R_\P$ is terminating.  However,
the query
${\sf ordered}(\mathit{YS})$ is not terminating for
$\P$.  Thus, to obtain a sound approach, $\R_\P$
must
also be non-terminating.  Indeed, the term $t = 
{\sf ordered}_{in}(\Fcons(X,\;\Fcons(\Fs(X),\; \Fcons(\Fs^2(X),\;\ldots))))$
is non-terminating  with $\R_\P$'s rule
${\sf ordered}_{in}({\sf cons}(X,{\sf cons}({\sf s}(X),\mathit{XS})))\to {\sf u}({\sf ordered}_{in}({\sf
cons}({\sf s}(X),\mathit{XS})),X,\linebreak \mathit{XS})$. The
non-rational term $t$ corresponds to the
infinite derivation with the query ${\sf ordered}(\mathit{YS})$.
}
\end{example}

\subsection{Soundness of the Transformation}\label{sec Soundness of the Transformation}

We first show an auxiliary lemma that
is needed
to prove the soundness of the transformation. It relates derivations
with the logic program $\P$ to rewrite sequences with
the TRS $\R_\P$.

\begin{lemma}[(Connecting $\P$ and $\R_\P$)]
\label{correctTransformation}
Let
$\P$ be a program,
let $\vec{t}$ be terms from $\T^{rat}(\Sigma,\V)$, let
$p(\vec{t})  \vdash_{\P,\sigma}^n Q$.
If $Q = \Box$, then $p_{in}(\vec{t})\sigma \to^{\geq n}_{\R_\P}
p_{out}(\vec{t})\sigma$.
Otherwise,  if $Q$ is ``$q(\vec{v}),\ldots$'', then $p_{in}(\vec{t})\sigma
\to^{\geq n}_{\R_\P} r$ for a term $r$ containing the subterm
$q_{in}(\vec{v})$.
\end{lemma}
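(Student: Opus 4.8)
The plan is to prove \rLem{correctTransformation} by induction on the derivation length $n$, carefully tracking how the mgu's composed along the derivation correspond to the constructor substitutions used in the rewrite steps.

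\textbf{Base case.} For $n = 0$ we have $Q = p(\vec t)$ and $\sigma = \mathit{id}$ (the empty composition), so the ``otherwise'' case applies with $q(\vec v) = p(\vec t)$ and $r = p_{in}(\vec t)$, which trivially contains $p_{in}(\vec t)$; here $\to_{\R_\P}^{\geq 0}$ is reflexivity. (If one prefers to also cover $Q = \Box$ in the base case it cannot occur for $n=0$ since the query $p(\vec t)$ is nonempty.)

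\textbf{Induction step.} Suppose $p(\vec t) \vdash_{\P,\sigma}^n Q$ with $n \geq 1$. Split the derivation as $p(\vec t) \vdash_{c,\theta} Q_1 \vdash_{\P,\rho}^{n-1} Q$ where $\sigma = \theta\rho$. The first resolution step uses a fresh variant $c$ of a clause; there are two shapes. If $c$ is a fact $p(\vec s)$, then $\theta = \mathrm{mgu}(p(\vec t), p(\vec s))$ and $Q_1 = \Box$, forcing $n = 1$, $Q = \Box$, $\sigma = \theta$; then $p_{in}(\vec t)\theta = p_{in}(\vec s)\theta \to_{\R_\P} p_{out}(\vec s)\theta = p_{out}(\vec t)\theta$ using the fact rule $p_{in}(\vec s)\to p_{out}(\vec s)$ — note $\theta$ restricted to $\V(\vec s)$ is a constructor substitution because $\vec t$ contains no defined symbols (it is built over $\Sigma$, and the $p_{in}, p_{out}, u_{c,i}$ are the only defined symbols of $\R_\P$). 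If $c$ is $p(\vec s) \from p_1(\vec s_1),\dots,p_k(\vec s_k)$, then $\theta = \mathrm{mgu}(p(\vec t),p(\vec s))$ and $Q_1 = (p_1(\vec s_1),\dots,p_k(\vec s_k))\theta$. The heart of the argument is then an inner induction on the \emph{number $j$ of body atoms already fully resolved away} in the continuation $Q_1 \vdash^{n-1}_{\P,\rho} Q$: we track, for $0 \le j \le k$, the prefix of the remaining derivation that solves $p_1(\vec s_1)\theta,\dots,p_j(\vec s_j)\theta$ completely (consuming, say, $n_j$ steps with combined substitution $\rho_j$), and show
\[
p_{in}(\vec t)\sigma \;\to_{\R_\P}\; u_{c,1}(p_{1_{in}}(\vec s_1),\V(\vec s))\,\theta\rho \;\to^{\geq n_j}_{\R_\P}\; u_{c,j+1}(p_{(j+1)_{in}}(\vec s_{j+1}),\ldots)\,\theta\rho,
\]
so that the $u_{c,j}$-to-$u_{c,j+1}$ rule fires exactly when body atom $j$ has been solved. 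Each single step from $u_{c,j}(p_{j_{out}}(\vec s_j),\ldots)$ to $u_{c,j+1}(p_{(j+1)_{in}}(\vec s_{j+1}),\ldots)$ uses the corresponding rule of \rDef{unmodedTranslation}, and the variable arguments $\V(\vec s)\cup\V(\vec s_1)\cup\dots\cup\V(\vec s_{j})$ carry along precisely the bindings needed so that the next atom $p_{(j+1)_{in}}(\vec s_{j+1})$ is instantiated the same way the logic program instantiates it. For the transition ``body atom $j$ solved'' we invoke the \emph{outer} induction hypothesis on the sub-derivation $p_j(\vec s_j)\theta\cdots \vdash^{n_j - n_{j-1}}_\P \Box$, which gives $p_{j_{in}}(\vec s_j)(\theta\rho) \to^{\geq\,\cdot}_{\R_\P} p_{j_{out}}(\vec s_j)(\theta\rho)$; applying this rewrite inside the $u$-context and then the $u_{c,j}\to u_{c,j+1}$ rule advances the counter. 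When the continuation does not solve all $k$ atoms, i.e. $Q$ still begins with some $q(\vec v)$ that is (an instance of) an unsolved atom $p_{j}(\vec s_j)\theta$ (or a subgoal thereof), we instead apply the outer IH in its ``otherwise'' form to the partial sub-derivation of $p_j(\vec s_j)\theta$, obtaining a term $r$ with subterm $q_{in}(\vec v)$; wrapping $r$ in the surrounding $u_{c,j}(\cdot,\ldots)$ context yields the required term.

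\textbf{Main obstacle.} The delicate point is the bookkeeping of substitutions: in the logic program the mgu $\theta$ of the head is applied to the whole query and then later mgu's are composed on top, whereas in $\R_\P$ the rule variables get instantiated one step at a time by a \emph{matching} substitution. I would handle this by a bridging claim: if $p_j(\vec s_j)\theta_0 \vdash^*_{\P,\tau}\Box$ is the sub-derivation solving body atom $j$, and $\pi$ is the total substitution accumulated by the \emph{earlier} body atoms, then the relevant $u$-term reached so far is exactly $u_{c,j}(p_{j_{out}}(\vec s_j),\dots)\,\theta_0\pi$ — one must check that the accumulated variable slots record $\theta_0\pi$ on every variable of $\V(\vec s)\cup\dots\cup\V(\vec s_{j-1})$, which follows because $\mathrm{mgu}$'s compose and because resolving later atoms only further instantiates variables, never un-does a binding. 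A second subtlety is ensuring every substitution that instantiates a \emph{rule} variable ranges over $\T^\infty(\Sigma_C,\V)$: this holds because all these substitutions are (restrictions/compositions of) mgu's of atoms built from $\Sigma$-terms, and $\Sigma\subseteq\Sigma_C$ in $\R_\P$ (the only defined symbols are $p_{in},p_{out},u_{c,i}$, which never occur in $\vec s,\vec s_i,\vec t$). The ``$\geq n$'' rather than ``$=n$'' in the statement gives us exactly the slack we need, since each resolution step may cost one $p_{in}/p_{out}$-rewrite plus bookkeeping $u$-rewrites, so the number of rewrite steps only grows.
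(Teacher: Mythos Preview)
Your proposal is correct and follows essentially the same approach as the paper's proof: induction on $n$, splitting on whether the first clause used is a fact or a rule, then for a rule decomposing the remaining derivation into sub-derivations that solve the body atoms $p_1(\vec s_1),\dots,p_{i-1}(\vec s_{i-1})$ completely and partially process $p_i(\vec s_i)$, applying the induction hypothesis to each and chaining the resulting rewrites through the $u_{c,j}$-rules with the full accumulated substitution $\sigma$ applied throughout. Your ``inner induction on $j$'' is just an explicit organization of what the paper writes out as a single displayed chain, and your observations about constructor substitutions and the $\geq n$ slack match the paper's treatment.
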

\begin{proof}
Let $p(\vec{t}) = Q_0 \vdash_{c_{1},\theta_{1}}
\ldots \vdash_{c_{n},\theta_{n}} Q_n = Q$ with $\sigma =
\theta_1\ldots\theta_n$. We use induction
on
$n$.
The  base case $n = 0$ is trivial, since $Q = p(\vec{t})$
and $p_{in}(\vec{t}) \to^0_{\R_\P} p_{in}(\vec{t})$.

Now let $n\geq 1$. We first regard the case $Q_1 = \Box$ and $n = 1$.
Then $c_1$ is a fact $p(\vec{s})$ and
$\theta_1$ is the mgu of
$p(\vec{t})$ and $p(\vec{s})$.
Note that such mgu's instantiate all variables with constructor terms (as
symbols of $\Sigma$ are constructors of $\R_\P$).
We obtain $p_{in}(\vec{t})\theta_1 =
p_{in}(\vec{s})\theta_1 \to_{\R_\P} p_{out}(\vec{s})\theta_1 =
p_{out}(\vec{t})\theta_1$ where $\sigma = \theta_1$.

Finally, let $Q_1 \neq \Box$. Thus,
 $c_1$ is $p(\vec{s}) \from p_1(\vec{s_1}), \ldots,
p_k(\vec{s_k})$ and
$Q_1$ is $p_1(\vec{s_1})\theta_1, \ldots,$ $p_k(\vec{s_k})\theta_1$
where $\theta_1$ is the mgu
of $p(\vec{t})$ and $p(\vec{s})$.
There is an $i$ with $1 \leq i \leq k$ such that for all $j$ with $1 \leq j \leq i-1$
we have
$p_j(\vec{s}_j)\sigma_0 \ldots \sigma_{j-1} \vdash^{n_j}_{\P,\sigma_j} \Box$.
Moreover, if $Q=\Box$ then we can choose $i = k$ and
$p_i(\vec{s}_i)\sigma_0 \ldots \sigma_{i-1} \vdash^{n_i}_{\P,\sigma_i} \Box$.
Otherwise, 
if $Q$ is ``$q(\vec{v}),\ldots$'', then we can choose $i$ such that
$p_i(\vec{s}_i)\sigma_0 \ldots
\sigma_{i-1} \vdash^{n_i}_{\P,\sigma_i} q(\vec{v}),\ldots\;$ Here, $n =  n_1
+ \ldots + n_i + 1$, $\sigma_0 = \theta_1$, $\sigma_1 = \theta_2 \ldots
\theta_{n_1 + 1}$, \ldots, and $\sigma_i =
\theta_{n_1 + \ldots + n_{i-1} + 2} \ldots \theta_{n_1 + \ldots + n_i+1}$. So $\sigma =
\sigma_0 \ldots \sigma_i$.

By the induction hypothesis
we have $p_{j_{in}}(\vec{s}_j)\sigma_0 \ldots \sigma_{j}
\to_{\R_\P}^{\geq n_j} p_{j_{out}}(\vec{s}_j)\sigma_0 \ldots \sigma_{j}$
and thus also $p_{j_{in}}(\vec{s}_j)\sigma
\to_{\R_\P}^{\geq n_j} p_{j_{out}}(\vec{s}_j)\sigma$.
Moreover, if $Q=\Box$ then we also have $p_{i_{in}}(\vec{s}_i)\sigma
\to_{\R_\P}^{\geq n_i} p_{i_{out}}(\vec{s}_i)\sigma$ where $i =k$.
Otherwise, if $Q$ is ``$q(\vec{v}),\ldots$'', then
the induction hypothesis implies $p_{i_{in}}(\vec{s}_i)\sigma
\to_{\R_\P}^{\geq n_i} r'$, where $r'$ contains $q_{in}(\vec{v})$.
Thus
\[\begin{array}{lll}
p_{in}(\vec{t})\sigma \; = \; p_{in}(\vec{s})\sigma
& \to_{\R_\P} &
u_{c_1,1}(p_{1_{in}}(\vec{s}_1), \V(\vec{s}))\sigma\\
& \to_{\R_\P}^{\geq n_1} &
u_{c_1,1}(p_{1_{out}}(\vec{s}_1), \V(\vec{s}))\sigma\\
& \to_{\R_\P} &
u_{c_1,2}(p_{2_{in}}(\vec{s}_2), \V(\vec{s}) \cup  \V(\vec{s}_1))\sigma\\
& \to_{\R_\P}^{\geq n_2} &
u_{c_1,2}(p_{2_{out}}(\vec{s}_2), \V(\vec{s}) \cup  \V(\vec{s}_1))\sigma\\
& \to_{\R_\P}^{\geq n_3 + \ldots + n_{i-1}} &
u_{c_1,i}(p_{i_{in}}(\vec{s}_i), \V(\vec{s}) \cup \V(\vec{s}_1) \cup \ldots \cup
\V(\vec{s}_{i-1})) \sigma
\end{array}\]
Moreover, if $Q=\Box$, then $i= k$ and the rewrite sequence yields
$p_{out}(\vec{t})\sigma$, since
\[\begin{array}{lll}
 u_{c_1,i}(p_{i_{in}}(\vec{s}_i), \V(\vec{s}) \cup  \ldots \cup
\V(\vec{s}_{i-1})) \sigma
& \to_{\R_\P}^{\geq n_i} &
u_{c_1,i}(p_{i_{out}}(\vec{s}_i), \V(\vec{s}) \cup \ldots \cup
\V(\vec{s}_{i-1})) \sigma \\
&\to_{\R_\P}& p_{out}(\vec{s})\sigma \quad = \quad p_{out}(\vec{t})\sigma.
\end{array}\]
Otherwise, if $Q$ is ``$q(\vec{v}),\ldots$'', then rewriting yields a term containing
$q_{in}(\vec{v})$:
\[\begin{array}{lll}
 u_{c_1,i}(p_{i_{in}}(\vec{s}_i), \V(\vec{s})\cup  \ldots \cup
\V(\vec{s}_{i-1})) \sigma
& \to_{\R_\P}^{\geq n_i} &
u_{c_1,i}(r', \V(\vec{s}) \sigma \cup \ldots \cup
\V(\vec{s}_{i-1}) \sigma). \quad \qed
\end{array}\]
\end{proof}

For the soundness proof, we need another lemma which
states that we can restrict ourselves to non-terminating queries
which only consist of a single atom.

\begin{lemma}[(Non-Terminating Queries)]
\label{infLemma}
Let $\P$  be a logic program. Then for   every infinite derivation
$Q_0 \vdash_\P Q_1 \vdash_P \ldots$,
there is a $Q_i$ of the form ``$q(\vec{v}), \ldots$''
with $i>0$
 such that
the query $q(\vec{v})$ is also
non-terminating.
\end{lemma}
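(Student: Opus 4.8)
\noindent The plan is to run a descent argument on the number of atoms in a query, driven by the dichotomy ``the leftmost atom of the query is worked on forever'' versus ``it is resolved away after finitely many steps.'' First I would record the trivial but essential observation that, since the derivation $Q_0 \vdash_\P Q_1 \vdash_\P \ldots$ is infinite, no $Q_i$ equals the empty query $\Box$; hence every $Q_i$ genuinely has the form ``$q_i(\vec v_i),\ldots$'', so speaking of the selected (leftmost) atom of $Q_i$ makes sense.

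The heart of the argument would be the following claim: for any infinite derivation $R_0 \vdash_\P R_1 \vdash_\P \ldots$ with $R_0 = A_1,\ldots,A_m$, either the singleton query $A_1$ is non-terminating, or there is some $j\geq 1$ such that $R_j \vdash_\P R_{j+1} \vdash_\P \ldots$ is infinite and $R_j$ has strictly fewer than $m$ atoms. To prove this I would track, in each $R_i$, the occurrences that are ``descendants of $A_1$'' (namely $A_1$, instances of descendants, and body atoms introduced when resolving a descendant of $A_1$); under the left-to-right rule these always occur to the left of the occurrences inherited from $A_2,\ldots,A_m$, and $R_i$ contains such a descendant iff its selected atom is one. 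If for every $i$ the selected atom of $R_i$ is a descendant of $A_1$, I would let $R_i'$ be the prefix of $R_i$ consisting of these descendants (so $R_0' = A_1$) and check that each step $R_i \vdash_{c_{i+1},\theta_{i+1}} R_{i+1}$ restricts to a genuine resolution step $R_i' \vdash_{c_{i+1},\theta_{i+1}} R_{i+1}'$ — the selected atom lies in $R_i'$, so the clause variant and the mgu are unchanged, and the trailing atoms are merely carried along in both sequences. This exhibits an infinite derivation $A_1 = R_0' \vdash_\P R_1' \vdash_\P \ldots$, so $A_1$ is non-terminating. Otherwise, I would take the least $j$ whose selected atom is not a descendant of $A_1$; then $j\geq 1$ (as the selected atom $A_1$ of $R_0$ is trivially a descendant of $A_1$), by minimality all descendants of $A_1$ have been resolved away, so $R_j$ consists exactly of the instances of $A_2,\ldots,A_m$, i.e.\ it has $m-1$ atoms, and its tail from $R_j$ is still infinite.

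Granting the claim, I would finish by iterating it starting from the derivation $Q_1 \vdash_\P Q_2 \vdash_\P \ldots$ — starting one step into the original derivation is precisely what secures the required $i>0$. Each application either declares the selected atom of the current query non-terminating, in which case that query is some $Q_i$ with $i\geq 1$ and we are done, or else it passes to a strictly shorter query occurring strictly later in the derivation that still begins an infinite derivation. Since the atom-count is then a strictly decreasing sequence of positive integers (no query in an infinite derivation is $\Box$), this can happen only finitely often, so the iteration must terminate via the first alternative, yielding a $Q_i$ with $i>0$ whose leftmost atom $q(\vec v)$ is non-terminating. The one place I expect to need genuine care is the ``descendants of $A_1$'' bookkeeping inside the claim: making it fully precise that deleting the trailing atoms preserves both the selected-atom choices and the mgus, so that the projected sequence $R_0',R_1',\ldots$ really is a derivation of $\P$. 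Everything else is a routine descent.
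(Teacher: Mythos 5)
Your proof is correct and follows essentially the same route as the paper's: both rest on the observation that a terminating leftmost atom is resolved away after finitely many steps of the given derivation, leaving the instantiated tail, so that if no leftmost atom of any $Q_i$ with $i>0$ were non-terminating the derivation would have to reach $\Box$, a contradiction. The paper phrases this as a direct counting argument and leaves the projection-onto-descendants step implicit, whereas you organize it as a descent on the atom count and make that bookkeeping explicit --- a presentational difference rather than a different proof.
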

\begin{proof}
Assume that for all $i>0$, the first atom in $Q_i$ does not have an infinite
derivation. Then for each $Q_i$ there are two cases: either the first atom fails or it
can successfully be proved. In the former case, there is no infinite
reduction from $Q_i$ which contradicts the infiniteness of the derivation from
$Q_0$. Thus for all $i > 0$, the first atom of $Q_i$ is successfully proved in $n_i$ steps
during the derivation $Q_0 \vdash_\P Q_1 \vdash_\P \dots\;$
Let $m$ be the number of atoms
in $Q_1$. But then $Q_{1+n_1+\ldots+n_m}$ is the empty query $\Box$ which again
contradicts the infiniteness of the derivation.
\qed
\end{proof}

We use \emph{argument filters}
to characterize the classes of queries
whose termination we want to analyze. 
Related definitions can be found  in, e.g.,
\cite{AG00,Leuschel:Sorensen}.

\begin{definition}[(Argument Filter)]
\label{defn:af}
A function $\pi:\Sigma\cup \Delta\rightarrow 2^\N$ is an \emph{argument filter $\pi$ over a signature $(\Sigma,\Delta)$} if and only if
$\pi(f/n)\subseteq \{1, \ldots, n\}$ for every $f/n\in \Sigma\cup\Delta$.
We extend $\pi$ to terms and atoms by defining
$\pi(x) = x$ if $x$ is a variable and
$\pi(f(t_1,\ldots,t_n)) = f(\pi(t_{i_1}),\ldots,\pi(t_{i_k}))$ if $\pi(f/n) =
\{i_1,\ldots,i_k\}$ with $i_1 < \ldots < i_k$.
Here, the new terms and atoms are from the filtered signature
$(\Sigma_\pi,\Delta_\pi)$ where $f/n \in \Sigma$ implies $f/k \in \Sigma_\pi$
and likewise for $\Delta_\pi$.
For a logic program $\P$ we write $(\Sigma_{\P_\pi}, \Delta_{\P_\pi})$ instead
of $((\Sigma_\P)_\pi, (\Delta_\P)_\pi)$.  For any TRS $\R$, we define $\pi(\R)
= \{ \pi(\el) \to \pi(r) \mid \el \to r \in \R \}$. 
The set of all argument filters over a signature $(\Sigma,\Delta)$
is denoted by $AF(\Sigma,\Delta)$. We write $AF(\Sigma)$ instead of
$AF(\Sigma,\varnothing)$ and speak of an argument filter ``over $\Sigma$''.
We also write $\pi(f)$ instead of $\pi(f/n)$ if the arity of $f$ is clear from the context.

An argument filter $\pi'$ is a \emph{refinement} of a filter $\pi$
if and only if $\pi'(f) \subseteq \pi(f)$ for all $f \in \Sigma \cup \Delta$.
\end{definition}

Argument filters  specify those
positions
which have to be
instantiated with finite ground terms. Then, we analyze
termination of all
queries  $Q$  where $\pi(Q)$ is
a (finite) ground atom. In
\rEx{ex_not_sm}, we  wanted to prove termination for all
queries ${\sf p}(t_1,t_2)$ where $t_1$ is finite and ground. These queries
are described by the filter
$\pi(h) =
\{ 1 \}$ for all $h \in \{ {\sf p},
{\sf f}, {\sf g} \}$. Thus, we have $\pi({\sf p}(t_1,t_2)) = {\sf p}(\pi(t_1))$.

Note that argument
filters also operate on \emph{function} instead of just
\emph{predicate} symbols.
Therefore, they can
describe more sophisticated classes of queries than the classical approach of
\cite{Arts:Zantema:95,Chtourou:Rusinowitch,Ganzinger:Waldmann,Ohlebusch01}
which
only distinguishes between input and output
positions of predicates. For example, if one wants to analyze all queries ${\sf
append}(t_1,t_2,t_3)$ where $t_1$ is a finite list, one would use the
filter $\pi({\sf append}) = \{1\}$ and $\pi(\point) = \{2\}$, where ``\point'' is the list constructor (i.e., $\point(X,L) = [X|L]$). Of course, our method can easily prove
that all these queries are terminating for the program of \rEx{real_append}. 

Now we show the soundness theorem:
to prove termination of all queries
$Q$ where $\pi(Q)$ is a finite ground atom,
it suffices to show termination of all those terms $p_{in}(\vec{t})$ for the TRS $\R_\P$
where $\pi(p_{in}(\vec{t}))$ is a finite ground term and where $\vec{t}$
only contains function symbols from the logic program $\P$. Here, $\pi$ has to
be extended to the new function symbols $p_{in}$ by defining $\pi(p_{in}) = \pi(p)$.

\begin{theorem}[(Soundness of the Transformation)]
\label{Soundness of the Transformation}
Let $\P$ be a logic program and let $\pi$ be an argument filter
over $(\Sigma,\Delta)$. We extend $\pi$ such that
$\pi(p_{in}) = \pi(p)$ for all $p \in \Delta$.
Let $S = \{ p_{in}(\vec{t})\mid
p\in\Delta,
\; \vec{t} \in \vec{\T}^\infty(\Sigma,\V), \; \pi(p_{in}(\vec{t}))\in\T(\Sigma_{\P_\pi})
\, \}$. If all terms $s\in S$ are
terminating for $\R_\P$, then
all queries $Q\in\A^{rat}(\Sigma,\Delta,\V)$ with
$\pi(Q)\in\A(\Sigma_\pi,\Delta_\pi)$
are terminating
for $\P$.\footnote{It is currently open whether the converse holds as well. For a short discussion
see \rSec{sec:limitations}.}
\end{theorem}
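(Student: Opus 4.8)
The plan is to prove the contrapositive: if some query $Q = p(\vec{t}) \in \A^{rat}(\Sigma,\Delta,\V)$ with $\pi(p(\vec{t})) \in \A(\Sigma_\pi,\Delta_\pi)$ is non-terminating for $\P$, then some $s \in S$ is non-terminating for $\R_\P$, contradicting the hypothesis. The witness will be $s = p_{in}(\vec{t})\rho$ for a substitution $\rho$ — in general mapping to non-rational infinite terms — that records the accumulated unifiers along one non-terminating derivation.

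The first step is an iterative extraction of rewrite ``chunks'' combining \rLem{infLemma} and \rLem{correctTransformation}. Put $q_0(\vec{v}_0) := p(\vec{t})$, which is atomic, rational, and non-terminating. Given a non-terminating atomic query $q_j(\vec{v}_j)$ with $\vec{v}_j$ rational, \rLem{infLemma}, applied to one of its infinite derivations, provides an index $m_j > 0$ and a prefix $q_j(\vec{v}_j) \vdash^{m_j}_{\P,\tau_j} P_{m_j}$ whose first atom $q_{j+1}(\vec{v}_{j+1})$ is again non-terminating; since mgu's of rational atoms are rational, $\vec{v}_{j+1}$ is rational and the construction continues. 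As $P_{m_j} \neq \Box$, \rLem{correctTransformation} yields $q_{j,in}(\vec{v}_j)\tau_j \to^{\geq m_j}_{\R_\P} r_j$ for a term $r_j$ having $q_{j+1,in}(\vec{v}_{j+1})$ as a subterm, and $m_j \geq 1$ makes each such chunk at least one rewrite step long.

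The second step glues the chunks at their limit. The accumulated compositions $\tau_0,\ \tau_0\tau_1,\ \tau_0\tau_1\tau_2,\ \dots$ form an $\omega$-chain under the ``is-an-instance-of'' order on infinitary terms — each only further instantiates the previous — so they converge to $\rho := \tau_0\tau_1\tau_2\cdots : \V \to \T^\infty(\Sigma,\V)$, and similarly $\rho_j := \tau_j\tau_{j+1}\cdots$, with $\rho = \tau_0\cdots\tau_{j-1}\,\rho_j$. Since every symbol of $\Sigma$ is a constructor of $\R_\P$, each $\rho_j$ is a legal infinitary constructor substitution, so $\to_{\R_\P}$ is stable under $\rho_j$ and, being defined via positions, closed under contexts. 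Hence $p_{in}(\vec{t})\rho = q_{0,in}(\vec{v}_0)\rho = (q_{0,in}(\vec{v}_0)\tau_0)\rho_1 \to^{\geq 1}_{\R_\P} r_0\rho_1$, which contains $q_{1,in}(\vec{v}_1)\rho_1 = (q_{1,in}(\vec{v}_1)\tau_1)\rho_2$ at some position; rewriting there with chunk $1$ gives a further $\geq 1$ steps to a term containing $q_{2,in}(\vec{v}_2)\rho_2$, and so on. Concatenating these infinitely many non-empty segments produces an infinite $\to_{\R_\P}$-reduction of $p_{in}(\vec{t})\rho$. Finally $p_{in}(\vec{t})\rho \in S$: indeed $\vec{t}\rho \in \vec{\T}^\infty(\Sigma,\V)$, and since $\pi(p(\vec{t}))$ is ground the positions of $p(\vec{t})$ retained by $\pi$ are already ground, so $\pi(p_{in}(\vec{t})\rho) = \pi(p_{in}(\vec{t})) = \pi(p(\vec{t})) \in \T(\Sigma_{\P_\pi})$. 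This term in $S$ is non-terminating, the desired contradiction.

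The technical heart — and the step I expect to need the most care — is the gluing in the third paragraph: one must justify that the chain of accumulated unifiers genuinely converges to a well-defined substitution into (possibly non-rational) infinite terms, and that each finitely long rewrite chunk, living over an ever-more-instantiated term, survives passing to the limit instantiation and concatenates positionally with the next, so that one obtains an honest infinite reduction rather than merely reductions of every finite length. This is precisely where Modifications (i) and (ii) from \rSec{Term Rewriting} are essential. A secondary point requiring attention is that the intermediate queries $q_j(\vec{v}_j)$ for $j \geq 1$ generally do \emph{not} satisfy $\pi(q_j(\vec{v}_j)) \in \A(\Sigma_\pi,\Delta_\pi)$, so one cannot simply restart the argument at $q_j$; everything must be traced back to the single starting term $p_{in}(\vec{t})$, which is the only one guaranteed to lie in $S$.
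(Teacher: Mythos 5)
Your proposal is correct and follows essentially the same route as the paper's proof: repeated alternation of \rLem{infLemma} and \rLem{correctTransformation} to extract non-empty rewrite chunks, gluing them via the infinite composition of the accumulated unifiers (your $\rho_j$ are the paper's $\mu_j$), and observing that the filtered positions of $p_{in}(\vec{t})$ are already ground so the limit instance still lies in $S$. The convergence issue you flag as the technical heart is exactly what the paper settles in its footnote defining the infinite composition of substitutions (including the degenerate case of infinitely many variable renamings), so no further ideas are needed.
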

\begin{proof}
Assume that there
is a non-terminating query $p(\vec{t})$ as above with
$p(\vec{t}) \vdash_\P Q_1 \vdash_\P Q_2 \vdash_\P    \ldots\;$
By \rLem{infLemma} there is an $i_1 > 0$ with
$Q_{i_1} = q_1(\vec{v}_1), \ldots$ and an infinite derivation
$q_1(\vec{v}_1)  \vdash_\P Q_1'  \vdash_\P Q_2'  \vdash_P\ldots\;$
From $p(\vec{t}) \vdash_{\P,\sigma_0}^{i_1} q_1(\vec{v}_1), \ldots$ and \rLem{correctTransformation}
we get $p_{in}(\vec{t})\sigma_0 \to_{\R_\P}^{\geq i_1} r_1$, where $r_1$
contains the subterm $q_{1_{in}}(\vec{v}_1)$.

By \rLem{infLemma} again, there is an  $i_2 > 0$ with
$Q_{i_2}' = q_2(\vec{v}_2), \ldots$ and  an infinite derivation
$q_2(\vec{v}_2) \vdash_\P Q_1'' \vdash_\P  \ldots\;$
From $q_1(\vec{v}_1) \vdash_{\P,\sigma_1}^{i_2} q_2(\vec{v}_2), \ldots$
and \rLem{correctTransformation}
we get $p_{in}(\vec{t})\sigma_0\sigma_1 \to_{\R_\P}^{\geq i_1} r_1\sigma_1
\to_{\R_P}^{\geq i_2} r_2,$
where $r_2$
contains the subterm $q_{2_{in}}(\vec{v}_2)$.

Continuing this reasoning we obtain an infinite sequence $\sigma_0, \sigma_1, \ldots$
of substitutions.  For each $j\geq 0$, let $\mu_j = \sigma_j \; \sigma_{j+1}  \ldots$ result from
the infinite composition of these substitutions.\footnote{The composition of
\emph{infinitely} many substitutions
$\sigma_0,\sigma_1,\ldots$ is defined as follows. The definition ensures that
$t\sigma_0\sigma_1\ldots$ is an
instance of $t\sigma_0\ldots\sigma_n$ for all terms (or atoms) $t$ and all $n\geq 0$.
It suffices to define
 the
symbols at the positions of
$t\sigma_0\sigma_1 \ldots$
for any term $t$.
Obviously, ${\mathit{pos}}$ is a position of $t\sigma_0\sigma_1 \ldots$ iff
${\mathit{pos}}$ is a position of
$t\sigma_0 \ldots \sigma_n$ for some
$n \geq 0$.  We define
that the symbol of $t\sigma_0\sigma_1 \ldots$ at such a position ${\mathit{pos}}$ is
$f \in \Sigma$  iff $f$ is at position ${\mathit{pos}}$ in
$t\sigma_0 \ldots \sigma_m$
 for some $m\geq0$.
Otherwise, $(t\sigma_0 \ldots \sigma_n)|_{\mathit{pos}} = X_0 \in \V$.
Let $n = i_0 < i_1 < \ldots$ be the maximal (finite or infinite) sequence with
 $\sigma_{i_j+1}(X_j) = \ldots = \sigma_{i_{j+1}-1}(X_j) = X_j$
and $\sigma_{i_{j+1}}(X_j)$ $= X_{j+1}$ for all $j$. We require $X_j \neq
X_{j+1}$, but permit $X_j = X_{j'}$ otherwise. If this sequence is
finite (i.e., it has the form $n = i_0 < \ldots < i_m$), then we define
$(t\sigma_0\sigma_1\ldots)|_{\mathit{pos}} = X_m$. Otherwise,  the substitutions perform
infinitely many variable renamings. In this case, we use one special
variable $Z_\infty$ and define $(t\sigma_0\sigma_1\ldots)|_{\mathit{pos}} = Z_\infty$.
So if $\sigma_0(X) = Y$, $\sigma_1(Y) = X$,
 $\sigma_2(X) = Y$, $\sigma_3(Y) = X$, etc., we define
$X\sigma_0\sigma_1 \ldots = Y \sigma_0\sigma_1 \ldots = Z_\infty$.}
Since $r_j\mu_j$ is an instance of
$r_j\sigma_j \ldots \sigma_n$ for all $n\geq j$,
we obtain that $\pin(\vec{t})\mu_0$ is non-terminating for $\R_\P$:
\[p_{in}(\vec{t})\mu_0 \to_{\R_\P}^{\geq i_1} r_1\mu_1
\to_{\R_\P}^{\geq i_2} r_2\mu_2 \to_{\R_\P}^{\geq i_3} \ldots\]
As $\pi(p(\vec{t})) \in \A(\Sigma_\pi,\Delta_\pi)$ and thus
$\pi(p_{in}(\vec{t})) =
\pi(p_{in}(\vec{t})\mu_0) \in \T(\Sigma_{\P_\pi})$, this is a contradiction. 
\qed\end{proof}

\section{Termination of Infinitary Constructor Rewriting}
\label{sec:termination}

One of the most powerful methods for automated termination analysis of
rewriting is the \emph{dependency pair} (DP) method \cite{AG00} which is
implemented in most current termination tools for  TRSs.  However, since
the DP method only proves
termination of term rewriting with \emph{finite} terms, its use is not sound
in our setting. Nevertheless, we now show that
only very slight modifications are required to adapt
dependency pairs from ordinary
rewriting to infinitary constructor  rewrit\-ing. So any rewriting tool
implementing dependency pairs
can easily be modified in order to prove termination of infinitary constructor rewriting
as well. Then,
it can also analyze termination of logic programs
using the transformation of \rDef{unmodedTranslation}.

Moreover,
dependency pairs are a general framework that permits the
integration of \emph{any} termination technique for TRSs
\cite[Thm.\ 36]{LPAR04}. Therefore, instead of adapting each technique separately,
it is sufficient  only to adapt the DP framework
to infinitary constructor
rewriting. Then, \emph{any} termination technique can be directly used for infinitary constructor
rewriting. In \rSec{Dependency Pairs for
Infinitary Rewriting}, we adapt the notions and the main
termination criterion of the dependency pair method to infinitary constructor
rewriting
and in \rSec{Automation
by Adapting the DP Framework} we show how
to automate this criterion by adapting the ``DP processors'' of the DP framework.

\subsection{Dependency Pairs for Infinitary Rewriting}\label{Dependency Pairs for Infinitary Rewriting}

Let $\R$ be a TRS. For each defined symbol
$f/n \in \Sigma_D$, we extend the set of constructors
$\Sigma_C$ by a fresh
\emph{tuple symbol} $f^\sharp/n$.
We often write $F$
instead of $f^\sharp$.
For $t = g(\vec{t})$ with $g \in \Sigma_D$, let
$t^\sharp$ denote $g^\sharp(\vec{t})$.

\begin{definition}[(Dependency Pair \cite{AG00})]\label{Dependency Pair}
The set of \emph{dependency pairs} for a TRS $\R$ is $DP(\R) =
\{ \el^\sharp \to t^\sharp \mid \el \to r \in \R,$  $t$ is a subterm of
$r$,  $\rt(t) \in \Sigma_D \}$.
\end{definition}

\begin{example}\label{DP example}
{\sl Consider again the logic program of \rEx{ex_not_sm}
which was transformed into the following TRS $\R$ in  \rEx{ex_not_sm_we}.
\begin{align}
\rrule{32-1}{
{\sf p}_{in}(X,X) }{ {\sf p}_{out}(X,X)}\\
\rrule{32-2}{
{\sf p}_{in}({\sf f}(X), {\sf g}(Y)) }{ {\sf u}_1({\sf p}_{in}({\sf f}(X),
{\sf f}(Z)),X,Y)} \\
\rrule{32-3}{
{\sf u}_1({\sf p}_{out}({\sf f}(X), {\sf f}(Z)),X,Y) }{ {\sf u}_2({\sf p}_{in}(Z,{\sf g}(Y)),X,Y,Z)}\\
\rrule{32-4}{
{\sf u}_2({\sf p}_{out}(Z,{\sf g}(Y)),X,Y,Z) }{ {\sf p}_{out}({\sf f}(X),{\sf g}(Y))}
\end{align}
For this TRS $\R$, we have $\Sigma_D = \{ {\sf
p}_{in}, {\sf u}_1, {\sf u}_2 \}$ and $DP(\R)$ is
\begin{align}
\lrule{42-1}{
{\sf P}_{in}({\sf f}(X), {\sf g}(Y)) }{ {\sf P}_{in}({\sf f}(X), {\sf
f}(Z))}\\
\lrule{42-2}{
{\sf P}_{in}({\sf f}(X), {\sf g}(Y)) }{
{\sf U}_1({\sf p}_{in}({\sf f}(X), {\sf f}(Z)),X,Y)}\\
\lrule{42-3}{
{\sf U}_1({\sf p}_{out}({\sf f}(X), {\sf f}(Z)),X,Y) }{ {\sf P}_{in}(Z,{\sf
g}(Y))}\\
\lrule{42-4}{{\sf U}_1({\sf p}_{out}({\sf f}(X), {\sf f}(Z)),X,Y) }{
{\sf U}_2({\sf p}_{in}(Z,{\sf g}(Y)),X,Y,Z)}
\end{align}}
\end{example}

While \rDef{Dependency Pair} is from \cite{AG00}, all following definitions and
theorems are new. They extend existing concepts from
ordinary to infinitary constructor rewriting.

For termination, one tries to prove that there are no infinite \emph{chains}
of dependency
pairs.  Intuitively, a dependency pair corresponds to a function call and a chain
represents a possible sequence of calls that can occur  during
rewriting. \rDef{Chain} extends the notion of chains to
infinitary constructor rewriting. To this end, we use  an argument filter
$\pi$ that describes which arguments of function
symbols have to be \emph{finite} terms. So if $\pi$ does not delete
arguments (i.e., if
$\pi(f) = \{1, \ldots, n \}$ for all $f/n$), then this corresponds to ordinary
(finitary) constructor rewriting and if $\pi$ deletes  all arguments (i.e., if $\pi(f) = \varnothing$ for
all $f$), then this corresponds to full infinitary constructor rewriting.
In \rDef{Chain}, the TRS
$\D$
usually stands for  a set of dependency pairs.
(Note that if $\R$ is a TRS, then  $DP(\R)$ is also a TRS.)

\begin{definition}[(Chain)]
\label{Chain}
Let $\D,\R$ be TRSs and $\pi$  be
an argument filter.
 A (possibly infinite) sequence of pairs
$s_1\!\to\!t_1, s_2\!\to\!t_2, \ldots$ from $\D$  \mbox{is a
$(\D,\R,\pi)$-\emph{chain}
 iff}
\begin{itemize}
\item[$\bullet$] for all $i \geq 1$,  there are substitutions $\sigma_i : \V \to \T^\infty(\Sigma_C,\V)$ such that
$t_i\sigma_i  \to^*_\R s_{i+1}\sigma_{i+1}$, and
\item[$\bullet$]  for all $i \geq 1$, we have $\pi(s_i\sigma_i),
\pi(t_i\sigma_i)
\in\T(\Sigma_\pi)$. Moreover, if the
rewrite sequence from
$t_i\sigma_i$ to $s_{i+1}\sigma_{i+1}$ has the form
$t_i\sigma_i
 = q_0 \to_\R \ldots \to_\R q_{m} = s_{i+1}\sigma_{i+1}$,
then for all  terms in this
rewrite sequence  we have
$\pi(q_0),\ldots,\pi(q_m)\in\T(\Sigma_\pi)$ as well.
                 So all terms in the sequence have finite ground terms on
                 those positions which are not filtered away by $\pi$.
\end{itemize}
\end{definition}

In \rEx{DP example},
``(\ref{42-2}), (\ref{42-3})'' is a chain for any argument filter $\pi$: if one
instantiates $X$ and $Z$ with the same finite ground term, then
(\ref{42-2})'s instantiated right-hand side rewrites to an instance of (\ref{42-3})'s
left-hand side. Note that if one uses an argument filter
$\pi$ which permits  an instantiation of $X$ and
$Z$ with the infinite
term ${\sf f}({\sf f}(\ldots))$, then there is also an infinite chain ``(\ref{42-2}),
(\ref{42-3}), (\ref{42-2}), (\ref{42-3}),
\ldots''

In order to prove termination of a program $\P$, by \rThm{Soundness of the Transformation}
we have to show that all terms $p_{in}(\vec{t})$ are terminating for
$\R_\P$ whenever $\pi(p_{in}(\vec{t}))$ is a finite ground term and $\vec{t}$
only contains function symbols from the logic program   (i.e., $\vec{t}$
contains no defined
symbols of the TRS $\R_\P$).
\rThm{Proving Infinitary Termination}  states that one can prove
absence of infinite $(DP(\R_\P),\R_\P,\pi')$-chains instead.
Here, $\pi'$ is a filter which filters away ``at least
as much'' as  $\pi$.
However, $\pi'$ has to be chosen in such a way
that  the filtered TRSs  $\pi'(DP(\R_\P))$ and $\pi'(\R_\P)$  satisfy the ``\emph{variable
condition}'', i.e., $\V(\pi'(r)) \subseteq \V(\pi'(\el))$ for all $\el\to r \in
DP(\R_\P) \cup \R_\P$.
Then the filter $\pi'$ detects
all potentially infinite
subterms in rewrite sequences (i.e., all subterms which correspond to
``non-unification-free parts'' of $\P$, i.e., to non-ground subterms when
``executing'' the program $\P$).

\begin{theorem}[(Proving Infinitary Termination)]
\label{Proving Infinitary Termination}
Let $\R$ be a TRS over $\Sigma$ and let $\pi$ be an argument filter over
$\Sigma$. We extend $\pi$ to tuple symbols such that $\pi(F) = \pi(f)$ for all $f \in \Sigma_D$.
Let $\pi'$ be a refinement of $\pi$ such
that $\pi'(DP(\R))$ and $\pi'(\R)$ satisfy the variable
condition.\footnote{To see why the variable
condition  is needed in \rThm{Proving Infinitary Termination}, let $\R =
\{ \Fg(X) \to \Ff(X), \Ff(\Fs(X)) \to \Ff(X) \}$ and  $\pi =
\pi'$ where $\pi'(\Fg) =
\varnothing$,
$\pi'(\Ff) = \pi'({\sf F}) = \pi'(\Fs) = \{1\}$. $\R$'s first rule violates the
variable condition:
$\V(\pi'(\Ff(X))) = \{ X \} \not\subseteq
\V(\pi'(\Fg(X))) = \varnothing$. There is no infinite chain, since $\pi'$ does
not allow us to instantiate the variable $X$ in
the  dependency pair ${\sf F}(\Fs(X)) \to
{\sf F}(X)$ by an infinite term.
Nevertheless, there is a non-terminating term
$\Fg(\Fs(\Fs(\ldots)))$
which
is filtered to a finite ground term $\pi'(\Fg(\Fs(\Fs(\ldots)))) = \Fg$.}
If there is no infinite $(DP(\R),\R,\pi')$-chain, then all terms $f(\vec{t})$ with
$\vec{t} \in \vec{\T}^\infty(\Sigma_C,\V)$ and
$\pi(f(\vec{t})) \in  \T({\Sigma}_\pi)$
are terminating for
$\R$.
\end{theorem}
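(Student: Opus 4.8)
The plan is to argue by contradiction: I assume some term $f(\vec t)$ with $\vec t\in\vec{\T}^\infty(\Sigma_C,\V)$ and $\pi(f(\vec t))\in\T(\Sigma_\pi)$ is non-terminating for $\R$ and construct an infinite $(DP(\R),\R,\pi')$-chain. The skeleton is the classical dependency-pair soundness proof, but it has to be made to cooperate with the argument filter $\pi'$, and it is actually simplified by infinitary constructor rewriting: since by definition every matcher used in $\to_\R$ is a substitution into $\T^\infty(\Sigma_C,\V)$, and (possibly infinite) constructor terms contain no redex, a substituted variable can never be the source of non-termination. As a preliminary remark, since $\pi'$ is a refinement of $\pi$, the term $\pi'(f(\vec t))$ is again finite and ground.

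The technical heart is a \emph{filter-preservation lemma}: whenever $\pi'(s)$ is a finite ground term and either $s\to_\R s'$, or $s=\el^\sharp\sigma$ and $s'=t^\sharp\sigma$ for a dependency pair $\el^\sharp\to t^\sharp\in DP(\R)$ and a substitution $\sigma:\V\to\T^\infty(\Sigma_C,\V)$, then $\pi'(s')$ is finite and ground, and moreover all intermediate terms of any rewrite sequence out of a $\pi'$-finite-ground term are $\pi'$-finite-ground. I would prove this by asking whether the contracted redex sits at a position discarded by $\pi'$ (then $\pi'(s')=\pi'(s)$) or not; in the remaining case one observes $\pi'(r\sigma)=\pi'(r)\sigma'$ with $\sigma'(x)=\pi'(\sigma(x))$ and invokes the variable condition $\V(\pi'(r))\subseteq\V(\pi'(\el))$ — which holds by hypothesis for $\pi'(\R)$ resp.\ $\pi'(DP(\R))$ — so that every variable of the finite term $\pi'(r)$ occurs in $\pi'(\el)$ and is therefore mapped by $\sigma'$ to a finite ground term, since $\pi'(\el\sigma)=\pi'(\el)\sigma'$ is finite and ground. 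This is the only place the hypothesis on $\pi'$ is used, and the footnote to the statement shows it cannot be dropped.

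Next I would build the chain by iterating a single step while preserving the invariant that the current term $w$ is non-terminating, $\pi'(w)$ is finite and ground, $\rt(w)\in\Sigma_D$, and every proper subterm of $w$ is terminating. The seed $w:=f(\vec t)$ qualifies: its proper subterms lie inside the constructor terms $\vec t$ and hence contain no redex, and $\rt(f(\vec t))$ must be defined because $f(\vec t)$ is not terminating. For the step, since all proper subterms of $w$ terminate, any infinite reduction from $w$ performs only finitely many steps strictly below the root before a root step, so $w$ rewrites below the root to some $\el\sigma$ and then $\el\sigma\to_\R r\sigma$ at the root for a rule $\el\to r\in\R$ and a constructor substitution $\sigma$, with $r\sigma$ still non-terminating; by the preservation lemma $\pi'$ stays finite and ground throughout. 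Since $r$ is finite I choose a position $q$ of $r$ maximal among those with $(r|_q)\sigma$ non-terminating; a short case analysis — $q$ is not a variable position (else $(r|_q)\sigma$ would be a constructor term, hence terminating), and $\rt(r|_q)$ is not a constructor (else non-termination would descend to some $(r|_{qj})\sigma$, contradicting maximality, or $(r|_q)\sigma$ would be a normal form) — yields $\rt(r|_q)\in\Sigma_D$. Hence $\el^\sharp\to(r|_q)^\sharp\in DP(\R)$ is taken as the next chain pair with substitution $\sigma$; the preservation lemma applied to this dependency pair shows $\pi'((r|_q)^\sharp\sigma)$ is finite and ground; and $w':=(r|_q)\sigma$ re-establishes the invariant, its proper subterms $(r|_{qj})\sigma$ being terminating by maximality of $q$. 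The $\R$-rewrite sequence that the definition of a chain demands between consecutive pairs is precisely the ``below the root'' prefix of the next iteration applied to $w'$, transported onto the marked term $(r|_q)^\sharp\sigma$ — legitimate because marking the root does not interfere with rewrites below the root — and its intermediate terms remain $\pi'$-finite-ground by preservation. Iterating forever produces the infinite $(DP(\R),\R,\pi')$-chain, contradicting the hypothesis.

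The step I expect to be the real obstacle is the preservation lemma in its dependency-pair incarnation, i.e., showing that $\pi'((r|_q)^\sharp\sigma)$ is finite and ground even though the chosen position $q$ need not survive the filtering of $r$, so that $(r|_q)\sigma$ is in general \emph{not} a visible subterm of $r\sigma$. Here it is essential that the variable condition is required not only for $\pi'(\R)$ but also for $\pi'(DP(\R))$: the latter says exactly that the variables visible inside a defined-rooted subterm of a right-hand side already occur in the corresponding left-hand side, which is what keeps the filtered image of $w'$ finite and ground. Everything else — the ``finitely many steps below the root before a root step'' argument, the case analysis fixing $\rt(r|_q)$, and the transport of reductions to marked terms — is a routine infinitary re-reading of the finitary dependency-pair soundness proof, with infinite and even non-rational constructor substitutions causing no trouble, since such terms can never sustain an infinite reduction.
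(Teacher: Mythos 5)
Your proposal is correct and follows essentially the same route as the paper's own proof: first root step forced by the constructor arguments, choice of a minimal (deepest) defined-rooted subterm of the right-hand side whose instance is non-terminating as the next dependency pair, the variable conditions for $\pi'(\R)$ and $\pi'(DP(\R))$ guaranteeing that filtered terms stay finite and ground, and iteration of this construction to obtain the infinite chain. Your version merely makes explicit some steps the paper leaves implicit (the filter-preservation lemma, the invariant, and the case analysis showing the selected subterm cannot be a variable or constructor-rooted).
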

\begin{proof}
Assume there is a non-terminating term $f(\vec{t})$ as above.
Since $\vec{t}$ does not contain defined symbols,
the first rewrite step in the infinite sequence is on the root position
with a rule $\el = f(\vec{\el}) \to
r$ where $\el\sigma_1 = f(\vec{t})$.  Since $\sigma_1$ does not introduce
defined symbols, all defined symbols of $r\sigma_1$ occur on positions of $r$. So there is
a subterm $r'$ of $r$ with defined root such that $r'\sigma_1$ is also
non-terminating. Let $r'$
denote the smallest such subterm (i.e., for all proper subterms $r''$ of $r'$, the term
$r''\sigma_1$ is terminating).
Then $\el^\sharp \to r'^\sharp$ is the first dependency pair of
the infinite chain that we construct. Note that
$\pi(\el\sigma_1)$  and thus,
$\pi(\el^\sharp\sigma_1)$
and hence, also
$\pi'(\el^\sharp\sigma_1)
 =
\pi'(F(\vec{t}))$ is  a finite ground term by assumption. Moreover, as
$\el^\sharp \to r'^\sharp \in DP(\R)$ and as $\pi'(DP(\R))$ satisfies the variable condition, $\pi'(r'^\sharp\sigma_1)$ is
finite and ground as well.

The infinite sequence continues by rewriting $r'\sigma_1$'s  proper
subterms repeatedly. During this rewriting, the left-hand sides of rules are
instantiated by constructor substitutions (i.e., substitutions
with range $\T^\infty(\Sigma_C,\V)$).
As $\pi'(\R)$ satisfies the variable condition, the terms remain
finite and ground when applying the
filter $\pi'$.  Finally, a root rewrite step is performed again.
Repeating this construction infinitely many times results in
an infinite chain.
\qed
\end{proof}

The following corollary combines \rThm{Soundness of the Transformation} and
\rThm{Proving Infinitary Termination}. It
describes how we use the DP method for proving termination of logic
programs.

\begin{corollary}[(Termination of Logic Prog. by Dependency
Pairs)]\label{Termination of Logic Programs by Dependency Pairs}
\ \\Let $\P$ be a logic program and let $\pi$ be an argument filter
over $(\Sigma,\Delta)$.
We extend $\pi$ to $\Sigma_\P$ and to tuple symbols such that
$\pi(p_{in}) = \pi(P_{in}) = \pi(p)$ for all $p \in \Delta$. For all other
symbols $f/n$ that are not from $\Sigma$ or $\Delta$, we define $\pi(f/n) =
\{1,\ldots,n\}$.
Let $\pi'$ be a refinement of $\pi$ such
that $\pi'(DP(\R_\P))$ and $\pi'(\R_\P)$ satisfy the variable
condition.
If there is no infinite $(DP(\R_\P),\R_\P,\pi')$-chain, then
all queries $Q\in\A^{rat}(\Sigma,\Delta,\V)$ with
$\pi(Q)\in\A(\Sigma_\pi,\Delta_\pi)$ are terminating
for $\P$.
\end{corollary}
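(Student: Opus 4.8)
The plan is to derive the statement by composing the two theorems just proved: \rThm{Soundness of the Transformation} reduces termination of the relevant queries of $\P$ to termination of a set $S$ of terms for $\R_\P$, and \rThm{Proving Infinitary Termination}, applied to the TRS $\R_\P$, reduces termination of exactly those terms to the absence of infinite $(DP(\R_\P),\R_\P,\pi')$-chains. So the proof is essentially a matter of checking that the two interfaces line up.

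First I would recall, from \rDef{unmodedTranslation}, that the defined symbols of $\R_\P$ are precisely the symbols $p_{in}$ for $p\in\Delta$ together with the auxiliary symbols $u_{c,i}$; in particular every symbol of $\Sigma$ is a \emph{constructor} of $\R_\P$, so $\Sigma \subseteq (\Sigma_\P)_C$. Consequently the set $S = \{\,\pin(\vec{t})\mid p\in\Delta,\ \vec{t}\in\vec{\T}^\infty(\Sigma,\V),\ \pi(\pin(\vec{t}))\in\T(\Sigma_{\P_\pi})\,\}$ appearing in \rThm{Soundness of the Transformation} consists of terms of the form $f(\vec{t})$ where $f = \pin$ is a defined symbol of $\R_\P$, $\vec{t}\in\vec{\T}^\infty((\Sigma_\P)_C,\V)$, and $\pi(f(\vec{t}))\in\T(\Sigma_{\P_\pi})$ --- which is exactly the class of terms whose termination for $\R_\P$ is guaranteed by the conclusion of \rThm{Proving Infinitary Termination} (taking its $\Sigma$ to be $\Sigma_\P$ and its $\R$ to be $\R_\P$).

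Then I would verify that the hypotheses of \rThm{Proving Infinitary Termination} hold in this instantiation: the filter $\pi$ has been extended to all of $\Sigma_\P$ and to the tuple symbols (with $\pi(\pin) = \pi(P_{in}) = \pi(p)$ for $p\in\Delta$ and $\pi(f/n) = \{1,\ldots,n\}$ for the remaining symbols, i.e.\ $\pout$ and the $u_{c,i}$), and $\pi'$ is assumed to be a refinement of $\pi$ for which $\pi'(DP(\R_\P))$ and $\pi'(\R_\P)$ satisfy the variable condition. Hence, under the corollary's hypothesis that there is no infinite $(DP(\R_\P),\R_\P,\pi')$-chain, \rThm{Proving Infinitary Termination} yields that every term in $S$ is terminating for $\R_\P$. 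Applying \rThm{Soundness of the Transformation} with this $\pi$ then gives that every query $Q\in\A^{rat}(\Sigma,\Delta,\V)$ with $\pi(Q)\in\A(\Sigma_\pi,\Delta_\pi)$ is terminating for $\P$, as claimed. The only nontrivial step is the bookkeeping of the two filter extensions and of the constructor/defined split of $\R_\P$, needed to match the set $S$ of one theorem with the class of terms in the other; once that is done the corollary follows with no further argument, so I expect no genuine obstacle here.
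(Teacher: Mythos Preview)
Your proposal is correct and matches the paper's intended approach exactly: the paper does not give a separate proof for this corollary but simply introduces it as the combination of \rThm{Soundness of the Transformation} and \rThm{Proving Infinitary Termination}. Your bookkeeping---that $\Sigma \subseteq (\Sigma_\P)_C$ so that the set $S$ of \rThm{Soundness of the Transformation} is contained in the class of terms handled by \rThm{Proving Infinitary Termination}, and that the extension of $\pi$ in the corollary satisfies $\pi(F) = \pi(f)$ for each defined symbol $f$ of $\R_\P$---is precisely the interface check needed, and nothing more is required.
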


\begin{example}\label{choosing AF}{\sl
We want to prove termination of \rEx{ex_not_sm} for all
queries $Q$ where $\pi(Q)$
is finite and ground  for the filter
$\pi(h) = \{ 1 \}$ for all $h \in \{ {\sf p},
{\sf f}, {\sf g} \}$.
By Corollary \ref{Termination of Logic Programs by Dependency Pairs}, it
suffices to show
absence of infinite $(DP(\R),\R,\pi')$-chains. Here, $\R$ is the TRS
$\{ (\ref{32-1}), \ldots, (\ref{32-4}) \}$ 
from \rEx{ex_not_sm_we} and $DP(\R)$ are Rules (\ref{42-1}) -- (\ref{42-4}) from
\rEx{DP example}. The filter
$\pi'$ has to satisfy
$\pi'(h) \subseteq \pi(h) = \{ 1 \}$ for $h \in \{ \Ff, \Fg \}$
and moreover, $\pi'({\sf p}_{in})$ and $\pi'({\sf P}_{in})$ must be subsets of
$\pi({\sf p}_{in}) = \pi({\sf P}_{in}) =
\pi({\sf p}) = \{1\}$.
Moreover, we have to choose $\pi'$ such that the variable condition is
fulfilled.
So while $\pi$ is always given, $\pi'$ has to be determined automatically.
Of course, there are only finitely many
possibilities for $\pi'$. In particular, defining $\pi'(h) =
\varnothing$ for all  symbols $h$ is always possible. But to
obtain a successful  termination proof afterwards, one should try to
generate
filters where
the sets $\pi'(h)$ are as large as possible, since such filters
provide more information
about the finiteness of arguments.  We will present suitable heuristics for
finding such filters $\pi'$ in \rSec{sec:refining}.
In our
example, we use  $\pi'({\sf p}_{in}) = \pi'({\sf P}_{in}) = \pi'({\sf f}) =
\pi'({\sf g}) =\{1\}$,
$\pi'({\sf p}_{out}) =\pi'({\sf u}_1)  =\pi'({\sf
U}_1) =
 \{1,2\}$, and $\pi'({\sf
u}_2)  = \pi'({\sf U}_2) =\{1,2,4\}$. For the non-well-moded \rEx{append} we choose $\pi'({\sf g}) = \varnothing$
instead to satisfy the variable condition.}
\end{example}

\noindent
So to automate the criterion of Corollary \ref{Termination of Logic Programs by Dependency Pairs},
we have to tackle two problems:
\begin{itemize}
\item[(I)] We start with a given filter $\pi$ which describes the set
of queries whose termination
should be proved. Then we
have to find a suitable argument filter $\pi'$ that refines $\pi$
in such a way that the
variable condition of \rThm{Proving Infinitary Termination} is fulfilled and
that the termination proof is
``likely to succeed''. This problem will be discussed in \rSec{sec:refining}.
\item[(II)] For the chosen argument filter $\pi'$,
we have to prove that there is no infinite $(DP(\R_\P),\R_\P,\pi')$-chain.
We show how to do this in the following subsection.
\end{itemize}

\subsection{Automation by Adapting the DP Framework}\label{Automation by Adapting the DP Framework}

Now
we show how to prove absence of infinite $(DP(\R),\R,\pi)$-chains automatically. To
this end, we adapt the \emph{DP framework} of \cite{LPAR04} to infinitary
rewriting. In this framework, we now consider arbitrary
\emph{DP problems} $(\D,\R,\pi)$ where $\D$ and
$\R$
are TRSs and $\pi$ is an argument filter.
Our
goal is to show that there is no infinite $(\D,\R,\pi)$-chain.
In this case, we call the problem \emph{finite}.
Termination techniques should  now be formulated as
\emph{DP processors} which
operate on DP problems instead of TRSs. A DP processor $\Proc$ takes a DP problem as
input and returns a new set
of DP problems which then have to be solved instead.
$\Proc$ is \emph{sound} if for all DP problems $d$, $d$ is
finite whenever all DP problems in $\Proc(d)$ are
finite.
So termination proofs  start with the initial DP problem
$(DP(\R),\R,\pi)$. Then this problem is transformed repeatedly by sound DP
processors. If the final processors return empty sets of DP problems, then termination is
proved.

In \rThm{DP Processor based on the Dependency Graph}, \ref{DP Processors
Based on Reduction Pairs}, and \ref{Argument Filter Processor}
we will recapitulate three of the most important existing DP
processors \cite{LPAR04} and describe how they must be modified for
infinitary constructor rewriting.
To this end, they now also have to take the argument
filter $\pi$ into account.
The first processor uses an  \emph{estimated dependency graph}
 to
estimate which dependency pairs can follow each other in chains.

\begin{definition}[(Estimated Dependency Graph)]
\label{Estimated Dependency Graph}
Let  $(\D,\R,\pi)$ be a DP problem.   The nodes of the
\emph{estimated $(\D,\R,\pi)$-dependency graph}
are the pairs of $\D$ and there is an arc from $s\to t$
to $u \to v$ iff $\CAP(t)$ and a variant $u'$ of $u$ unify with an mgu $\mu$ where
$\pi(\CAP(t)\mu) = \pi(u'\mu)$ is a finite term.
Here, $\CAP(t)$
replaces all subterms of $t$ with defined root symbol by
different fresh variables.
\end{definition}

\begin{example}\label{DP Graph Example}{\sl
For the DP problem $(DP(\R),\R,\pi')$ from
\rEx{choosing AF} we obtain:

\[ \xymatrix{ (\ref{42-1})\ar@{<-}[r] & (\ref{42-3}) \ar@/^1pc/[r]  &
(\ref{42-2})\ar@{->}[r]\ar@/^1pc/[l]
  &(\ref{42-4})}\]

\vspace*{.2cm}

For example, there is an arc  $(\ref{42-2}) \to (\ref{42-3})$, as $\CAP({\sf U}_1({\sf
p}_{in}({\sf f}(X), {\sf f}(Z)),X,Y)) = {\sf U}_1(V, X,Y)$ unifies with ${\sf U}_1({\sf p}_{out}({\sf f}(X'),
{\sf f}(Z')),X',Y')$  by instantiating the arguments of ${\sf U}_1$
with finite terms. But there
are no arcs $(\ref{42-1}) \to (\ref{42-1})$ or $(\ref{42-1}) \to (\ref{42-2})$, since ${\sf
P}_{in}({\sf f}(X),{\sf f}(Z))$ and ${\sf P}_{in}({\sf f}(X'), {\sf g}(Y'))$ do
not unify, even if one instantiates
$Z$ and $Y'$ by infinite terms (as permitted by
the
filter $\pi'({\sf P}_{in}) =\{1\}$).}
\end{example}

Note that filters are used to \emph{detect} potentially infinite
arguments, but these arguments are not \emph{removed}, since they can still   be useful in
the termination proof. In \rEx{DP Graph Example}, 
they are needed to
determine that 
 (\ref{42-1}) has no outgoing arcs.

If $s \to t,  u \to v$ is a $(\D,\R,\pi)$-chain
then there is an arc from  $s \to t$
to $u \to v$ in the estimated dependency graph.
Thus,  absence of infinite chains can be proved
separately for each maximal strongly connected component (SCC) of the
graph. This observation is used by the following processor to modularize termination proofs
by decomposing a DP
problem into  sub-problems. If there are $n$ SCCs in the graph and if $\D_i$
are the dependency pairs of the $i$-th SCC (for $1 \leq i \leq n$), then one
can decompose the set of dependency pairs $\D$ into the subsets
$\D_1,\ldots,\D_n$. 

\begin{theorem}[(Dependency Graph Processor)]
\label{DP Processor based on the Dependency Graph} For a DP problem
$(\D, \R,$ $\pi)$, let $\Proc$ return
$\{(\D_1,\R,\pi), \ldots, (\D_n,\R,\pi)\}$ where $\D_1, \ldots, \D_n$ are the
sets of nodes of
the SCCs in the estimated dependency graph. Then $\Proc$ is sound.
\end{theorem}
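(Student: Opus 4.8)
The plan is to prove the contrapositive: assuming $(\D,\R,\pi)$ is not finite, I exhibit some $j$ for which $(\D_j,\R,\pi)$ is not finite. So fix an infinite $(\D,\R,\pi)$-chain
\[
s_1 \to t_1, \; s_2 \to t_2, \; s_3 \to t_3, \; \ldots
\]
with constructor substitutions $\sigma_i : \V \to \T^\infty(\Sigma_C,\V)$ such that $t_i\sigma_i \to^*_\R s_{i+1}\sigma_{i+1}$ and all the finiteness requirements of \rDef{Chain} hold along each rewrite sequence.

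First I would justify the step already announced before the statement: for every $i$ there is an arc from $s_i\to t_i$ to $s_{i+1}\to t_{i+1}$ in the estimated $(\D,\R,\pi)$-dependency graph of \rDef{Estimated Dependency Graph}. The point is that $\sigma_i$ ranges over constructor terms, so $t_i\sigma_i$ has defined symbols only at positions coming from $t_i$ itself; hence every rewrite step in $t_i\sigma_i \to^*_\R s_{i+1}\sigma_{i+1}$ is performed at or below a position that $\CAP$ abstracts in $t_i$ (a redex has a defined root, and the $\CAP$-skeleton of $t_i$ together with the constructor images of $\sigma_i$ is constructor-only). Therefore $s_{i+1}\sigma_{i+1}$ is an instance of $\CAP(t_i)$ once the latter is renamed apart from $s_{i+1}$; since $s_{i+1}\sigma_{i+1}$ is trivially also an instance of $s_{i+1}$, the terms $\CAP(t_i)$ and a variant $s_{i+1}'$ of $s_{i+1}$ have a common instance, hence unify with an mgu $\mu$, and $\CAP(t_i)\mu$ is more general than $s_{i+1}\sigma_{i+1}$. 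Because $\pi(s_{i+1}\sigma_{i+1})$ is finite by \rDef{Chain}, and instantiating a term can only extend it below its existing function-symbol positions — so it cannot turn an infinite $\pi$-image into a finite one — we get that $\pi(\CAP(t_i)\mu) = \pi(s_{i+1}'\mu)$ is finite, which is exactly the arc condition.

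Given the arc property, the chain induces an infinite path $s_1\to t_1, s_2\to t_2,\ldots$ in the estimated dependency graph. As $\D$ is finite, the set of pairs occurring infinitely often on this path is non-empty, and any two of them are mutually reachable along the path, so they all lie in one strongly connected component; let $\D_j$ be its node set. Pick $N$ with $s_i\to t_i \in \D_j$ for all $i\geq N$. Then the suffix $s_N\to t_N, s_{N+1}\to t_{N+1},\ldots$ with the substitutions $\sigma_N,\sigma_{N+1},\ldots$ is an infinite $(\D_j,\R,\pi)$-chain: the conditions in \rDef{Chain} refer only to the pairs actually used, to $\to_\R$, and to $\pi$, none of which is affected by replacing $\D$ with the sub-TRS $\D_j\subseteq\D$. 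Hence $(\D_j,\R,\pi)$ is not finite, contradicting the hypothesis that all problems returned by $\Proc$ are finite; this proves $\Proc$ sound.

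The main obstacle is the first paragraph of the argument above — verifying that the estimated dependency graph over-approximates the ``followed-by'' relation in the \emph{infinitary constructor} setting. Two features of our setting make this work and must be handled explicitly: (i) constructor substitutions introduce no defined symbols, which is what confines all rewriting to the subterms that $\CAP$ abstracts and thus makes $s_{i+1}\sigma_{i+1}$ an instance of $\CAP(t_i)$ despite the terms being possibly infinite; and (ii) the monotonicity observation that finiteness of $\pi$ applied to an instance forces finiteness of $\pi$ applied to any more general term, which is precisely what lets the mgu $\mu$ inherit the finiteness side-condition of \rDef{Estimated Dependency Graph}. Once the arc property is established, the remaining SCC/pigeonhole reasoning is routine and identical to the finitary case.
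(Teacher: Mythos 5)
Your proof is correct and follows essentially the same route as the paper's: you first establish that consecutive elements of a chain are always connected by an arc in the estimated dependency graph (using that constructor substitutions confine all rewriting to the subterms that $\CAP$ abstracts, so $s_{i+1}\sigma_{i+1}$ instantiates $\CAP(t_i)$, and that $\pi$-finiteness of an instance forces $\pi$-finiteness of the more general term $\pi(\CAP(t_i)\mu)$), and then conclude with the standard SCC/pigeonhole argument on the induced infinite path. This matches the paper's proof step for step, with your version spelling out the ``instance of $\CAP(t_i)$'' claim in slightly more detail than the paper's explicit construction of the unifier.
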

\begin{proof}
We prove that if $s \to t, \,  u \to v$ is a chain,
then there is an arc from $s \to t$ to $u \to v$ in the estimated dependency
graph. This suffices for \rThm{DP Processor based on the Dependency
Graph}, since then
 every infinite $(\D,\R,\pi)$-chain corresponds to an infinite path
in the graph. This path ends in an SCC with nodes $\D_i$ and thus,
there is also an infinite $(\D_i, \R, \pi)$-chain. Hence, if all
$(\D_i,\R,\pi)$ are finite DP problems, then so is $(\D,\R,\pi)$.

Let $s \to t, \,  u \to v$ be a $(\D,\R,\pi)$-chain, i.e.,
$t\sigma_1 \to_\R^* u\sigma_2$ for some constructor substitutions
$\sigma_1,\sigma_2$ where $\pi(t\sigma_1)$ and
$\pi(u\sigma_2)$ are finite.
Let $\mathit{pos}_1,\ldots,\mathit{pos}_n$ be the top positions where $t$ has defined
symbols.
 Then  $\CAP(t) =
t[Y_1]_{\mathit{pos}_1}\ldots[Y_n]_{\mathit{pos}_n}$ for fresh variables $Y_j$.
Moreover, let the variant $u'$ result from $u$ by replacing every $X \in
\V(u)$ by a fresh variable $X'$.
Thus, the  substitution $\sigma$ with
$\sigma(X') = \sigma_2(X)$ for all $X\in \V(u)$, $\sigma(X) = \sigma_1(X)$
for all $X \in \V(t)$, and $\sigma(Y_j) = u\sigma_2|_{\mathit{pos}_j}$ unifies $\CAP(t)$ and $u'$.
So there is also an mgu $\mu$ where $\sigma = \mu\tau$ for some
substitution $\tau$.
Moreover, since $\pi(u\sigma_2) = \pi(u'\sigma)$ is finite,
the term $\pi(u'\mu)$ is finite, too.
Hence, by \rDef{Estimated Dependency Graph}
there is indeed an
arc from $s \to t$ to $u \to v$.
\qed
\end{proof}

\begin{example}\label{DP Graph Example 2}{\sl
In \rEx{DP Graph Example}, the only SCC consists of (\ref{42-2}) and (\ref{42-3}). Thus,
the dependency graph processor transforms the initial DP problem
$(DP(\R),\R,\pi')$
into $(\{(\ref{42-2}),(\ref{42-3})\},\R,\pi')$, i.e., it deletes the dependency pairs
(\ref{42-1}) and (\ref{42-4}).}
\end{example}

The next processor
is based on {\em reduction pairs\/}
$(\succsim,\succ)$ where $\succsim$ and
$\succ$ are relations on finite terms. Here, $\succsim$
is
reflexive, transitive, monotonic (i.e., $s \succsim t$ implies $f(\ldots s
\ldots) \succsim f(\ldots t \ldots)$ for all function symbols $f$), and stable
(i.e., $s \succsim t$ implies $s\sigma
 \succsim t\sigma$ for all substitutions $\sigma$)
and $\succ$ is a stable  well-founded order
compatible with $\succsim$ (i.e., ${\succsim \circ \succ} \subseteq {\succ}$ or ${\succ
\circ \succsim} \subseteq {\succ}$).
There are many techniques  to
search for such relations automatically (recursive path orders,
polynomial interpretations, etc.~\cite{D87}).

For a DP problem $(\D,\R,\pi)$, we now try to find a reduction pair
$(\succsim,\succ)$ such that all filtered
 $\R$-rules are weakly decreasing (w.r.t.\
$\succsim$)
and all filtered $\D$-dependency pairs  are weakly or strictly
decreasing  (w.r.t.\
$\succsim$ or $\succ$).\footnote{We only consider
\emph{filtered} rules and dependency pairs. Thus,
$\succsim$ and $\succ$
are only used to compare those parts of terms
which remain \emph{finite} for all
instantiations in chains.}
Requiring  $\pi(\el) \succsim \pi(r)$ for all $\el \to r \in \R$
ensures that in chains $s_1\to t_1, s_2\to t_2,\ldots$ with $t_i
\sigma_i\to^*_\R s_{i+1}\sigma_{i+1}$ as in \rDef{Chain}, we have $\pi(t_i \sigma_i) \succsim
\pi(s_{i+1} \sigma_{i+1})$.
Hence, if a reduction pair
satisfies the above conditions, then the strictly decreasing dependency pairs
(i.e., those $s \to t \in \D$ where
$\pi(s) \succ \pi(t)$)
cannot occur
infinitely often in chains.
So the following processor
deletes these pairs from
$\D$.
For any TRS $\D$ and any relation $\succ$,
let  $\D_{\succ_\pi} = \{ s \to t \in \D \mid \pi(s)
\succ \pi(t) \}$.

\begin{theorem}[(Reduction Pair Processor)]
\label{DP Processors Based on Reduction Pairs}
\hspace*{.1cm} Let $(\succsim, \succ)$ be a reduction\linebreak  pair.
Then the following DP processor $\Proc$ is
sound.  For
$(\D,\R,\pi)$, $\Proc$ returns
\begin{itemize}
\item[$\bullet$] $\{ (\D \setminus \D_{\succ_\pi}, \R, \pi) \}$, if $\D_{\succ_\pi} \cup
\D_{\succsim_\pi} = \D$ and $\R_{\succsim_\pi} = \R$
\item[$\bullet$] $\{ (\D, \R, \pi) \}$, otherwise
\end{itemize}
\end{theorem}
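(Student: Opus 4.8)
The plan is to handle the two cases of $\Proc$ separately. In the second case $\Proc(\D,\R,\pi) = \{(\D,\R,\pi)\}$, so finiteness of the output is literally finiteness of the input and there is nothing to prove. So I concentrate on the first case, where $\D_{\succ_\pi} \cup \D_{\succsim_\pi} = \D$ and $\R_{\succsim_\pi} = \R$, arguing by contraposition: I assume an infinite $(\D,\R,\pi)$-chain $s_1 \to t_1, s_2 \to t_2, \ldots$ with constructor substitutions $\sigma_i$ as in \rDef{Chain}, and derive a contradiction with either well-foundedness of $\succ$ or finiteness of $(\D \setminus \D_{\succ_\pi}, \R, \pi)$.

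The technical core is a lemma I would prove first: for every term $t$ and every substitution $\sigma$ one has $\pi(t\sigma) = \pi(t)\,\pi_\sigma$, where $\pi_\sigma$ is the substitution $X \mapsto \pi(\sigma(X))$. This follows by a straightforward induction on $t$ from \rDef{defn:af}. Together with the conditions of \rDef{Chain} --- which guarantee that $\pi(q)$ is a \emph{finite} (indeed ground) term for \emph{every} term $q$ appearing in any of the rewrite sequences $t_i\sigma_i \to_\R^* s_{i+1}\sigma_{i+1}$, and for $\pi(s_i\sigma_i)$ and $\pi(t_i\sigma_i)$ --- this identity lets me transfer the orientations from the unfiltered rules and pairs to their instances. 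For a single step $q \to_\R q'$ via $\el \to r \in \R$ at a position $\mathit{pos}$ with constructor substitution $\sigma$: if $\mathit{pos}$ lies below an argument position that $\pi$ removes, then $\pi(q) = \pi(q')$, so $\pi(q) \succsim \pi(q')$ by reflexivity; otherwise $\pi(\el\sigma) = \pi(\el)\pi_\sigma$ and $\pi(r\sigma) = \pi(r)\pi_\sigma$ occur at corresponding positions of the finite terms $\pi(q)$ and $\pi(q')$, so $\pi_\sigma$ maps $\V(\pi(\el)) \cup \V(\pi(r))$ to finite terms, and extending it to a substitution over finite terms and applying stability to $\pi(\el) \succsim \pi(r)$ (which holds as $\R_{\succsim_\pi} = \R$) gives $\pi(\el\sigma) \succsim \pi(r\sigma)$, whence monotonicity of $\succsim$ propagates this up to $\pi(q) \succsim \pi(q')$. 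Iterating over the steps of each $\to_\R^*$ and using reflexivity and transitivity of $\succsim$ yields $\pi(t_i\sigma_i) \succsim \pi(s_{i+1}\sigma_{i+1})$ for all $i$; the same finiteness-plus-stability argument applied to each $s_i \to t_i \in \D = \D_{\succ_\pi} \cup \D_{\succsim_\pi}$ gives $\pi(s_i\sigma_i) \succ \pi(t_i\sigma_i)$ when $s_i \to t_i \in \D_{\succ_\pi}$ and $\pi(s_i\sigma_i) \succsim \pi(t_i\sigma_i)$ otherwise.

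Chaining all these inequalities gives an infinite sequence $\pi(s_1\sigma_1), \pi(t_1\sigma_1), \pi(s_2\sigma_2), \pi(t_2\sigma_2), \ldots$ in which each consecutive pair is related by $\succsim$, and by $\succ$ precisely at the positions corresponding to dependency pairs in $\D_{\succ_\pi}$. If infinitely many $s_i \to t_i$ lie in $\D_{\succ_\pi}$, then --- collapsing the finite blocks of $\succsim$-steps between consecutive strict steps by transitivity of $\succsim$ and absorbing them into the adjacent strict step by compatibility (${\succsim \circ \succ} \subseteq {\succ}$ or ${\succ \circ \succsim} \subseteq {\succ}$) --- one extracts an infinite strictly $\succ$-decreasing sequence, contradicting well-foundedness of $\succ$. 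Otherwise only finitely many $s_i \to t_i$ are in $\D_{\succ_\pi}$, so there is an $N$ with $s_i \to t_i \in \D \setminus \D_{\succ_\pi}$ for all $i \geq N$; the suffix $s_N \to t_N, s_{N+1} \to t_{N+1}, \ldots$ still meets all conditions of \rDef{Chain} (they are inherited by suffixes) and is thus an infinite $(\D \setminus \D_{\succ_\pi}, \R, \pi)$-chain, contradicting finiteness of $(\D \setminus \D_{\succ_\pi}, \R, \pi)$. Either way we have a contradiction, so $(\D,\R,\pi)$ is finite and $\Proc$ is sound.

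I expect the main obstacle to be the bookkeeping forced by infinite terms: $\succsim$ and $\succ$ are defined only on finite terms, so every appeal to stability and monotonicity must be backed by the finiteness of the relevant \emph{filtered} instances. Making this precise is exactly what the finiteness clauses of \rDef{Chain} (covering all intermediate terms of the rewrite sequences, not just their endpoints) together with the commutation identity $\pi(t\sigma) = \pi(t)\pi_\sigma$ are for; once those are in place, the remainder is the familiar reduction-pair argument carried over from the finitary dependency pair framework.
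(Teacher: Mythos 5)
Your proposal is correct and follows essentially the same route as the paper's proof: the commutation identity $\pi(t\sigma)=\pi(t)\pi(\sigma)$, stability to orient the filtered instances of the pairs, monotonicity plus the finiteness clauses of \rDef{Chain} to get $\pi(t_i\sigma_i)\succsim\pi(s_{i+1}\sigma_{i+1})$ along the $\R$-reductions, and then compatibility and well-foundedness of $\succ$ to conclude that a suffix of the chain is an infinite $(\D\setminus\D_{\succ_\pi},\R,\pi)$-chain. You merely spell out in more detail (e.g., the case of rewrite positions erased by $\pi$, and why stability is applicable on the finite filtered terms) what the paper compresses into ``by monotonicity and stability of $\succsim$''.
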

\begin{proof}
We prove this theorem by contradiction, i.e., we assume that
$(\D,\R,\pi)$ is infinite and then proceed to show that
$(\D \setminus \D_{\succ_\pi}, \R, \pi)$ has to be infinite, too.

From the assumption that $(\D,\R,\pi)$ is infinite, we know that
there is an infinite $(\D,\R,\pi)$-chain
$s_1\to t_1, s_2 \to t_2, \ldots$ with
$t_i\sigma_i \to_\R^* s_{i+1}\sigma_{i+1}$.
For any term $t$ we have $\pi(t\sigma) = \pi(t)\pi(\sigma)$ where
$\pi(\sigma)(x) = \pi(\sigma(x))$ for all $x \in \V$. So by
stability of $\succ $ and $\succsim$, $\D_{\succ_\pi} \cup\D_{\succsim_\pi} = \D$ implies
\begin{equation}
\label{DP decrease}
\pi(s_i\sigma_i) = \pi(s_i)\pi(\sigma_i) \, \sorsuccsim \,
\pi(t_i)\pi(\sigma_i) =\pi(t_i\sigma_i).
\end{equation}
Note that $\pi(s_i \sigma_i)$ and $\pi(t_{i} \sigma_{i})$ are
finite. Thus, comparing them with $\succsim$
is  possible.

\noindent
Similarly,
by the
observation $\pi(t\sigma) = \pi(t)\pi(\sigma)$  we also get that
$t_i\sigma_i \to_\R^* s_{i+1}\sigma_{i+1}$ implies
$\pi(t_i\sigma_i) \to_{\pi(\R)}^* \pi(s_{i+1}\sigma_{i+1})$.
As $\R_{\succsim_\pi} = \R$
means that  $\pi(\R)$'s rules
are decreasing w.r.t.\
$\succsim$, by
monotonicity and stability of $\succsim $ we get
$\pi(t_i\sigma_i) \succsim \pi(s_{i+1}\sigma_{i+1})$.
With (\ref{DP decrease}), this implies
$\pi(s_1\sigma_1) \sorsuccsim \pi(t_1\sigma_1) \succsim \pi(s_2\sigma_2) \sorsuccsim
\pi(t_2\sigma_2) \succsim \ldots\;$
As $\succ$ is compatible with $\succsim$ and well founded,
$\pi(s_i\sigma_i) \succ \pi(t_i\sigma_i)$
only holds for finitely many $i$. So  $s_j \to t_j, s_{j+1} \to t_{j+1},\ldots $ is an infinite
$(\D \setminus \D_{\succ_\pi}, \R, \pi)$ chain for some $j$ and thus, the
DP problem $(\D \setminus \D_{\succ_\pi}, \R, \pi)$ is infinite.
\qed
\end{proof}

\begin{example} {\sl
For the DP problem $(\{(\ref{42-2}),(\ref{42-3})\},\R,\pi')$ in
\rEx{DP Graph Example 2}, one can easily find a reduction pair\footnote{For
example, one can use the polynomial interpretation 
$|{\sf P}_{in}(t_1)| = |{\sf p}_{in}(t_1)| = |{\sf U}_1(t_1,t_2)|
= |{\sf u}_1(t_1,t_2)| = |{\sf u}_2(t_1,t_2,t_3)| = |t_1|$, $|{\sf
p}_{out}(t_1,t_2)| = |t_2|$, $|{\sf f}(t_1)| = |t_1| + 1$, and $|{\sf g}(t_1)| = 0$.} where the dependency pair (\ref{42-3}) is strictly decreasing and where (\ref{42-2}) and
all rules are weakly decreasing after applying the filter $\pi'$:
\[ \begin{array}{rcl@{\hspace*{-1.2cm}}rcl}
{\sf P}_{in}({\sf f}(X)) & \succsim &{\sf U}_1({\sf p}_{in}({\sf f}(X)),X)&{\sf p}_{in}(X) & \succsim & {\sf p}_{out}(X,X)\\
{\sf U}_1({\sf p}_{out}({\sf f}(X), {\sf f}(Z)),X) &\succ& {\sf
P}_{in}(Z)&{\sf p}_{in}({\sf f}(X)) & \succsim &{\sf u}_1({\sf p}_{in}({\sf
f}(X)),X)\\
&&&{\sf u}_1({\sf p}_{out}({\sf f}(X), {\sf f}(Z)),X) &\succsim& {\sf u}_2({\sf
p}_{in}(Z),X,Z)\\
&&&{\sf u}_2({\sf p}_{out}(Z,{\sf g}(Y)),X,Z) &\succsim& {\sf p}_{out}({\sf
f}(X),{\sf g}(Y))
\end{array}\]
Thus, the reduction pair processor can remove (\ref{42-3}) from the DP problem which
results in $(\{(\ref{42-2})\},\R,\pi')$. By applying the dependency
graph processor again, one obtains  the empty set of DP problems, since now the
estimated dependency graph only has the node (\ref{42-2}) and no arcs. This proves that the initial DP problem
$(DP(\R),\R,\pi')$ from
\rEx{choosing AF} is finite and thus,  the logic program from
\rEx{ex_not_sm} terminates for all queries $Q$ where $\pi(Q)$ is finite
and ground. Note that termination of the non-well-moded  program from
\rEx{append} can be shown analogously since
finiteness of the initial DP problem can be proved in the same way. The only
difference is that we obtain {\sf g} instead of ${\sf g}(Y)$ in the
last inequality above.}
\end{example}

As in \rThm{DP Processor based on the Dependency Graph} and \ref{DP Processors Based on Reduction Pairs},
many other existing DP processors \cite{LPAR04} can easily be
adapted   to infinitary
constructor rewriting as well. Finally, one can also use
the
following processor to transform  a
DP problem $(\D,\R,\pi)$ for infinitary constructor rewriting into a DP
problem $(\pi(\D),\pi(R),id)$ for ordinary rewriting. Afterwards,
\emph{any} existing DP
processor for \emph{ordinary}  rewriting becomes applicable.\footnote{If $(\D,\R,\pi)$ results
from the transformation of a logic program,
then for $(\pi(\D),\pi(R),id)$ it is even sound to apply the existing
DP processors for \emph{innermost} rewriting \cite{LPAR04,JAR06}.
These processors are usually more powerful than those for ordinary rewriting.
The framework presented in \cite{LPAR04} even supports constructor rewriting.}
Since any termination technique for TRSs
can immediately be formulated as a DP processor \cite[Thm.\ 36]{LPAR04}, now
any termination technique for ordinary rewriting
can be directly used for infinitary constructor
rewriting as well.

\begin{theorem}[(Argument Filter Processor)]
\label{Argument Filter Processor} \hspace*{.3cm}
Let $\Proc(\,(\D,\R,\pi)\,) =\linebreak
\{(\pi(\D),\pi(\R),id)\}$ where  $id(f) = \{1,\ldots,n\}$
for all $f/n$. Then $\Proc$ is
sound.
\end{theorem}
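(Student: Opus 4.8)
The plan is to prove soundness by contraposition. By the definition of a sound DP processor, it suffices to show: if the single output problem $(\pi(\D),\pi(\R),id)$ is finite, then $(\D,\R,\pi)$ is finite; equivalently, from an infinite $(\D,\R,\pi)$-chain I will construct an infinite $(\pi(\D),\pi(\R),id)$-chain. Before that I would record two easy facts. \emph{Fact 1:} for every term $t$ and substitution $\sigma$ one has $\pi(t\sigma)=\pi(t)\,\pi(\sigma)$, where $\pi(\sigma)(x):=\pi(\sigma(x))$; this is the identity already used in the proof of \rThm{DP Processors Based on Reduction Pairs}, checked by comparing the symbols of both sides position by position (and it remains valid for infinite terms). \emph{Fact 2:} an argument filter never changes the root symbol of a term, so the left-hand sides of $\pi(\R)$ have exactly the roots of the left-hand sides of $\R$. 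Hence $\pi(\R)$ is a genuine TRS (no left-hand side is a variable), $\pi(\R)$ and $\R$ have the same defined symbols, and therefore every $\R$-constructor is a $\pi(\R)$-constructor; in particular, if $\rho:\V\to\T^\infty(\Sigma_C,\V)$ then $\pi(\rho):\V\to\T^\infty((\Sigma_\pi)_C,\V)$, i.e., $\pi$ maps constructor substitutions for $\R$ to constructor substitutions for $\pi(\R)$.

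The crucial intermediate step is: whenever $u\to_\R v$ with a constructor substitution $\rho$ at redex position $p$, then either $\pi(u)=\pi(v)$ — the case where $p$ lies below an argument deleted by $\pi$ — or else $\pi(u)\to_{\pi(\R)}\pi(v)$ via the constructor substitution $\pi(\rho)$. In the second case $p$ descends to a position $p'$ in $\pi(u)$ with $\pi(u)|_{p'}=\pi(u|_p)=\pi(\el\rho)=\pi(\el)\pi(\rho)$ and $\pi(v)=\pi(u)[\pi(r)\pi(\rho)]_{p'}$, using Fact 1; and $\pi(\el)\to\pi(r)\in\pi(\R)$ with $\pi(\rho)$ a $\pi(\R)$-constructor substitution by Fact 2. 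Iterating this along a rewrite sequence $q_0\to_\R q_1\to_\R\cdots\to_\R q_m$ and discarding the steps that become trivial yields $\pi(q_0)\to^*_{\pi(\R)}\pi(q_m)$, where moreover every term occurring in the filtered sequence is one of $\pi(q_0),\ldots,\pi(q_m)$.

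Now I would take an infinite $(\D,\R,\pi)$-chain $s_1\to t_1,\,s_2\to t_2,\ldots$ with constructor substitutions $\sigma_i$ as in \rDef{Chain}, and set $s_i':=\pi(s_i)$, $t_i':=\pi(t_i)$, $\sigma_i':=\pi(\sigma_i)$. Then $s_i'\to t_i'\in\pi(\D)$ by definition of $\pi(\D)$; each $\sigma_i'$ is a constructor substitution for $\pi(\R)$ by Fact 2; by Fact 1 and the crucial step, $t_i'\sigma_i'=\pi(t_i\sigma_i)\to^*_{\pi(\R)}\pi(s_{i+1}\sigma_{i+1})=s_{i+1}'\sigma_{i+1}'$; and since $id$ deletes nothing, the remaining requirement of \rDef{Chain} is only that $s_i'\sigma_i'$, $t_i'\sigma_i'$, and all intermediate terms of this rewrite sequence be finite ground terms over $\Sigma_\pi$ — but those are exactly the terms $\pi(s_i\sigma_i)$, $\pi(t_i\sigma_i)$, $\pi(q_j)$ that the original $(\D,\R,\pi)$-chain already guaranteed to lie in $\T(\Sigma_\pi)$. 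Hence $s_1'\to t_1',\,s_2'\to t_2',\ldots$ is an infinite $(\pi(\D),\pi(\R),id)$-chain, so $(\pi(\D),\pi(\R),id)$ is not finite, and soundness follows.

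The main obstacle is the crucial step of the second paragraph: one must carefully track the rewrite position through the filter and handle the subcase where the redex is erased (giving $\pi(u)=\pi(v)$), and one must be sure that applying $\pi$ to a constructor substitution again yields a constructor substitution — which rests entirely on Fact 2, namely that argument filters preserve root symbols, so the partition of the signature into defined symbols and constructors is unchanged by filtering. Everything else is routine bookkeeping with the identity $\pi(t\sigma)=\pi(t)\pi(\sigma)$.
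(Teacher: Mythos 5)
Your proposal is correct and follows essentially the same route as the paper's proof: map an infinite $(\D,\R,\pi)$-chain through $\pi$ using the identity $\pi(t\sigma)=\pi(t)\pi(\sigma)$ to obtain an infinite $(\pi(\D),\pi(\R),id)$-chain, with finiteness of all filtered terms already guaranteed by the chain conditions. You merely spell out two points the paper leaves implicit, namely that filtered rewrite steps either survive as $\pi(\R)$-steps or collapse to identities, and that filtering preserves the defined/constructor partition so $\pi(\sigma_i)$ remains a constructor substitution.
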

\begin{proof}
If $s_1 \to t_1, s_2 \to t_2, \ldots$ is an infinite
$(\D,\R, \pi)$-chain  with the substitutions
$\sigma_i$
as in \rDef{Chain},
then $\pi(s_1) \to \pi(t_1), \pi(s_2) \to \pi(t_2), \ldots$ is
an infinite $(\pi(\D), \pi(\R), id)$-chain with the substitutions
$\pi(\sigma_i)$. The reason is that $t_i\sigma_i \to_\R^* s_{i+1}\sigma_{i+1}$ implies
$\pi(t_i)\pi(\sigma_i) = \pi(t_i\sigma_i) \to_{\pi(\R)}^* \pi(s_{i+1}\sigma_{i+1}) =
\pi(s_{i+1})\pi(\sigma_{i+1})$. Moreover,
by  \rDef{Chain},
all terms in the rewrite sequence $\pi(t_i\sigma_i) \to_{\pi(\R)}^* \pi(s_{i+1}\sigma_{i+1})$ are finite.
\qed
\end{proof}

\section{Refining the Argument Filter}
\label{sec:refining}

In
\rSec{sec_translation} we introduced a new transformation
from logic programs $\P$ to TRSs $\R_\P$ and showed that to prove
the termination of a class of queries for $\P$, it is sufficient to
analyze the termination behavior of $\R_\P$.
Our criterion to prove termination of logic programs was summarized in
Corollary \ref{Termination of Logic Programs by Dependency Pairs}.

The transformation itself is trivial to automate
and as shown in \rSec{sec:termination}, existing
systems implementing the DP method
can easily be adapted to
prove termination of infinitary constructor rewriting.
The missing part in the
automation is the generation of a suitable argument filter from
the user input, cf.\ Task (I) at the end of \rSec{Dependency Pairs for Infinitary Rewriting}.
After presenting the general algorithm to refine argument filters in
\rSec{Refinement
Algorithm for Argument Filters}, we introduce suitable heuristics in
Sections \ref{Simple Refinement Heuristics} and \ref{Type-Based Refinement
Heuristic}.
Finally, we extend the general algorithm
for the refinement of argument filters by integrating a mode analysis based
on argument filters
in \rSec{sec:afsmoding}. This allows us to handle logic programs
where a predicate is used with several different modes (i.e., 
where different occurrences of the
same predicate have different
input and output positions). The usefulness of the different heuristics from
Sections \ref{Simple Refinement Heuristics} and \ref{Type-Based Refinement
Heuristic} and the power of our extension in \rSec{sec:afsmoding} will be
evaluated empirically in \rSec{sec:experiments}.

\subsection{Refinement Algorithm for Argument Filters}\label{Refinement
Algorithm for Argument Filters}

In our approach of Corollary \ref{Termination of Logic Programs by Dependency
Pairs}, the user supplies an
initial argument filter $\pi$ to describe the set of queries whose
termination should be proved.
There are two issues with this approach.
First, while argument filters provide the user with a more expressive tool to
characterize classes of queries, termination problems are often rather posed in the
form of a moding function for compatibility reasons.
Fortunately, it is straightforward to extract an appropriate initial argument filter from such
a moding function $m$: we define $\pi(p) = \{i \mid m(p,i) = \mIn\}$ for all $p \in \Delta$
and $\pi(f/n) = \{1,\ldots,n\}$ for all function symbols $f/n \in \Sigma$.

Second, and less trivially, the
variable condition $\V(\pi(r)) \subseteq \V(\pi(\el))$ for all rules $\el\to r \in
DP(\R_\P) \cup \R_\P$ does
not necessarily hold for the argument filter $\pi$.
Thus, a refinement $\pi'$ of $\pi$ must be found such that the
variable condition holds for $\pi'$. Then, our method from Corollary
\ref{Termination of Logic Programs by Dependency Pairs} 
can be applied.

Unfortunately, there are often
many refinements $\pi'$ of a given filter $\pi$ such that the variable condition holds.
The right choice of $\pi'$ is crucial for the success of the termination analysis.
As already mentioned in \rEx{choosing AF},
the argument filter that simply filters away all arguments of all function symbols
in the TRS, i.e., that has $\pi'(f) = \varnothing$ for all $f \in \Sigma_\P$,
is a refinement of every argument filter $\pi$ and it
obviously satisfies the variable condition. But
of course, only termination of trivial logic programs can be shown when
using this refinement $\pi'$.

\begin{example}
\label{ex_filter_choices}
{\sl We consider
the logic program of \rEx{ex_not_sm}. As shown in \rEx{ex_not_sm_we},
the following rule results (among others) from the
translation of the logic program.
\begin{align}
\rrule{32-2}{{\sf p}_{in}({\sf f}(X), {\sf g}(Y)) }{ {\sf u}_1({\sf p}_{in}({\sf
f}(X),{\sf f}(Z)),X,Y)}
\end{align}
Suppose that we want to prove termination of all queries $\Fp(t_1,t_2)$ where
both $t_1$ and $t_2$ are (finite) ground terms. This corresponds to the moding
$m(\Fp,1) = m(\Fp,2) = \mIn$, i.e., to the initial argument filter $\pi$
with $\pi(\Fp) = \{1,2\}$.

In Corollary \ref{Termination of Logic Programs by Dependency
Pairs}, we extend $\pi$ to $\Fp_{in}$ and ${\sf P}_{in}$ by defining it to be
$\{1,2\}$ as well.
In order to prove termination, we now have to find a refinement $\pi'$ of
$\pi$ such that $\pi'(DP(\R_\P))$ and $\pi'(\R_\P)$ satisfy the variable
condition and such that there is no  infinite $(DP(\R_\P),\R_\P,\pi')$-chain.

Let us first try to define $\pi' = \pi$. Then
$\pi'$ does not filter away any arguments. Thus,
$\pi'({\sf p}_{in}) = \{1, 2\}$, $\pi'({\sf u}_1) = \{1, 2, 3\}$, and
$\pi'({\sf f}) = \pi'({\sf g}) = \{1\}$.
But then clearly, the variable condition does not hold as $Z$ occurs in
$\pi'(r)$ but not in $\pi'(\el)$ if $\el \to r$ is Rule (\ref{32-2}) above.

So we have to choose a different refinement $\pi'$.
There remain three choices how we can refine $\pi$ to $\pi'$ in order to
filter away the variable $Z$ in the right-hand side of Rule (\ref{32-2}):
we can filter away the first argument of {\sf f} by defining $\pi'({\sf f}) = \varnothing$,
we can filter away ${\sf p}_{in}\!$'s second argument by defining $\pi({\sf p}_{in}) = \{1\}$,
or we can filter away the first argument of ${\sf u}_1$ by defining $\pi({\sf u}_1) = \{2,3\}$.
}
\end{example}

The decision which of the three choices above should be taken must be done
by a suitable \emph{heuristic}. The following definition gives a formalization
for such heuristics. Here
we assume that the choice only depends on the term $t$ containing a
variable that leads to a violation of the variable condition and on the
position $\mathit{pos}$
of the variable. Then a \emph{refinement heuristic} $\rho$ is a function such
that $\rho(t,\mathit{pos})$ returns a function symbol $f/n$ and an argument
position $i \in \{1,\ldots,n\}$ such that filtering away the $i$-th argument
of $f$ would erase the position $\mathit{pos}$ in the term $t$. For instance,
if $t$ is the right-hand side ${\sf u}_1({\sf p}_{in}({\sf
f}(X),{\sf f}(Z)),X,Y)$ of Rule (\ref{32-2}) and $\mathit{pos}$ is the
position of the variable $Z$ in this term (i.e., $\mathit{pos} = 121$), then
$\rho(t,\mathit{pos})$ can be either $(\Ff,1)$, $(\Fp_{in},2)$, or
$({\sf u}_1,1)$.

\begin{definition}[(Refinement Heuristic)]\label{Refinement Heuristic}
A
\emph{refinement heuristic} is a mapping
  $\rho : \T(\Sigma_\P,\V) \times
\N^* \rightarrow \Sigma_\P \times \N$
such that whenever $\rho(t,\mathit{pos}) = (f,i)$, then there is a position $\mathit{pos}'$ with
$\mathit{pos}' \, i$ being a prefix of $\mathit{pos}$ and
$\rt(t|_{\mathit{pos}'}) = f$.
\end{definition}

Given a TRS $\R_\P$ resulting from the transformation of a
logic program $\P$ and a refinement heuristic $\rho$, \rAlg{alg:refinement}
computes a refinement $\pi'$ of a given argument filter
$\pi$ such that the variable condition holds
for $DP(\R_\P)$ and $\R_\P$.
\begin{algorithm}
\label{alg:refinement}
\KwIn{argument filter $\pi$, refinement heuristic $\rho$, TRS $\R_\P$}
\KwOut{refined argument filter $\pi'$ such that $\pi'(DP(\R_\P))$ and
$\pi'(\R_\P)$ satisfy the variable condition}
\begin{itemize}
\item[\bf 1.] $\pi' := \pi$ \vspace*{.2cm}
\item[\bf 2.] \parbox[t]{10cm}{If there is a rule $\el \to r$ from $DP(\R_\P)  \cup
\R_\P$\\ and a
position $\mathit{pos}$ with $r|_{\mathit{pos}} \in \V(\pi'(r)) \setminus
\V(\pi'(\el))$,  then:} \vspace*{.2cm}
\begin{itemize}
\item[\bf 2.1.] Let $(f,i)$ be the result of $\rho(r,\mathit{pos})$, i.e., $(f,i) :=
\rho(r,\mathit{pos})$. \vspace*{.2cm}
\item[\bf 2.2.]  \parbox[t]{10cm}{Modify $\pi'$ by removing $i$ from $\pi'(f)$, i.e., $\pi'(f) :=
\pi'(f) \setminus \{i\}$.\\For all other symbols from $\Sigma_\P$, $\pi'$
remains unchanged.} \vspace*{.2cm}
\item[\bf 2.3.] Go back to {\bf Step 2}. \vspace*{.2cm}
\end{itemize}
\end{itemize}
\caption{General Refinement Algorithm}
\end{algorithm}

Termination of this algorithm is obvious as $\R_\P$ is finite and
each change of the argument filter in {\bf Step 2.2} reduces the number of unfiltered
arguments. Note also that $\rho(r,\mathit{pos})$ is always defined since $\mathit{pos}$ is never the
top position $\varepsilon$. The reason is that the TRS $\R_\P$ is
non-collapsing (i.e., it
has no right-hand side consisting just of a variable).
The algorithm is correct as it only terminates if the
variable condition holds for every dependency pair and every rule.

Note that if $\pi'(F) = \pi'(f)$ for every defined function symbol $f$
and if we do not filter away the first argument position of the
function symbols $u_{c,i}$,
i.e., $1 \in \pi'(u_{c,i})$, then the satisfaction of the variable
condition for $\R_\P$ implies that the variable condition for $DP(\R_\P)$
holds as well. Thus, for heuristics that guarantee the above properties,
we only have to consider $\R_\P$ in the above algorithm.

\subsection{Simple Refinement Heuristics}\label{Simple Refinement Heuristics}

The following definition introduces
two simple possible refinement
heuristics. If a term $t$ has a position $\mathit{pos}$ with
a
variable  that violates the variable condition, then
these heuristics filter away the respective argument position
of the \emph{innermost} resp.\ the \emph{outermost} function
symbol above the variable.

\begin{definition}[(Innermost/Outermost Refinement Heuristic)]\label{iorh}
Let $t$ be a term and let ``$\mathit{pos} \, i$'' resp.\
``$i \, \mathit{pos}$''
 be a position in $t$.
The \emph{innermost refinement heuristic} $\rho_{im}$ is defined as follows:
\[
\rho_{im}(t, \mathit{pos} \, i) = (\rt(t|_{\mathit{pos}}), i)
\]
The \emph{outermost refinement heuristic} $\rho_{om}$ is defined as follows:
\[
\rho_{om}(t, i \, \mathit{pos}) = (\rt(t), i)
\]
\end{definition}

So if $t$ is again the term ${\sf u}_1({\sf p}_{in}({\sf
f}(X),{\sf f}(Z)),X,Y)$, then the innermost
refinement heuristic would result in
$\rho_{im}(t, 121) = (\Ff,1)$ and the outermost refinement heuristic gives
$\rho_{om}(t, 121) = ({\sf u}_1,1)$.

Both heuristics defined above are simple but problematic, as shown in \rEx{ex:imom}.
Filtering the innermost function symbol often results in the removal of an
argument position that is relevant for termination of another rule.
Filtering
the outermost function symbol excludes the possibility of
filtering the arguments of function symbols from
the signature $\Sigma$ of the original logic program. Moreover, the outermost
heuristic also often removes the first argument of
some $u_{c,i}$-symbol. Afterwards, a successful termination proof is hardly
possible anymore.

\begin{example}
\label{ex:imom}
{\sl Consider again the logic program of \rEx{ex_not_sm}
which was transformed into the following TRS in  \rEx{ex_not_sm_we}.
\begin{align}
\rrule{32-1}{
{\sf p}_{in}(X,X) }{ {\sf p}_{out}(X,X)}\\
\rrule{32-2}{
{\sf p}_{in}({\sf f}(X), {\sf g}(Y)) }{ {\sf u}_1({\sf p}_{in}({\sf f}(X),
{\sf f}(Z)),X,Y)} \\
\rrule{32-3}{
{\sf u}_1({\sf p}_{out}({\sf f}(X), {\sf f}(Z)),X,Y) }{ {\sf u}_2({\sf p}_{in}(Z,{\sf g}(Y)),X,Y,Z)}\\
\rrule{32-4}{
{\sf u}_2({\sf p}_{out}(Z,{\sf g}(Y)),X,Y,Z) }{ {\sf p}_{out}({\sf f}(X),{\sf g}(Y))}
\end{align}
As shown in \rEx{DP example} we obtain the following
dependency pairs for the above rules.
\begin{align}
\rrule{42-1}{
{\sf P}_{in}({\sf f}(X), {\sf g}(Y)) }{ {\sf P}_{in}({\sf f}(X), {\sf
f}(Z))}\\
\rrule{42-2}{
{\sf P}_{in}({\sf f}(X), {\sf g}(Y)) }{
{\sf U}_1({\sf p}_{in}({\sf f}(X), {\sf f}(Z)),X,Y)}\\
\rrule{42-3}{
{\sf U}_1({\sf p}_{out}({\sf f}(X), {\sf f}(Z)),X,Y) }{ {\sf P}_{in}(Z,{\sf
g}(Y))}\\
\rrule{42-4}{{\sf U}_1({\sf p}_{out}({\sf f}(X), {\sf f}(Z)),X,Y) }{
{\sf U}_2({\sf p}_{in}(Z,{\sf g}(Y)),X,Y,Z)}
\end{align}
As in \rEx{ex_filter_choices} we want to prove termination of $\Fp(t_1,t_2)$
for all ground terms $t_1$ and $t_2$. Hence,
we start with the argument filter $\pi$ that does not filter away
any arguments, i.e., $\pi(f/n) = \{1, \ldots, n\}$ for all $f \in \Sigma_\P$.
We will now illustrate \rAlg{alg:refinement}
using our two heuristics.

Using the innermost refinement heuristic $\rho_{im}$ in the algorithm, for the
second DP (\ref{42-2})
we get $\rho_{im}({\sf U}_1({\sf p}_{in}({\sf f}(X), {\sf f}(Z)),X,Y),121) =
({\sf f},1)$. This
requires us to filter away the only argument of ${\sf f}$, i.e., $\pi'({\sf f}) = \varnothing$.
Now $Z$ is contained in the right-hand side of the third DP (\ref{42-3}),
but not in the filtered left-hand side anymore. Thus, we now have to filter away the
first argument of ${\sf P}_{in}$,
i.e., $\pi'({\sf P}_{in}) = \{2\}$.
Due to the DP (\ref{42-2}), we now also
have
to remove the second argument $X$ of ${\sf U}_1$,
i.e., $\pi'({\sf U}_1) =
\{1,3\}$.
Consequently, we lose the information about
finiteness of ${\sf p}$'s first argument and therefore cannot show termination
of the program anymore.
More precisely, there is an infinite $(DP(\R_\P),\R_\P,\pi')$-chain consisting of
the dependency pairs (\ref{42-2}) and (\ref{42-3}) using a substitution that
instantiates the variables $X$ and $Z$  by the infinite
term $\Ff(\Ff(\ldots))$. This is indeed a chain since all infinite terms are
filtered away by the refined argument filter $\pi'$. Hence, the termination
proof fails.

Using the outermost refinement heuristic $\rho_{om}$ instead, for the second
DP (\ref{42-2})
we get $\rho_{om}({\sf U}_1({\sf p}_{in}({\sf f}(X), {\sf f}(Z)),X,Y),121) = ({\sf U}_1,1)$,
i.e., $\pi'({\sf U}_1) = \{2,3\}$.
Considering the third DP (\ref{42-3}) we have to filter away 
the first argument of ${\sf P}_{in}$,
i.e., $\pi'({\sf P}_{in}) = \{2\}$.
Due to the DP (\ref{42-2}), we now also
have
to remove the second argument of ${\sf U}_1$, i.e.,
$\pi'({\sf U}_1) = \{3\}$. So we obtain the same infinite chain
as above since we lose the  information about
finiteness of ${\sf p}$'s first argument. Hence, we again cannot show termination.
}
\end{example}

A slightly improved version of the outermost refinement heuristic
can be achieved by disallowing the filtering of the first arguments of the
symbols
$u_{c,i}$ and $U_{c,i}$.

\begin{definition}[(Improved Outermost Refinement Heuristic)]\label{def:omprime}
Let $t$ be a term and $\mathit{pos}$ be a position in $t$.
The \emph{improved outermost refinement heuristic} $\rho_{om'}$ is defined as:
\[
\rho_{om'}(t, i \, \mathit{pos}) = \begin{cases}
\rho_{om'}(t|_i, \mathit{pos})   & \text{if } i = 1 \text{ and either }
\rt(t) =
u_{c,i} \text{ or } \rt(t) =
U_{c,i}\\
(\rt(t), i)    & \text{otherwise }
\end{cases}
\]
\end{definition}

\begin{example}
\label{ex:omprime}
{\sl Reconsider \rEx{ex:imom}. Using the improved outermost refinement
heuristic, for the second rule (\ref{32-2})
we get $\rho_{om'}({\sf u}_1({\sf p}_{in}({\sf f}(X), {\sf f}(Z)),X,Y),121) =
\rho_{om'}({\sf p}_{in}({\sf f}(X), {\sf f}(Z)),21) =
({\sf p}_{in},2)$
requiring us to filter away the second argument of ${\sf p}_{in}$, i.e.,
$\pi'({\sf p}_{in}) = \{1\}$. 
Consequently, the algorithm filters away the third arguments of both ${\sf u}_1$ and ${\sf u}_2$,
i.e., $\pi'({\sf u}_1) = \{1,2\}$ and $\pi'({\sf u}_2) = \{1,2,4\}$.
Now the variable condition holds for $\R_\P$.
Therefore, by defining
$\pi'({\sf P}_{in}) = \pi'({\sf p}_{in})$,
$\pi'({\sf u}_1) = \pi'({\sf U}_1)$, and $\pi'({\sf u}_2) =\pi'({\sf U}_2)$,
the variable condition also holds for $DP(\R_\P)$. (As mentioned at the end
of
\rSec{Refinement
Algorithm for Argument Filters},
by filtering tuple symbols $F$ in the same way as the original symbols $f$ and
by ensuring $1 \in \pi'(u_{c,i})$, it suffices to check the variable condition
only for the rules $\R_\P$ and not for the dependency pairs $DP(\R_\P)$.)
This argument filter corresponds to the one chosen in
\rEx{choosing AF} and as shown in \rSec{Automation by Adapting the DP
Framework} one can now easily prove termination.
}
\end{example}

\subsection{Type-Based Refinement Heuristic}\label{Type-Based Refinement Heuristic}

The improved outermost heuristic from \rSec{Simple Refinement Heuristics} only
filters symbols of the form $p_{in}$, $p_{out}$, $P_{in}$, and $P_{out}$.
Therefore, the generated argument filters are similar to modings.
However, there are cases  where one needs to filter function symbols from the original logic
program, too.
In this section we show how to obtain a more
powerful refinement heuristic using information from inferred types.

There are many approaches to (direct) termination analysis of logic programs that use type
information in order to guess suitable ``norms'' or ``ranking functions'', e.g.,
\cite{Bossi92,CodishTOPLAS,Decorteetal93,Martin96}. In contrast to most of
these approaches, we do not consider typed logic programs, but untyped ones
and we use types only as a basis for a heuristic to prove termination of the
transformed TRS. To our knowledge, this is the first time that types are
considered in the transformational approach to termination analysis of logic
programs.

\begin{example}\label{append15}
{\sl Now we regard the logic program from  \rEx{append}. The rules after the transformation of
\rDef{unmodedTranslation} are:
\begin{align}
\label{57-1}\rrule{32-1}{ 
{\sf p}_{in}(X,X) }{ {\sf p}_{out}(X,X)}\\
\label{57-2}\rrule{32-2}{ 
{\sf p}_{in}({\sf f}(X), {\sf g}(Y)) }{{\sf u}_1({\sf p}_{in}({\sf f}(X),
{\sf f}(Z)),X,Y)} \\
\lrule{57-3}{
{\sf u}_1({\sf p}_{out}({\sf f}(X), {\sf f}(Z)),X,Y) }{ {\sf u}_2({\sf
p}_{in}(Z,{\sf g}(W)),X,Y,Z)}\\
\lrule{57-4}{
{\sf u}_2({\sf p}_{out}(Z,{\sf g}(W)),X,Y,Z) }{ {\sf p}_{out}({\sf f}(X),{\sf g}(Y))}
\end{align}
Using the improved outermost refinement heuristic $\rho_{om'}$ we start off as
in \rEx{ex:omprime} and obtain
$\pi'({\sf p}_{in}) = \{1\}$, $\pi'({\sf u}_1) = \{1,2\}$, and $\pi'({\sf u}_2) = \{1,2,4\}$.
However, due to the last rule (\ref{57-4}) we now
get $\rho_{om'}({\sf p}_{out}({\sf f}(X),{\sf g}(Y)),21) = ({\sf p}_{out},2)$,
i.e., $\pi'({\sf p}_{out}) = \{1\}$.
Considering the third rule (\ref{57-3}),
we have to filter  ${\sf p}_{in}$ once more  and obtain
$\pi'({\sf p}_{in}) = \varnothing$.
So we again lose the information about
finiteness of ${\sf p}$'s first argument and cannot show
termination. Similar to \rEx{ex:imom}, the innermost refinement heuristic
which filters away the only argument of $\Ff$ also fails for this program.}
\end{example}

So in the example above, neither the innermost nor the (improved)
outermost refinement heuristic succeed. We therefore propose a better
heuristic which is like the innermost refinement heuristic, but
which avoids the filtering of certain arguments of original function symbols from the logic
program.
Close inspection of the cases where filtering such function symbols is
required reveals that it is not advisable to filter away ``reflexive'' arguments.
Here, we call an argument position $i$ \linebreak of a function symbol $f$
\emph{reflexive} (or ``\emph{recursive}''), 
if the arguments on position $i$ have the same ``type'' as
the whole term $f(\ldots)$ itself, cf.\ \cite{Walther94}.
A \emph{type assignment} associates a predicate $p/n$ with an $n$-tuple of
types for its arguments and, similarly, a function $f/n$ with an $(n+1)$-tuple where the last
element specifies the result type of $f$.

\begin{definition}[(Types)]
Let $\Theta$ be a set of types (i.e., a set of names).
A \emph{type assignment} $\tau$ over a signature
$(\Sigma,\Delta)$ and a set of types $\Theta$ is a mapping
$\tau : \Sigma \cup \Delta \rightarrow \Theta^*$ such that
$\tau(p/n) \in \Theta^n$ for all $p/n \in \Delta$ and $\tau(f/n) \in \Theta^{n+1}$ for all
$f/n \in \Sigma$.

Let $f/n \in \Sigma$ be  a function symbol and $\tau$ be a type assignment
with $\tau(f) = (\theta_1,\ldots,\theta_n,\theta_{n+1})$. Then
the \emph{set of reflexive
positions} of $f/n$ is $\mathit{Reflexive}_\tau(f/n) = \{ i \mid
1 \leq i \leq n \mbox{ and }  \theta_i = \theta_{n+1} \}$.
\end{definition}

To infer a suitable type assignment for a logic program,
we use the following simple algorithm. However, since we only use types as a
heuristic to find suitable argument filters, any other type assignment
would also yield a correct method for termination analysis. In other words, the choice of the type
assignment only influences the power of our method, not its soundness.
So unlike~\cite{CodishTOPLAS}, the
correctness of our approach does not depend on the
logic program or the query being well-typed.
More sophisticated type inference algorithms were presented in
\cite{DBLP:conf/sas/BruynoogheGH05,DBLP:conf/sas/CharatonikP98,GallagherPuebla02,Janssens:Bruynooghe,Lu,DBLP:conf/sas/VaucheretB02},
for example.

In our simple type inference algorithm,
we define $\simeq$ as the reflexive and
transitive closure of
the following ``similarity'' relation on the argument positions:
Two argument
positions of (possibly different) function or predicate symbols
are ``similar'' if there exists a program clause such that the argument positions
are occupied by identical variables.
 Moreover, if a term $f(\ldots)$ occurs in the $i$-th position of a function or
predicate symbol $p$, then the argument position of $f$'s result is similar to
the $i$-th argument position of $p$. (For a
function symbol $f/n$ we also consider the argument position $n+1$ which
stands for the result of the function.)
After having computed the relation $\simeq$, we then
use a type assignment which corresponds to the equivalence classes
imposed by $\simeq$. So our simple type inference algorithm
is related to
sharing
analysis~\cite{DBLP:conf/sas/BruynoogheDBDM96,DBLP:journals/jlp/CortesiF99,DBLP:conf/ppdp/LagoonS02},
i.e., the program analysis that aims at detecting program variables that in some
program execution might be bound to terms having a common variable.

\begin{example}\label{append2}
{\sl As an example, we compute a suitable type assignment 
for the logic program from  \rEx{append}:
\[\begin{array}{lll}
{\sf p}(X, X). & &\\
{\sf p}({\sf f}(X), {\sf g}(Y)) &\from& {\sf p}({\sf f}(X), {\sf f}(Z)),
{\sf p}(Z,{\sf g}(W)).
\end{array}\]
Let $\Fp_i$ denote the $i$-th argument position of $\Fp$, etc.
Then due to the first clause we obtain $\Fp_1 \simeq \Fp_2$,
since both argument positions are
occupied by the variable $X$.
Moreover, since $Z$ occurs both in the first
argument positions of $\Ff$ and $\Fp$ in the second clause, we also have
$\Fp_1 \simeq \Ff_1$. Finally, since an $\Ff$-term occurs in the first
and second argument of $\Fp$ and since a $\Fg$-term occurs in the second argument of
$\Fp$ we also have $\Ff_2 \simeq \Fp_1 \simeq \Fp_2$ and
$\Fg_2 \simeq \Fp_2$.  In other words, the relation $\simeq$
imposes the two equivalence classes $\{{\sf p}_1, {\sf p}_2, {\sf
f}_1, \Ff_2, \Fg_2\}$ and $\{{\sf g}_1\}$.
Hence, we compute a type assignment with two types $a$ and $b$ where
$a$ and $b$ correspond to
$\{{\sf p}_1, {\sf p}_2, {\sf
f}_1, \Ff_2, \Fg_2\}$ and $\{{\sf g}_1\}$, respectively.
Thus, the type assignment is defined as $\tau(\Fp) = \tau(\Ff) = (a,a)$ and $\tau(\Fg) = (b,a)$.

Note that the first argument of $\Ff$ has the same type $a$ as its result and
hence, this argument position is reflexive. On the other hand, the first
argument of $\Fg$ has a different type than its result and is therefore not reflexive.
Hence, $\mathit{Reflexive}_\tau(\Ff) = \{ 1 \}$ and
$\mathit{Reflexive}_\tau(\Fg) = \varnothing$.
}
\end{example}

Now we can define the following heuristic based
on type assignments. It is like the innermost refinement heuristic of \rDef{iorh},
but now reflexive arguments of function symbols from $\Sigma$ (i.e., from the
original logic program) are not filtered away.

\begin{definition}[(Type-based Refinement Heuristic)]
\label{def:tb1}
Let $t$ be a term,  let ``$\mathit{pos} \, i$''
 be a position in $t$, and let $\tau$ be a type assignment.
The \emph{type-based refinement heuristic} $\rho^\tau_{\mathit{tb}}$ is defined as follows:
\begin{eqnarray*}
\rho_{\mathit{tb}}^\tau(t,\mathit{pos} \, i) & = & \left\{
\begin{array}{ll}
(\rt(t|_\mathit{pos}),i) & \text{if }\rt(t|_\mathit{pos}) \notin \Sigma\text{ or }i \notin \mathit{Reflexive}_\tau(\rt(t|_\mathit{pos}))\\
\rho_{\mathit{tb}}^{\tau}(t,\mathit{pos}) & \text{otherwise}
\end{array}\right.
\end{eqnarray*}
\end{definition}

Note  that the heuristic $\rho_{\mathit{tb}}^\tau$ never filters away the first
argument of a symbol  $u_{c,i}$ or $U_{c,i}$ from the TRSs $DP(\R_\P)$ and $\R_\P$.
Therefore, as mentioned at the end
of
\rSec{Refinement
Algorithm for Argument Filters}, we only have to check the variable condition
for the rules of $\R_\P$, but not for the dependency pairs.

\begin{example}
{\sl
We continue with the logic program from  \rEx{append} and use the type
assignment computed in  \rEx{append2}
above. The rules after the transformation of
\rDef{unmodedTranslation} are the following, cf.\ \rEx{append15}.
\begin{align}
\rrule{57-1}{
{\sf p}_{in}(X,X) }{ {\sf p}_{out}(X,X)}\\
\rrule{57-2}{
{\sf p}_{in}({\sf f}(X), {\sf g}(Y)) }{{\sf u}_1({\sf p}_{in}({\sf f}(X),
{\sf f}(Z)),X,Y)} \\
\rrule{57-3}{
{\sf u}_1({\sf p}_{out}({\sf f}(X), {\sf f}(Z)),X,Y) }{ {\sf u}_2({\sf
p}_{in}(Z,{\sf g}(W)),X,Y,Z)}\\
\rrule{57-4}{
{\sf u}_2({\sf p}_{out}(Z,{\sf g}(W)),X,Y,Z) }{ {\sf p}_{out}({\sf f}(X),{\sf g}(Y))}
\end{align}
Due to the occurrence of $Z$ in the right-hand side of the second rule
(\ref{57-2}), we compute:
\[\begin{array}{cll}
  & \rho_{\mathit{tb}}^\tau({\sf u}_1({\sf p}_{in}({\sf f}(X),{\sf f}(Z)),X,Y), 121)\\
= & \rho_{\mathit{tb}}^\tau({\sf u}_1({\sf p}_{in}({\sf f}(X),{\sf f}(Z)),X,Y),
12)&\text{as } \Ff \in\Sigma \text{ and } 1 \in \mathit{Reflexive}_\tau(\Ff)\\
= & ({\sf p}_{in}, 2)&\text{as }{\sf p}_{in} \not\in \Sigma
\end{array}\]
Thus, we filter away the second argument of ${\sf p}_{in}$, i.e., $\pi'({\sf p}_{in}) = \{1\}$.
Consequently, we obtain $\pi'({\sf u}_1) = \{1,2\}$ and $\pi'({\sf u}_2) = \{1,2,4\}$.

Considering the fourth rule (\ref{57-4}) we compute:
\[\begin{array}{cll}
  & \rho_{\mathit{tb}}^\tau({\sf p}_{out}({\sf f}(X),{\sf g}(Y)), 21)\\
= & ({\sf g}, 1)&\text{as }1 \not\in \mathit{Reflexive}_\tau({\sf g})
\end{array}\]
Thus, we filter away the only argument of ${\sf g}$, i.e., $\pi'({\sf g}) = \varnothing$.
By filtering the tuple symbols in the same way as the corresponding
``lower-case'' symbols, now the
variable condition holds for $\R_\P$ and therefore also for
$DP(\R_\P)$. Indeed, this is the argument filter chosen  in
\rEx{choosing AF}. With this filter, one can easily prove termination of the
program, cf.\ \rSec{Automation by Adapting the DP
Framework}.
}
\end{example}

For the above example, it is sufficient only to avoid the filtering of reflexive
positions. However, in general one should also avoid the filtering of all ``unbounded''
argument positions. An argument position of type $\theta$ is ``unbounded'' if it may contain subterms
from a recursive data structure, i.e., if there
 exist
infinitely many terms of type $\theta$. The decrease of the terms on such argument positions
might be the reason for the termination of the program and therefore, they should not be
filtered away. 	To formalize the concept of unbounded argument positions, we define the set of
\emph{constructors} of a type $\theta$ to consist of all function symbols whose result has
type $\theta$. Then an argument position of a function symbol $f$ is \emph{unbounded} if it
is reflexive or if it has a type $\theta$ with a constructor that has an unbounded
argument position. For the sake of brevity, we also speak of just \emph{unbounded positions}
when referring to unbounded argument positions.

\begin{definition}[(Unbounded Positions)]
Let  $\theta \in \Theta$ be a type and $\tau$ be a type assignment.
A function symbol $f/n$ with  $\tau(f/n) =
(\theta_1,\ldots,\theta_n,\theta_{n+1})$ is a \emph{constructor} of $\theta$ iff
$\theta_{n+1} = \theta$. Let $\mathit{Constructors}_\tau(\theta)$ be the set of all
constructors of $\theta$.

For a function symbol $f/n$ as above, we define the \emph{set of unbounded
positions} as the smallest set such that $\mathit{Reflexive}_\tau(f/n) \subseteq
\mathit{Unbounded}_\tau(f/n)$ and such that
$i \in \mathit{Unbounded}_\tau(f/n)$ if there is a $g/m \in Constructors_\tau(\theta_i)$ and
a $1 \leq j \leq m$
with $j \in \mathit{Unbounded}_\tau(g/m)$.
\end{definition}

In the logic program from Examples \ref{append} and \ref{append2},
we had  $\tau(\Fp) = \tau(\Ff) = (a,a)$ and $\tau(\Fg) = (b,a)$. Thus,
$\mathit{Constructors}_\tau(a) = \{ \Ff,\Fg\}$ and $\mathit{Constructors}_\tau(b) =
\varnothing$. Since the first argument position of $\Ff$ is reflexive, it is also unbounded. The first
argument position of $\Fg$ is not unbounded, since it is not reflexive and there is no constructor of
type $b$ with an unbounded argument position. So in this example, there is no difference
between reflexive and unbounded
positions.

However, we will show in \rEx{inversion} that there are programs where these two notions
differ. For that reason, we now improve our type-based refinement heuristic and disallow the
filtering of unbounded (instead of just reflexive) positions.

\begin{definition}[(Improved Type-based Refinement Heuristic)]
\label{def:tb2}
Let $t$ be a term,  let ``$\mathit{pos} \, i$''
 be a position in $t$, and let $\tau$ be a type assignment.
The \emph{improved type-based refinement heuristic} $\rho^\tau_{\mathit{tb}'}$ is defined as follows:
\begin{eqnarray*}
\rho_{\mathit{tb}'}^\tau(t,\mathit{pos} \, i) & = & \left\{
\begin{array}{ll}
(\rt(t|_\mathit{pos}),i) & \text{if }\rt(t|_\mathit{pos}) \notin \Sigma\text{ or }i \notin \mathit{Unbounded}_\tau(\rt(t|_\mathit{pos}))\\
\rho_{\mathit{tb}'}^{\tau}(t,\mathit{pos}) & \text{otherwise}
\end{array}\right.
\end{eqnarray*}
\end{definition}

\begin{example}\label{inversion}
{\sl The following logic program inverts
an integer represented by a sign ({\sf neg} or {\sf pos}) and by
a natural number in Peano notation (using {\sf s} and {\sf 0}). So the integer number $1$ is
represented by the term ${\sf pos}({\sf s}({\sf 0}))$, the integer number
$-1$ is represented by ${\sf neg}({\sf s}({\sf 0}))$, and the integer number $0$ has the two
representations ${\sf pos}(\Fz)$ and ${\sf neg}(\Fz)$.
Here ${\sf nat}(t)$ holds iff $t$ represents a natural number (i.e., if $t$ is
a term containing just ${\sf s}$ and ${\sf 0}$) and ${\sf inv}$ simply
exchanges the function symbols ${\sf neg}$ and ${\sf pos}$. The main predicate
${\sf safeinv}$ performs the desired inversion where ${\sf safeinv}(t_1,t_2)$
only holds if $t_1$ really represents an integer number and $t_2$ is its inversion.

\[\begin{array}{lll}
{\sf nat}({\sf 0}). & &\\
{\sf nat}({\sf s}(X)) &\from& {\sf nat}(X).\\
{\sf inv}({\sf neg}(X), {\sf pos}(X)). & &\\
{\sf inv}({\sf pos}(X), {\sf neg}(X)). & &\\
{\sf safeinv}(X,{\sf neg}(Y)) &\from& {\sf inv}(X, {\sf neg}(Y)), {\sf nat}(Y).\\
{\sf safeinv}(X,{\sf pos}(Y)) &\from& {\sf inv}(X, {\sf pos}(Y)), {\sf nat}(Y).
\end{array}\]

\vspace*{.2cm}

\noindent
The rules after the transformation of \rDef{unmodedTranslation} are:
\begin{align}
\lrule{513-1}{
{\sf nat}_{in}({\sf 0}) }{ {\sf nat}_{out}({\sf 0})} \\
\lrule{513-2}{
{\sf nat}_{in}({\sf s}(X)) }{ {\sf u}_1({\sf nat}_{in}(X), X)} \\
\lrule{513-3}{
{\sf u}_1({\sf nat}_{out}(X), X) }{ {\sf nat}_{out}({\sf s}(X))} \\
\lrule{513-4}{
{\sf inv}_{in}({\sf neg}(X), {\sf pos}(X)) }{ {\sf inv}_{out}({\sf neg}(X), {\sf
pos}(X))}\\
\lrule{513-5}{
{\sf inv}_{in}({\sf pos}(X), {\sf neg}(X)) }{ {\sf inv}_{out}({\sf pos}(X), {\sf
neg}(X))}\\
\lrule{513-6}{
{\sf safeinv}_{in}(X, {\sf neg}(Y)) }{ {\sf u}_2({\sf inv}_{in}(X, {\sf neg}(Y)), X,
Y)} \\
\lrule{513-7}{
{\sf u}_2({\sf inv}_{out}(X, {\sf neg}(Y)), X, Y) }{ {\sf u}_3({\sf nat}_{in}(Y), X,
Y)} \\
\lrule{513-8}{
{\sf u}_3({\sf nat}_{out}(Y), X, Y) }{ {\sf safeinv}_{out}(X, {\sf neg}(Y))} \\
\lrule{513-9}{
{\sf safeinv}_{in}(X, {\sf pos}(Y)) }{ {\sf u}_4({\sf inv}_{in}(X, {\sf pos}(Y)), X,
Y)} \\
\lrule{513-10}{
{\sf u}_4({\sf inv}_{out}(X, {\sf pos}(Y)), X, Y) }{ {\sf u}_5({\sf nat}_{in}(Y), X,
Y)} \\
\lrule{513-11}{
{\sf u}_5({\sf nat}_{out}(Y), X, Y) }{ {\sf safeinv}_{out}(X, {\sf pos}(Y))}
\end{align}

Let us assume that the user wants to prove termination of all queries ${\sf
safeinv}(t_1,t_2)$ where $t_1$ is  ground. So we use the moding $m({\sf safeinv},1) = \mIn$ and
$m({\sf safeinv},2) = \mOut$.
Thus, as initial argument filter $\pi$ we have
$\pi({\sf safeinv}) = \{1\}$ and hence  $\pi({\sf safeinv}_{in}) =\pi({\sf SAFEINV}_{in}) =
\{1\}$, while
$\pi(f/n) = \{1,\ldots,n\}$ for all $f \notin \{{\sf safeinv},\linebreak 
{\sf safeinv}_{in}, {\sf SAFEINV}_{in}\}$.
In  Rule (\ref{513-6})
one has to filter away the second argument
of ${\sf inv}_{in}$ or the only argument of ${\sf neg}$
in order to remove the ``extra'' variable $Y$ on the right-hand side. From a type inference
for these rules we obtain the type assignment $\tau$
with $\tau({\sf s}) = (b,b)$, $\tau({\sf 0}) = (b)$, and  $\tau({\sf neg}) = \tau({\sf pos})
= (b,a)$. So ``$a$'' corresponds to the type of integers and ``$b$'' corresponds to
the type of naturals. The constructors of the naturals are 
$\mathit{Constructors}_\tau(b) = \{ {\sf s}, {\sf 0}\}$. This is a recursive
data structure since ${\sf s}$ has an unbounded argument: $1 \in
\mathit{Reflexive}_\tau({\sf s}) \subseteq  \mathit{Unbounded}_\tau({\sf
s})$. Thus, while ${\sf neg}$'s first argument position of type $b$
is not reflexive, it is still unbounded, i.e.,
$1 \in \mathit{Unbounded}_\tau({\sf neg})$. Hence, our improved type-based heuristic decides
to filter away the second argument of ${\sf inv}_{in}$
(as ${\sf inv}_{in}$  is not from the original signature $\Sigma$). Now
termination is easy to show.

If one had considered the original type-based heuristic instead, then the non-reflexive first
argument of ${\sf neg}$ would be filtered away. Due to Rule (\ref{513-6}),
then also the last
argument of ${\sf u}_2$ has to be removed by the filter. But then the variable
$Y$ would not occur anymore in the filtered 
left-hand side of
Rule (\ref{513-7}). So
to satisfy the variable condition for Rule (\ref{513-7}),
we would have to filter away the only argument of ${\sf nat}_{in}$. Similarly, the only
argument of the corresponding tuple symbol ${\sf NAT}_{in}$ would also be filtered away,
blocking any possibility for a successful termination proof.}
\end{example}

\subsection{Mode Analysis based on Argument Filters and an Improved Refinement
Algorithm}
\label{sec:afsmoding}

In
logic programming, it is not unusual that a predicate is used with
different modes (i.e., that different occurrences of the predicate have
different input and output positions).
Uniqueness
of moding can then be achieved by creating appropriate copies
of these predicate symbols and their clauses for 
every different moding.

\begin{example}\label{ex:nonunique}
{\sl Consider the following logic program for rotating a list taken from \cite{Codish:examples:URL}.
Let $\P$ be the {\sf append}-program consisting of the clauses from \rEx{real_append}
and the new clause
\begin{equation}
\label{nonunique-clause}
{\sf rotate}(\mathit{N}, \mathit{O}) \from {\sf append}(\mathit{L}, \mathit{M}, \mathit{N}),
{\sf append}(\mathit{M}, \mathit{L}, \mathit{O}). 
\end{equation}
with the moding $m({\sf rotate},1) = \mIn$ and $m({\sf rotate},2) = \mOut$.
For this moding, the program
is
terminating.

But while the first use of {\sf append} in Clause (\ref{nonunique-clause})
supplies it with a ground term only
on the last argument position, the second use  in 
(\ref{nonunique-clause}) is with ground terms only on
the first two argument positions. Although the {\sf append}-clauses are even
well moded for both kinds of uses, the whole program is not.

The logic program is transformed 
into the following TRS. As before, ``$[X|L]$'' is an abbreviation for
$\point(X,L)$, i.e., $\point$ is the constructor for list insertion.
\begin{align}
\lrule{spl1}{ {\sf append}_{in}([\,], M,M) }{ \Fappend_{out}([\,],
M,M)} \\
\lrule{spl2}{ {\sf append}_{in}(\point(X,L), M, \point(X,N)) }{
\Fu_1({\sf append}_{in}(L,M,N),X,L,M,N)} \\
\lrule{spl3}{ \Fu_1({\sf append}_{out}(L,M,N),X,L,M,N)}{
{\sf append}_{out}(\point(X,L), M, \point(X,N))} \\
\lrule{spl4}{ {\sf rotate}_{in}(N,O) }{ \Fu_2({\sf append}_{in}(L,M,N),N,O)} \\
\lrule{spl5}{ \Fu_2({\sf append}_{out}(L,M,N),N,O)}{
\Fu_3(\Fappend_{in}(M,L,O),L,M,N,O)} \\
\lrule{spl6}{ \Fu_3(\Fappend_{out}(M,L,O),L,M,N,O) }{ {\sf rotate}_{out}(N,O)}
\end{align}
Due to the ``extra'' variables $L$ and $M$ in the right-hand side of Rule
(\ref{spl4}) and the ``extra'' variable $O$ in the right-hand side of Rule
(\ref{spl5}),\footnote{In the left-hand side of Rule (\ref{spl4}), the variable
$O$ in the second argument of ${\sf rotate}_{in}$ is removed by the initial
filter that describes the desired set of queries given by the
user. Consequently, one also has to filter away the last 
argument of $\Fu_2$. Hence, then $O$ is indeed an ``extra'' variable 
in the right-hand side of Rule
(\ref{spl5}).}
the only refined argument
filter which would satisfy the variable condition of 
Corollary \ref{Termination of Logic Programs by Dependency Pairs} 
is the one where $\pi({\sf append}_{in}) =
\varnothing$.\footnote{Alternatively, one could also filter away the first
arguments of $\Fu_2$ and $\Fu_3$. But then one would also have to satisfy the
variable condition for the dependency pairs and one would obtain $\pi({\sf
APPEND}_{in}) = \varnothing$. Hence, the termination proof attempt would fail as well.}
As we can expect, for the queries described by this filter, the {\sf
append}-program 
is not terminating
and, thus, our new approach fails, too.

The common solution~\cite{Apt:Book}
is to produce two copies of the {\sf append}-clauses
and to rename them apart. This is often referred to as ``mode-splitting''.
First, we create labelled copies of the predicate symbol ${\sf append}$ and label the
predicate of each ${\sf append}$-atom by the input positions of the moding in which it is used.
Then, we extend our moding to $m({\sf append}^{\{3\}},3) = m({\sf
append}^{\{1,2\}},1)
 = m({\sf
append}^{\{1,2\}},2) = \mIn$
and $m({\sf append}^{\{3\}},1) = m({\sf append}^{\{3\}},2) =  m({\sf
append}^{\{1,2\}},3) =  \mOut$.
In our example, termination of the resulting logic program can easily be shown
using both the classical transformation from 
\rSec{The Classical Transformation} or our new transformation:
\[\begin{array}{lll}
{\sf rotate}(\mathit{N}, \mathit{O}) & \from & {\sf append}^{\{3\}}(\mathit{L}, \mathit{M}, \mathit{N}), {\sf append}^{\{1,2\}}(\mathit{M}, \mathit{L}, \mathit{O}).\\
{\sf append}^{\{3\}}([\,], \mathit{M}, \mathit{M}).\\
{\sf append}^{\{3\}}([X|\mathit{L}], \mathit{M}, [X|\mathit{N}]) & \from &  {\sf append}^{\{3\}}(\mathit{L},\mathit{M},\mathit{N}).\\
{\sf append}^{\{1,2\}}([\,], \mathit{M}, \mathit{M}).\\
{\sf append}^{\{1,2\}}([X|\mathit{L}], \mathit{L}, [X|\mathit{N}]) & \from & {\sf append}^{\{1,2\}}(\mathit{L},\mathit{M},\mathit{N}).
\end{array}\]}
\end{example}

In the example above, a pre-processing based on modings was sufficient
for a successful termination proof.
In general, though, this is insufficient to handle queries described
by an argument filter.
The following example demonstrates this.

\begin{example}\label{ex:nonuniqueafs}
{\sl Consider again the logic program $\P$ from \rEx{ex:nonunique} which is
translated to the TRS $\R_\P = \{ (\ref{spl1}), \ldots,  (\ref{spl6})\}$. This time
we want to show termination  for all queries of the form
${\sf rotate}(t_1, t_2)$ where $t_1$
is a finite list (possibly containing non-ground terms as elements). So $t_1$
is instantiated by terms of the form $\point(r_1, \point(r_2, \ldots
\point(r_n, [\,])\ldots))$ where the $r_i$ can be arbitrary terms possibly
containing variables.\footnote{Such a termination problem can also result from
an initial termination problem that was described by modings. To demonstrate this, we 
could extend the program by the following clauses.
\[
\begin{array}{lll}
\Fp(X,O) & \from & \Fstol(X,N), \Frotate(N,O).\\
\Fstol(0,[\,]).\\
\Fstol(s(X),[Y|N]) & \from & \Fstol(X,N).
\end{array}
\]
To prove termination of all queries described by the moding $m(\Fp,1)$ =
\mIn{} and $m(\Fp,2)$ = \mOut, one essentially has to show
termination
for  all queries of the form ${\sf rotate}(t_1, t_2)$ where $t_1$
is a finite list.}
 
To specify these queries, the user would provide
the initial argument filter $\pi$ with $\pi({\sf rotate}) = \{1\}$
and $\pi(\point) = \{2\}$. Now our aim is to prove termination of all queries that
are ground under the filter $\pi$. Thus,
the first argument of $\Frotate$ is not necessarily a ground term (it is only
guaranteed to be ground \emph{after filtering
away the second argument of $\point$}). 

Therefore, if one wanted to
pre-process the program using modings, then one could not assume
that the first
argument of $\Frotate$ were ground. Instead, one would have to use the moding
$m(\Frotate, 1) = m(\Frotate, 2)$ = \mOut. Therefore, 
in the calls to ${\sf append}$, all argument positions would be considered as
``\mOut''.
As a consequence, no renamed-apart copies of clauses
would be created and the termination proof would fail.}
\end{example}

In general, our refinement algorithm from \rSec{Refinement
Algorithm for Argument Filters}
(\rAlg{alg:refinement})
aims to compute an argument filter that filters away as few arguments as
possible while ensuring that the variable condition holds.
In this way we make sure that the maximal amount of information remains
for the following termination analysis.

But as Examples \ref{ex:nonunique} and \ref{ex:nonuniqueafs} above demonstrate,
there are cases where we need to
create renamed-apart copies of clauses for certain predicates in order to
obtain a viable refined argument filter. 
To this end, a first idea might be to combine an existing mode inference algorithm with
\rAlg{alg:refinement}. However,
it is not clear how to do such a combination. The problem is that 
we already need to know the refined argument
filter in order
to create
suitable copies of clauses. 
At the same time,  we already need the
renamed-apart copies of the clauses in order to compute the refined
argument filter. Thus, we have a classical
``chicken-and-egg'' problem. Moreover, such an approach would always fail for
programs like \rEx{ex:nonuniqueafs} where there exists no suitable
pre-processing based on modings.

Therefore, we replace \rAlg{alg:refinement} 
by the following new \rAlg{alg:ImprovedRefinement}
that simultaneously refines the
argument filter and creates renamed-apart copies on demand.

\begin{algorithm}
\label{alg:ImprovedRefinement}
\KwIn{argument filter $\pi$, refinement heuristic $\rho$, TRS $\R_\P$}
\KwOut{\parbox[t]{10cm}{refined argument filter $\pi'$ and modified TRS $\R_\P'$\\
such that 
$\pi'(\R_\P')$ satisfies the variable condition}}
\begin{itemize}
\item[\bf 1.] $\R_\P' := \R_\P \cup \{ \ell^{\pi(p)} \to
r^{\pi(p)} \mid \ell \to r \in \R_\P(p),  \; p/n \in \Delta, \; \pi(p) \subsetneq \{1,\ldots,n\}\}$
 \vspace*{.2cm}
\item[\bf 2.] $\pi'(f) := 
\left\{ \begin{array}{ll}
\pi(f),& \mbox{for all $f \in \Sigma$ (i.e., for functions of
 $\P$)}\\
I,& \mbox{for all $f = p_{in}^I$ with $p \in \Delta$}\\
\{1,\ldots,n\},& \mbox{for all other symbols $f/n$}
\end{array} \right.$
\vspace*{.2cm}
\item[\bf 3.] \parbox[t]{10cm}{If there is a rule $\el \to r$ from $\R_\P'$\\ and a
position $\mathit{pos}$ with $r|_{\mathit{pos}} \in \V(\pi'(r)) \setminus
\V(\pi'(\el))$,  then:} \vspace*{.2cm}
\begin{itemize}
\item[\bf 3.1.] Let $(f,i)$ be the result of $\rho(r,\mathit{pos})$, i.e., $(f,i) :=
\rho(r,\mathit{pos})$. \vspace*{.2cm}
\item[\bf 3.2.]  \parbox[t]{10cm}{We perform a case analysis depending on whether
$f$ has the form $p_{in}^I$ for some $p \in \Delta$. Here, unlabelled 
symbols of the form $p_{in}/n$ are treated as if they were labelled
with $I = \{1,\ldots,n\}$. \vspace*{.2cm}
\begin{itemize}
\item[$\bullet$]
If $f = p_{in}^I$, then we must have $r =
u(p_{in}^I(...),\ldots)$
for some symbol $u$. We
introduce a new function symbol $p_{in}^{I \setminus \{ i \}}$ with $\pi'(p_{in}^{I \setminus \{ i \}}) = I \setminus \{ i \}$ if it has
not yet been introduced. 
Then:\vspace*{.1cm}
\begin{itemize}
\item[$\circ$]  We replace $p_{in}^I$ by $p_{in}^{I \setminus \{ i \}}$ in the right-hand
side of 
$\el \to r$:
\[\R_\P' := \R_\P' \setminus \{
\ell \to r \} \cup \{ \ell \to \overline{r} \},\] where
$\overline{r} = u(p_{in}^{I
\setminus \{ i \}}(...),\ldots)$. \vspace*{.2cm}
\item[$\circ$] 
$\R_\P' := \R_\P' \cup \{ s^{I\setminus\{i\}} \to t^{I\setminus\{i\}}  \mid
s \to t \in \R_\P'(p) 
\}$.\\
If this introduces new labelled function symbols $f/n$ where $\pi'$ was not
yet defined on, we define $\pi'(f) = \{1,\ldots,n\}$. \vspace*{.2cm}
\item[$\circ$] 
Let $\ell' \to r'$ be the rule in $\R_\P'$ with 
$\ell' = u(p_{out}^I(...),\ldots)$. 
We now replace $p_{out}^I$ by $p_{out}^{I \setminus \{ i \}}$ in the left-hand
side of $\ell' \to r'$:
\[\R_\P' := \R_\P' \setminus \{
\ell' \to r' \} \cup \{ \overline{\ell'} \to r' \},\]
where $\overline{\ell'}
= u(p_{out}^{I\setminus\{i\}}(...),\ldots)$. \vspace*{.2cm}
\end{itemize}
\item[$\bullet$] Otherwise (i.e., if $f$ does not have the form $p_{in}$ or $p_{in}^I$), 
 then
modify $\pi'$ by removing $i$ from $\pi'(f)$, i.e., $\pi'(f) :=
\pi'(f) \setminus \{i\}$. 
\end{itemize}
} \vspace*{.2cm}
\item[\bf 3.3.] Go back to {\bf Step 3}. \vspace*{.2cm}
\end{itemize}
\end{itemize}
\caption{Improved Refinement Algorithm}
\end{algorithm}

The idea of the algorithm is the following. Whenever our refinement heuristic
suggests to filter away an argument of a symbol $p_{in}$, then instead of
changing the argument filter appropriately, we introduce a new copy of the
symbol $p_{in}$. To distinguish the different copies of the symbols
$p_{in}$, we label them by the argument positions that are not filtered
away. 

In general, a removal of argument positions of $p_{in}$ can already be performed by
the initial filter $\pi$ that the user provides in order to describe the
desired set of queries. Therefore, if $\pi(p)$ does not contain all
arguments $\{1,\ldots,n\}$ for some predicate symbol $p/n$, then we already
introduce a new symbol $p_{in}^{\pi(p)}$ and new copies of the
rewrite rules originating from $p$. In these rules, we use the new 
symbol $p_{in}^{\pi(p)}$ instead of $p_{in}$. 

Let us reconsider \rEx{ex:nonuniqueafs}. To prove termination of all queries
${\sf rotate}(t_1, t_2)$ with a finite list $t_1$, the 
user would select
the argument filter $\pi$ that eliminates the second argument of {\sf rotate}
and the first argument of the list constructor \point. So we have $\pi({\sf rotate}) = \{1\}$,
$\pi(\point) = \{2\}$,  and $\pi(\Fappend) = \{1,2,3\}$.
Then in addition to the rules 
(\ref{spl4}) - (\ref{spl6})
for the symbol $\Frotate_{in}$ we also
introduce the symbol $\Frotate_{in}^{\{1\}}$. Moreover, in order to ensure
that $\Frotate_{in}^{\{1\}}$ does the same computation as 
$\Frotate_{in}$, we add the following copies of the
rewrite rules (\ref{spl4}) - (\ref{spl6}) originating from the predicate $\Frotate$.
Here, all root symbols of left- and right-hand sides are labelled with $\{1\}$.
\begin{align}
\lrule{lab4}{ {\sf rotate}_{in}^{\{1\}}(N,O) }{ \Fu_2^{\{1\}}({\sf append}_{in}(L,M,N),N,O)}\\
\lrule{lab5}{ \Fu_2^{\{1\}}({\sf append}_{out}(L,M,N),N,O)}{ \Fu_3^{\{1\}}(\Fappend_{in}(M,L,O),L,M,N,O)} \\
\lrule{lab6}{ \hspace{-0.5em}\Fu_3^{\{1\}}(\Fappend_{out}(M,L,O),L,M,N,O) }{ {\sf rotate}_{out}^{\{1\}}(N,O)}
\end{align}

So in \textbf{Step 1} of the algorithm, we initialize $\R_\P'$ to contain all rules of $\R_\P$.
But in addition, $\R_\P'$ contains
labelled copies
of the rules resulting from those predicates $p/n$ where $\pi(p) \subsetneq  
\{1,\ldots,n\}$. In these rules, the root symbols of left- and right-hand
sides are labelled with $\pi(p)$.

Formally,  for every predicate symbol $p \in \Delta$, let $\R_\P(p)$
denote those rules of $\R_\P$ which result from $p$-clauses (i.e., from
clauses whose head is built with the predicate $p$). So $\R_\P({\Frotate})$
consists of the rule for $\Frotate_{in}$ and the rules for $\Fu_2$ and $\Fu_3$,
i.e., $\R_\P({\Frotate}) = \{ (\ref{spl4}), (\ref{spl5}),  (\ref{spl6})\}$.

Then for a term $t = f(t_1,\ldots,t_n)$ and a set
of argument positions $I \subseteq \mathbb{N}$, let $t^I$ denote $f^I(t_1,\ldots,t_n)$. So for
$t = \Frotate_{in}(N,O)$ and $I = \{1\}$, we have $t^I =
\Frotate_{in}^{\{1\}}(N,O)$. 
Hence if $\pi(\Frotate) = \{1\}$, then we extend
$\R_\P'$ by copies of
the rules in $\R_\P({\Frotate})$ where the root symbols are labelled by
$\{1\}$. In other words, we have to add the rules  $\{ \ell^{\pi(p)} \to
r^{\pi(p)} \mid \ell \to r \in \R_\P({\Frotate}) \} = \{ (\ref{lab4}),
(\ref{lab5}), (\ref{lab6})\}$.

In \textbf{Step 2}, we initialize our
desired argument filter $\pi'$. This filter does not yet eliminate any
arguments except for original function symbols from the logic program and for
symbols of the form $p_{in}^I$. Since in our example,
the initial
argument filter $\pi$ of the user is $\pi(\Frotate) = \{1\}$,  we
have $\pi'(\Frotate_{in}) = \{1,2\}$, but
$\pi'({\sf
rotate}_{in}^{\{1\}}) = \{1\}$. So for symbols $p_{in}^I$, the label $I$
describes those arguments that are not filtered away. However, this does not
hold for the other labelled symbols. So the labelling of the symbols
$\Fu_2^{\{1\}}$,
$\Fu_3^{\{1\}}$,  and
$\Fappend_{out}^{\{1\}}$ only represents that they ``belong'' to the symbol
${\sf
rotate}_{in}^{\{1\}}$. But the argument filter for these symbols can be
determined arbitrarily. Initially, 
$\pi'$ would not filter away any of their arguments, i.e.,
$\pi'(\Fu_2^{\{1\}}) =\{1,2,3\}$,
$\pi'(\Fu_3^{\{1\}}) =
\{1,2,3,4,5\}$, and $\pi'(\Frotate_{out}^{\{1\}}) = \{1,2\}$. The filter
for original function symbols of the logic program is taken 
from the user-defined argument filter $\pi$. 
So since the
user described the desired set of queries by setting $\pi(\point) = \{ 2\}$,
we also have $\pi'(\point) = \{2\}$.

In \textbf{Steps 3} and \textbf{3.1}, we look for rules violating the variable
condition as in Algorithm \ref{alg:refinement}. Again, we use a
refinement heuristic $\rho$ to suggest a suitable function symbol $f$ and an
argument position $i$ that should be filtered away. As before, 
we restrict ourselves to refinement heuristics $\rho$ which never select
the first argument of a symbol $u_{c,i}$. In this way, we only have to examine
the rules (and not also the dependency pairs) for possible violations of the
variable condition.

If $f$ is not a (possibly labelled) symbol of the form $p_{in}$ or $p_{in}^I$, then we
proceed in \textbf{Step 3.2} as before (i.e., as in Step 2.2 of Algorithm
\ref{alg:refinement}).
But if $f$ is a (possibly labelled) symbol of the form $p_{in}$ or $p_{in}^I$, then
we do not modify the filter for $f$. If $I$ are
the 
non-filtered argument positions of $f$,
then we introduce a new function symbol labelled with $I \setminus
\{i\}$ instead and replace $f$ by this new function symbol in the rule that violated the
variable condition.

In our example, we had
 $\R_\P' = \{ (\ref{spl1}), \ldots, (\ref{spl6}),
(\ref{lab4}), (\ref{lab5}), (\ref{lab6})\}$ and $\pi'$ was the filter that does
not eliminate any arguments except for $\pi'({\sf rotate}_{in}^{\{1\}}) = \{1\}$
and $\pi'(\point) = \{2\}$.

The rules (\ref{spl2}), (\ref{spl4}), and (\ref{lab4}) violate the variable
condition.
In the following, we mark the violating variables by boxes.
Let us regard Rule (\ref{spl2}) first:
\begin{align}
\rrule{spl2}{{\sf append}_{in}(\point(X,L), M, \point(X,N))}{
\Fu_1({\sf append}_{in}(L,M,N),\framebox{$X$},L,M,N)}
\end{align}
To remove the variable $X$ from the right-hand side,
in \textbf{Step 3.1} any refinement heuristic must suggest to
filter away the second argument of $\Fu_1$.
As $\Fu_1$ does not have the form $p_{in}^I$, we use the
second case of \textbf{Step 3.2}.
Thus, we change $\pi'$
such that $\pi'(\Fu_1) = \{1,2,3,4,5\} \setminus \{2\} = \{1,3,4,5\}$.
Indeed, now this rule does not violate the variable condition anymore.

We reach \textbf{Step 3.3} and, thus, go back to \textbf{Step 3}
where we again choose a rule that violates the variable condition.
Let us now regard Rule (\ref{lab4}):
\begin{align}
\rrule{lab4}{{\sf rotate}_{in}^{\{1\}}(N,O)}{\Fu_2^{\{1\}}({\sf append}_{in}(\framebox{$L$},\framebox{$M$},N),N,\framebox{$O$})}
\end{align}
To remove the first violating variable $L$, in \textbf{Step 3.1} our refinement heuristic
suggests to filter away the first argument of the symbol ${\sf append}_{in}$.
But instead of changing $\pi'({\sf append}_{in})$, we introduce a
new symbol ${\sf append}_{in}^{\{2,3\}}$ with $\pi'({\sf append}_{in}^{\{2,3\}}) = \{2,3\}$.
Moreover, we replace the symbol ${\sf append}_{in}$ in the right-hand side
of Rule (\ref{lab4}) by the new symbol ${\sf append}_{in}^{\{2,3\}}$. Thus,
Rule (\ref{lab4}) is modified to
\begin{align}
\lrule{lab4'}{{\sf rotate}_{in}^{\{1\}}(N,O)}{\Fu_2^{\{1\}}({\sf
append}_{in}^{\{2,3\}}(L,\framebox{$M$},N),N,\framebox{$O$}).}
\end{align}
To make sure that $\Fappend_{in}^{\{2,3\}}$ has rewrite rules corresponding to
the rules of $\Fappend_{in}$, we now have to add copies of all rules
that result from the $\Fappend$-predicate. However, here we label every root
symbol by $\{2,3\}$. In other words, we have to add the following rules to $\R_\P'$:
{\small
\begin{align}
\lrule{lab7}{ {\sf append}_{in}^{\{2,3\}}([\,], M,M) }{ \Fappend_{out}^{\{2,3\}}([\,],M,M)}\\
\lrule{lab8}{ {\sf append}_{in}^{\{2,3\}}(\point(X,L), M, \point(X,N)) }{\Fu_1^{\{2,3\}}({\sf append}_{in}(L,M,N),X,L,M,N)}\\
\lrule{lab9}{\Fu_1^{\{2,3\}}({\sf append}_{out}(L,M,N),X,L,M,N)}{{\sf append}_{out}^{\{2,3\}}(\point(X,L), M, \point(X,N))}
\end{align}}
Now the result of rewriting a term $\Fappend_{in}^{\{2,3\}}(\ldots)$
will always be a term of the form
$\Fappend_{out}^{\{2,3\}}(\ldots)$. Therefore, we have to replace
$\Fappend_{out}$ by $\Fappend_{out}^{\{2,3\}}$ in the left-hand side of 
Rule (\ref{lab5}) (since (\ref{lab5}) is the rule that always ``follows''
(\ref{lab4})). So the original rule (\ref{lab5})
\begin{align}
\rrule{lab5}{\Fu_2^{\{1\}}({\sf append}_{out}(L,M,N),N,O)}{
\Fu_3^{\{1\}}(\Fappend_{in}(M,L,O),L,M,N,O)}
\intertext{is replaced by the modified rule}
\lrule{lab5'}{\Fu_2^{\{1\}}({\sf append}_{out}^{\{2,3\}}(L,M,N),N,O) }{
\Fu_3^{\{1\}}(\Fappend_{in}(M,L,O),L,M,N,O).}
\end{align}
Thus, after the execution of \textbf{Step 3.2}, we have
$\R_\P' = \{ (\ref{spl1}) - (\ref{spl6}), (\ref{lab4'})
- (\ref{lab9}), (\ref{lab5'}),$ $(\ref{lab6}) \}$.
In this way, we have introduced three new labelled symbols
${\sf append}_{in}^{\{2,3\}}\!$, $\Fu_1^{\{2,3\}}\!$, and ${\sf append}_{out}^{\{2,3\}}\!$. 
On the unlabelled symbols, the argument filter $\pi'$ did not change,
but we now additionally have $\pi'({\sf append}_{in}^{\{2,3\}}) = \{2,3\}$,
$\pi'(\Fu_1^{\{2,3\}}) = \{1,2,3,4,5\}$, and $\pi'(\Fappend_{out}^{\{2,3\}}) = \{1,2,3\}$.

We reach \textbf{Step 3.3} and, thus, go back to \textbf{Step 3}
where we again choose a rule that violates the variable condition.
Let us again regard Rule (\ref{lab4}), albeit in its modified
form as Rule (\ref{lab4'}). The variable $M$ still violates the
variable condition. In \textbf{Step 3.1}, the refinement heuristic
suggests to filter away the second argument of the symbol
${\sf append}_{in}^{\{2,3\}}$. Instead of changing $\pi'$, we again
introduce a new symbol, namely ${\sf append}_{in}^{\{3\}}$ with
$\pi'({\sf append}_{in}^{\{3\}}) = \{3\}$, and replace the symbol
${\sf append}_{in}^{\{2,3\}}$ in the right-hand side of Rule
(\ref{lab4'}) by ${\sf append}_{in}^{\{3\}}$. Thus, we obtain a
further modification of Rule (\ref{lab4'}):
\begin{align}
\lrule{lab4''}{{\sf rotate}_{in}^{\{1\}}(N,O) }{ \Fu_2^{\{1\}}({\sf append}_{in}^{\{3\}}(L,M,N),N,\framebox{$O$})}
\end{align}
Again, we have to ensure that $\Fappend_{in}^{\{3\}}$ has rewrite rules
corresponding to the rules of $\Fappend_{in}$. Thus, we
add copies of all rules that result from the $\Fappend$-predicate
where every root symbol is labelled by $\{3\}$:
\begin{align}
\lrule{lab10}{ {\sf append}_{in}^{\{3\}}([\,], M,M) }{ \Fappend_{out}^{\{3\}}([\,],M,M)}\\
\lrule{lab11}{ {\sf append}_{in}^{\{3\}}(\point(X,L), M, \point(X,N)) }{\Fu_1^{\{3\}}({\sf append}_{in}(L,M,N),X,L,M,N)\hspace{-0.2em}}\\
\lrule{lab12}{\hspace{-0.7em}\Fu_1^{\{3\}}({\sf append}_{out}(L,M,N),X,L,M,N) }{ {\sf append}_{out}^{\{3\}}(\point(X,L), M, \point(X,N))}
\end{align}
We also have to replace $\Fappend_{out}^{\{2,3\}}$ by
$\Fappend_{out}^{\{3\}}$ in the left-hand side of Rule (\ref{lab5'})
(since (\ref{lab5'}) is the rule that always ``follows''
(\ref{lab4'})). So the rule (\ref{lab5'}) is replaced by the modified rule
\begin{align}
\lrule{lab5''}{\Fu_2^{\{1\}}({\sf append}_{out}^{\{3\}}(L,M,N),N,O) }{
\Fu_3^{\{1\}}(\Fappend_{in}(M,L,O),L,M,N,O)}
\end{align}
Thus, after the execution of \textbf{Step 3.2}, we have
$\R_\P' = \{ (\ref{spl1}) - (\ref{spl6}), (\ref{lab4''})
- (\ref{lab12}),
(\ref{lab7}) - (\ref{lab9}),
(\ref{lab5''}), (\ref{lab6}) \}$.
Again we have introduced three new labelled symbols
${\sf append}_{in}^{\{3\}}\!$, $\Fu_1^{\{3\}}\!$, and ${\sf append}_{out}^{\{3\}}$.
On the unlabelled symbols, the argument filter $\pi'$ did not change,
but we now additionally have $\pi'({\sf append}_{in}^{\{3\}}) = \{3\}$,
$\pi'(\Fu_1^{\{3\}}) = \{1,2,3,4,5\}$, and $\pi'(\Fappend_{out}^{\{3\}}) = \{1,2,3\}$.

We reach \textbf{Step 3.3} and, thus, go back to \textbf{Step 3}
where we again choose a rule that violates the variable condition.
We again regard Rule (\ref{lab4}), albeit in its modified
form as Rule (\ref{lab4''}). The variable $O$ still violates the
variable condition. In \textbf{Step 3.1}, any refinement heuristic
must suggest to filter away the third argument of the symbol
$\Fu_2^{\{1\}}$. As $\Fu_2^{\{1\}}$ does not have the form $p_{in}^I$,
we use the second case of \textbf{Step 3.2}.
Thus, we change $\pi'$ such that
$\pi'(\Fu_2^{\{1\}}) = \{1,2,3\} \setminus \{3\} = \{1,2\}$.
Indeed, now Rule (\ref{lab4''}) does not violate the variable
condition anymore.

We reach \textbf{Step 3.3} and, thus, go back to \textbf{Step 3}
where we again choose a rule that still violates the variable condition.
Let us now regard Rule (\ref{lab5''}):
\begin{align}
\rrule{lab5''}{\Fu_2^{\{1\}}({\sf append}_{out}^{\{3\}}(L,M,N),N,O) }{
\Fu_3^{\{1\}}(\Fappend_{in}(M,L,\framebox{$O$}),L,M,N,\framebox{$O$})}
\end{align}

Here our refinement heuristic suggests to filter
away the third argument of the symbol $\Fappend_{in}$ in order to remove the extra
variable $O$. Instead of changing $\pi'$, we again introduce a
new symbol, namely $\Fappend_{in}^{\{1,2\}}$
with $\pi'(\Fappend_{in}^{\{1,2\}}) = \{1,2\}$, and replace the symbol
$\Fappend_{in}$ in the right-hand side of Rule (\ref{lab5''})
by $\Fappend_{in}^{\{1,2\}}$. Thus, we obtain a further modification of
Rule (\ref{lab5''}):
\begin{align}
\lrule{lab5'''}{\Fu_2^{\{1\}}({\sf append}_{out}^{\{3\}}(L,M,N),N,O) }{
\Fu_3^{\{1\}}(\Fappend_{in}^{\{1,2\}}(M,L,O),L,M,N,\framebox{$O$})}
\end{align}
Again, we have to ensure that $\Fappend_{in}^{\{1,2\}}$ has rewrite rules
corresponding to the rules of $\Fappend_{in}$. Thus, we
add copies of all rules that result from the $\Fappend$-predicate
where every root symbol is labelled by $\{1,2\}$:

\vspace*{-.3cm}

{\small
\begin{align}
\lrule{lab1}{ {\sf append}_{in}^{\{1,2\}}([\,], M,M) }{ \Fappend_{out}^{\{1,2\}}([\,],
M,M)}\\
\lrule{lab2}{ {\sf append}_{in}^{\{1,2\}}(\point(X,L), M, \point(X,N)) }{
\Fu_1^{\{1,2\}}({\sf append}_{in}(L,M,N),X,L,M,N)}\\
\lrule{lab3}{\Fu_1^{\{1,2\}}({\sf append}_{out}(L,M,N),X,L,M,N) }{
{\sf append}_{out}^{\{1,2\}}(\point(X,L), M, \point(X,N))}
\end{align}}

\vspace*{-.3cm}

We also have to replace $\Fappend_{out}$ by
$\Fappend_{out}^{\{1,2\}}$ in the left-hand side of Rule (\ref{lab6})
(since (\ref{lab6}) is the rule that always ``follows''
(\ref{lab5''})). So the rule (\ref{lab6}) is replaced by the modified rule
\begin{align}
\lrule{lab6'}{\Fu_3^{\{1\}}({\sf append}_{out}^{\{1,2\}}(M,L,O),L,M,N,O) }{
{\sf rotate}_{out}^{\{1\}}(N,O)}
\end{align}
Thus, after the execution of \textbf{Step 3.2}, we now have
$\R_\P' = \{ (\ref{spl1}) - (\ref{spl6}), (\ref{lab4''})
- (\ref{lab12}), (\ref{lab7}) - (\ref{lab9}),
(\ref{lab5'''}) - (\ref{lab3}),
(\ref{lab6'})\}$.
Again we have introduced three new labelled symbols
${\sf append}_{in}^{\{1,2\}}\!$, $\Fu_1^{\{1,2\}}\!$, and ${\sf append}_{out}^{\{1,2\}}$.
On the unlabelled symbols, the argument filter $\pi'$ did not change,
but we now additionally have $\pi'({\sf append}_{in}^{\{1,2\}}) = \{1,2\}$,
$\pi'(\Fu_1^{\{1,2\}}) = \{1,2,3,4,5\}$, and $\pi'(\Fappend_{out}^{\{1,2\}}) =
\{1,2,3\}$.

Note that now we have indeed separated the two copies of the {\sf
append}-rules
where ${\sf append}_{in}^{\{3\}}$ corresponds to the version of $\Fappend$
that has the third argument as input and ${\sf append}_{in}^{\{1,2\}}$ is the
version where the first two arguments serve as input. This copying of
predicates works although the initial argument filter already filtered away
arguments of function symbols like ``$\point$'' (i.e., the initial argument
filter was already beyond the expressivity of modings).

\textbf{Step 3} is repeated until the variable condition is not violated
anymore.
Note that Algorithm \ref{alg:ImprovedRefinement} 
always terminates since there are only finitely many
possible labelled variants for every symbol.
In our example, we obtain the following set of rules $\R_\P'$:

{\footnotesize
\begin{align}
\rrule{spl1}{{\sf append}_{in}([\,], M,M)} { \Fappend_{out}([\,],M,M)}\\
\rrule{spl2}{{\sf append}_{in}(\point(X,L), M, \point(X,N)) }{  \Fu_1({\sf append}_{in}(L,M,N),X,L,M,N)}\\
\rrule{spl3}{\Fu_1({\sf append}_{out}(L,M,N),X,L,M,N) }{ {\sf append}_{out}(\point(X,L), M, \point(X,N))}\\
\lrule{rem5}{{\sf rotate}_{in}(N,O) }{  \Fu_2({\sf append}_{in}^{\{3\}}(L,M,N),N,O)}\\
\urule{\Fu_2({\sf append}_{out}^{\{3\}}(L,M,N),N,O) }{ \Fu_3(\Fappend_{in}(M,L,O),L,M,N,O)}\\
\rrule{spl6}{\Fu_3(\Fappend_{out}(M,L,O),L,M,N,O) }{  {\sf
rotate}_{out}(N,O)}\\
\rrule{lab4''}{{\sf rotate}_{in}^{\{1\}}(N,O) }{  \Fu_2^{\{1\}}({\sf append}_{in}^{\{3\}}(L,M,N),N,O)}\\
\rrule{lab5'''}{\Fu_2^{\{1\}}({\sf append}_{out}^{\{3\}}(L,M,N),N,O) }{\Fu_3^{\{1\}}(\Fappend_{in}^{\{1,2\}}(M,L,O),L,M,N,O)}\\
\rrule{lab6'}{\Fu_3^{\{1\}}(\Fappend_{out}^{\{1,2\}}(M,L,O),L,M,N,O) }{  {\sf rotate}_{out}^{\{1\}}(N,O)}\\
\rrule{lab7}{{\sf append}_{in}^{\{2,3\}}([\,], M,M) }{ \Fappend_{out}^{\{2,3\}}([\,],M,M)}\\
\urule{{\sf append}_{in}^{\{2,3\}}(\point(X,L), M, \point(X,N)) }{\Fu_1^{\{2,3\}}({\sf append}_{in}^{\{2,3\}}(L,M,N),X,L,M,N)}\\
\lrule{rem6}{\Fu_1^{\{2,3\}}({\sf append}_{out}^{\{2,3\}}(L,M,N),X,L,M,N) }{
{\sf append}_{out}^{\{2,3\}}(\point(X,L), M, \point(X,N))}\\
\rrule{lab10}{{\sf append}_{in}^{\{3\}}([\,], M,M) }{ \Fappend_{out}^{\{3\}}([\,],M,M)}\\
{\sf append}_{in}^{\{3\}}(\point(X,L), M, \point(X,N)) &\to  \Fu_1^{\{3\}}({\sf append}_{in}^{\{3\}}(L,M,N),X,L,M,N)\\
\urule{\Fu_1^{\{3\}}({\sf append}_{out}^{\{3\}}(L,M,N),X,L,M,N) }{ {\sf append}_{out}^{\{3\}}(\point(X,L), M, \point(X,N))}\\
\rrule{lab1}{{\sf append}_{in}^{\{1,2\}}([\,], M,M) }{ \Fappend_{out}^{\{1,2\}}([\,],M,M)}\\
\urule{{\sf append}_{in}^{\{1,2\}}(\point(X,L), M, \point(X,N)) }{ \Fu_1^{\{1,2\}}({\sf append}_{in}^{\{1,2\}}(L,M,N),X,L,M,N)}\\
\urule{\Fu_1^{\{1,2\}}({\sf append}_{out}^{\{1,2\}}(L,M,N),X,L,M,N) }{ {\sf append}_{out}^{\{1,2\}}(\point(X,L), M, \point(X,N))}
\end{align}
}

\vspace*{-.2cm}

\noindent
The refined argument filter $\pi'$ is given by

\vspace*{-.1cm}

{\footnotesize
\[ \begin{array}{lll@{\quad}lll@{\quad}lll}
\pi'({\sf append}_{in}) &\!\!=\!\!& \{1,2,3\} &
\pi'({\sf rotate}_{in}^{\{1\}})  &\!\!=\!\!& \{1\}&
\pi'({\sf append}_{in}^{\{2,3\}}) &\!\!=\!\!& \{2,3\}\\
\pi'({\sf append}_{out}) &\!\!=\!\!& \{1,2,3\} &
\pi'(\Fu_2^{\{1\}})&\!\!=\!\!& \{1,2\}&
\pi'({\sf append}_{out}^{\{2,3\}}) &\!\!=\!\!& \{1,2,3\}\\
\pi'(\point)&\!\!=\!\!& \{2\}&
\pi'(\Fu_3^{\{1\}})&\!\!=\!\!& \{1,2,3,4\}&
 \pi'(\Fu_1^{\{2,3\}})&\!\!=\!\!& \{1,4,5\}\\
 \pi'(\Fu_1) &\!\!=\!\!& \{1,3,4,5\}&
\pi'({\sf append}_{in}^{\{1,2\}})&\!\!=\!\!& \{1,2\}&
 \pi'(\Fu_1^{\{3\}})&\!\!=\!\!& \{1,5\}\\
\pi'({\sf rotate}_{in}) &\!\!=\!\!& \{1,2\}&
\pi'({\sf append}_{out}^{\{1,2\}}) &\!\!=\!\!& \{1,2,3\}&
 \pi'(\Fu_1^{\{1,2\}})&\!\!=\!\!& \{1,3,4\}\\
\pi'(\Fu_2)&\!\!=\!\!& \{1,2,3\}&
\pi'({\sf rotate}_{out}^{\{1\}}) &\!\!=\!\!& \{1,2\} & &&
\\
\pi'({\sf append}_{in}^{\{3\}}) &\!\!=\!\!& \{3\}& &&&&&\\
\pi'({\sf append}_{out}^{\{3\}})  &\!\!=\!\!& \{1,2,3\}& &&&&&\\
\pi'(\Fu_3)&\!\!=\!\!& \{1,2,3,4,5\}& &&&&&\\
\pi'({\sf rotate}_{out}) &\!\!=\!\!& \{1,2\}& &&&&&
\end{array}\]
}

Termination for $\R_\P'$ w.r.t.\ the terms specified by $\pi'$ is now easy
to show using our results from \rSec{sec:termination}.

If one is only interested in termination of queries
$\Frotate(t_1,t_2)$ for a specific predicate symbol like $\Frotate$, then one
can remove superfluous (copies of) rules from the TRS before starting the
termination proof. For example, if one only wants to prove termination of queries
$\Frotate(t_1,t_2)$ for finite lists $t_1$, then it now suffices to prove
termination of the above TRS for those ``start terms'' $\Frotate_{in}^{\{1\}}(\ldots)$
that are finite and ground under the filter $\pi'$ and where the arguments of 
$\Frotate_{in}^{\{1\}}$ do not contain any function symbols except $\point$
and $[\,]$. 
Since the rules for $\Frotate_{in}$, $\Fappend_{in}$, and
$\Fappend_{in}^{\{2,3\}}$ (i.e., the rules (\ref{spl1}) - (\ref{spl3}), 
(\ref{spl6}), (\ref{lab7}), and (\ref{rem5}) - (\ref{rem6}))
are not reachable from these ``start terms'',
they can immediately be removed. In other words, for the queries
$\Frotate(t_1,t_2)$ we indeed need rules for $\Frotate_{in}^{\{1\}}\!$, 
$\Fappend_{in}^{\{1,2\}}\!$, and
$\Fappend_{in}^{\{3\}}\!$, 
but the rules for $\Frotate_{in}$, $\Fappend_{in}$, and 
$\Fappend_{in}^{\{2,3\}}\!$ are superfluous.

Note however  that such superfluous copies
of rules are never problematic for the termination
analysis. If the rules for
$\Fappend_{in}^{\{3\}}$ are terminating for terms that are 
 finite and ground under the filter $\pi'$, then this also holds for the
$\Fappend_{in}^{\{2,3\}}$- and the $\Fappend_{in}$-rules, since here $\pi'$
filters away 
less arguments. A corresponding statement holds for the connection between the
$\Frotate_{in}^{\{1\}}\!$- and the  $\Frotate_{in}$-rules.

The following theorem proves the correctness of Algorithm
\ref{alg:ImprovedRefinement}.
More precisely, it shows that one can use $\pi'$ and $\R_\P'$ instead of $\pi$
and $\R_\P$ in \rThm{Soundness of the Transformation}. 
So it is sufficient to prove that all terms in the set 
$S' = \{ p_{in}^{\pi(p)}(\vec{t})\mid
p\in\Delta,
\; \vec{t} \in \vec{\T}^\infty(\Sigma,\V), \; \pi'(p_{in}^{\pi(p)}(\vec{t}))\in\T(\Sigma_{\P_{\pi'}})
\, \}$ are terminating 
w.r.t.\ the modified TRS $\R_\P'$.
In \rEx{ex:nonuniqueafs}, $S'$ would be the set of all terms ${\sf rotate}_{in}^{\{1\}}(t_1, t_2)$
that are ground after filtering with $\pi'$. Hence, this includes all terms where the
first argument is a finite list.

If all terms in $S'$ are terminating w.r.t.\
$\R_\P'$, we can conclude that 
all queries $Q\in\A^{rat}(\Sigma,\Delta,\V)$ with
$\pi(Q)\in\A(\Sigma_\pi,\Delta_\pi)$ are terminating for the original logic
program. Since $\pi'$ satisfies the variable condition for the TRS $\R_\P'$
(and also for $DP(\R_\P')$ if $1 \in \pi'(u_{c,i})$ for all symbols of the
form $u_{c,i}$), one can also use $\pi'$ and $\R_\P'$ for the termination
criterion of Corollary \ref{Termination of Logic Programs by Dependency Pairs}. 
In other words, then it is sufficient to prove that there is no infinite
$(DP(\R_\P'),\R_\P',\pi')$-chain.

\begin{theorem}[(Soundness of Algorithm
\ref{alg:ImprovedRefinement})]
\label{thm:ImprovedRefinement}
Let $\P$ be a logic program and let $\pi$ be an argument filter
over $(\Sigma,\Delta)$.
Let $\pi'$ and $\R_\P'$ result from  $\pi$  and $\R_\P$ by
\rAlg{alg:ImprovedRefinement}.
Let $S = \{ p_{in}(\vec{t})\mid
p\in\Delta,
\; \vec{t} \in \vec{\T}^\infty(\Sigma,\V), \; \pi(p_{in}(\vec{t}))\in\T(\Sigma_{\P_\pi})
\, \}$.
Furthermore, let
$S' = \{ p_{in}^{\pi(p)}(\vec{t})\mid
p\in\Delta,
\; \vec{t} \in \vec{\T}^\infty(\Sigma,\V), \;
\pi'(p_{in}^{\pi(p)}(\vec{t}))\in\T(\Sigma_{\P_{\pi'}}) 
\, \}$. 
All terms $s\in S$ are terminating for $\R_\P$ if all terms $s' \in S'$
are terminating for $\R_\P'$.
\end{theorem}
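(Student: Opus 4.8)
The plan is to prove the contrapositive: from an infinite $\R_\P$-derivation issuing from a term of $S$, construct an infinite $\R_\P'$-derivation issuing from a term of $S'$. The central tool is the label-erasing map $\Unlab$ that sends each labelled symbol $p_{in}^I$, $p_{out}^I$, $u_{c,i}^I$ to its unlabelled counterpart and fixes every other symbol; $\Unlab$ is extended homomorphically to terms, preserves arities (hence positions), and maps $\R_\P'$-constructor terms to $\R_\P$-constructor terms (the two constructor signatures differ only by the labels on $p_{out}$-symbols). By induction on the run of \rAlg{alg:ImprovedRefinement} one verifies two facts about its output $\R_\P'$. First, $\Unlab(\el)\to\Unlab(r)\in\R_\P$ for every $\el\to r\in\R_\P'$: each operation of the algorithm either leaves the rules untouched, or only relabels a $p_{in}$-occurrence in a right-hand side and the matching $p_{out}$-occurrence in the corresponding ``following'' rule, or adds a copy of an already-present rule with its root relabelled --- none of which changes the $\Unlab$-image. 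Second, $\R_\P'$ is \emph{label-complete} and \emph{label-adjacent}: the original (root-unlabelled) $\R_\P$-rules are never deleted, so every label $I$ with $p_{in}^I$ present (call such $I$ \emph{active} for $p$; the full set $\{1,\ldots,n\}$ counts as the unlabelled case and is always active) comes with a full set of rules $\el_I\to r_I\in\R_\P'$, one for each $\R_\P$-rule $\el\to r$ of each clause $c$ of $p$, with $\Unlab(\el_I\to r_I)=\el\to r$ and $\rt(\el_I)=\rt(\el)^I$; and the label-adjacency invariant holds throughout, namely whenever a right-hand side has the form $u_{c,j}(p_{j_{in}}^J(\ldots),\ldots)$ the corresponding ``following'' rule has left-hand side $u_{c,j}(p_{j_{out}}^J(\ldots),\ldots)$ with the very same label $J$.

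Now assume for a contradiction that some $s=p_{in}(\vec t)\in S$ admits an infinite derivation $s=s_0\to_{\R_\P}s_1\to_{\R_\P}\cdots$. Since $\pi'(f)=\pi(f)$ for all $f\in\Sigma$, $\pi'(p_{in}^{\pi(p)})=\pi(p)=\pi(p_{in})$, and $\Sigma_{\P_\pi}$ and $\Sigma_{\P_{\pi'}}$ agree on the symbols of $\Sigma$, the term $\hat s_0:=p_{in}^{\pi(p)}(\vec t)$ satisfies $\pi'(\hat s_0)=\pi(p_{in}(\vec t))\in\T(\Sigma_{\P_{\pi'}})$; hence $\hat s_0\in S'$, $\Unlab(\hat s_0)=s_0$, and $\pi(p)$ is active for $p$.

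The core is a single-step lifting claim: if $\hat s$ is \emph{well-labelled} --- meaning it can occur along an $\R_\P'$-derivation starting from a term of $S'$, so that all its labels are active and respect the label-adjacency invariant --- with $\Unlab(\hat s)=s$, and $s\to_{\R_\P}t$ at position $\mathit{pos}$ via rule $\el\to r$ with matcher $\mu$, then there is a well-labelled $\hat t$ with $\Unlab(\hat t)=t$ and $\hat s\to_{\R_\P'}\hat t$. Indeed, $\rt(\el)$ is some $p_{in}$ or some $u_{c,j}$; let $I$ be the label carried by the root of $\hat s|_{\mathit{pos}}$ (active, by well-labelledness). By label-completeness, $\R_\P'$ contains the variant $\el_I\to r_I$ with $\Unlab(\el_I\to r_I)=\el\to r$ and $\rt(\el_I)=\rt(\el)^I$; the one subtle requirement --- that when $\rt(\el)$ is some $u_{c,j}$ the $p_{j_{out}}$-headed first argument of $\hat s|_{\mathit{pos}}$ carries exactly the label that $\el_I$ expects there --- is precisely the content of the label-adjacency invariant (that label was introduced on the previous rule's right-hand side and propagated unchanged down to the $p_{j_{out}}^J$-subterm). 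Hence $\el_I$ matches $\hat s|_{\mathit{pos}}$; let $\hat\mu$ be the (unique) matcher and put $\hat t:=\hat s[r_I\hat\mu]_{\mathit{pos}}$. Since $\Unlab$ is a homomorphism, $\Unlab(\el_I\to r_I)=\el\to r$, and a term has a unique matcher against a fixed left-hand side, $\Unlab\circ\hat\mu$ must coincide with $\mu$, whence $\Unlab(\hat t)=\Unlab(\hat s)[\Unlab(r_I)(\Unlab\circ\hat\mu)]_{\mathit{pos}}=s[r\mu]_{\mathit{pos}}=t$; and $\hat t$ is well-labelled because all its labels stem from the active, label-adjacent variant $\el_I\to r_I$ (constructor substitutions bind variables only to labelled subterms of $\hat s$, which are themselves well-labelled).

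Iterating this claim along $s_0\to_{\R_\P}s_1\to_{\R_\P}\cdots$, starting from $\hat s_0\in S'$, produces an infinite $\R_\P'$-derivation $\hat s_0\to_{\R_\P'}\hat s_1\to_{\R_\P'}\hat s_2\to_{\R_\P'}\cdots$ with $\Unlab(\hat s_n)=s_n$ throughout; thus $\hat s_0\in S'$ is non-terminating for $\R_\P'$, contradicting the hypothesis and proving the theorem. The remarks immediately preceding it --- that $\pi'$ and $\R_\P'$ may replace $\pi$ and $\R_\P$ in \rThm{Soundness of the Transformation} and in Corollary~\ref{Termination of Logic Programs by Dependency Pairs} --- then follow at once, using that \rAlg{alg:ImprovedRefinement} by construction makes $\pi'$ satisfy the variable condition for $\R_\P'$ (and for $DP(\R_\P')$ as well whenever $1\in\pi'(u_{c,i})$ for all $u_{c,i}$). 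I expect the main obstacle to be the careful induction behind ``label-complete and label-adjacent'': one must track exactly how Step 3.2 introduces a fresh symbol $p_{in}^{I\setminus\{i\}}$, adds the corresponding root-relabelled copies of $p$'s rules, rewires the $p_{out}$-label of the ``following'' rule, and does all this without ever disturbing the coverage of the other active labels; everything else (the homomorphism identities, uniqueness of matchers, and the membership $\hat s_0\in S'$) is routine.
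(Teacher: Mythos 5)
Your proposal is correct and follows essentially the same route as the paper: both prove the contrapositive by lifting an infinite $\R_\P$-derivation from $S$ to an infinite $\R_\P'$-derivation from $S'$ via a step-by-step simulation lemma based on the label-erasing map $\Unlab$, relying on the facts that every active label comes with a complete set of relabelled rule copies and that the $p_{out}$-label of each ``following'' rule matches the $p_{in}$-label introduced in the preceding right-hand side. The only difference is bookkeeping --- the paper phrases your ``well-labelled'' invariant as membership in the reachability sets $\mathbb{S}^n$ and $\mathbb{S'}$ and runs an induction on the number of rewrite steps with a three-way case analysis on the shape of the redex, which is materially the same argument.
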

\begin{proof}
We
first show that every reduction of a term from $S$ with $\R_\P$ can be simulated by
the reduction of a term from $S'$ with $\R_\P'$. More precisely, we show the following proposition
where $\mathbb{S}^n = \{t \mid 
p_{in}(\vec{t})
\to^n_{\R_\P} t$ for some 
$p\in\Delta$,
$\vec{t} \in \vec{\T}^\infty(\Sigma,\V)$, and
$\pi(p_{in}(\vec{t})) \in {\cal T}(\Sigma_{\P_\pi})\, \}$
and $\mathbb{S'} = \{t \mid 
p_{in}^{\pi(p)}(\vec{t})
\to^*_{\R_\P'} t$  for some 
$p\in\Delta$,
$\vec{t} \in \vec{\T}^\infty(\Sigma,\V)$, and
$\pi'(p_{in}^{\pi(p)}(\vec{t})) \in {\cal T}(\Sigma_{\P_{\pi'}})\, \}$
\begin{equation}
\label{ilemma}
\parbox{10cm}{If $s \in \mathbb{S}^n$ and $s' \in \mathbb{S'}$ 
with $\Unlab(s') = s$, then $s \to_{\R_\P}  t$ implies that there is
a $t'$ 
with $\Unlab(t') = t$ and  $s' \to_{\R_\P'}  t'$.}
\end{equation}
Here, $\Unlab$ removes all labels introduced by \rAlg{alg:ImprovedRefinement}:
\begin{eqnarray*}
\Unlab(s) &=& \begin{cases}
f(\Unlab(s_1),\ldots,\Unlab(s_n)), & \mbox{ if }s = f^I(s_1,\ldots,s_n)\\
s, & \mbox{ otherwise}
\end{cases}
\end{eqnarray*}

We prove (\ref{ilemma}) by induction on $n$. There are three possible cases for
$s$ and for the rule that is applied 
in the step from $s$ to $t$. \\

\noindent
\underline{Case 1: $n= 0$ and thus, $s = p_{in}(\vec{s})$}

\noindent
So $s \in S$ and there is a rule $\ell \to r \in \R_\P$
with $\ell = p_{in}(\vec{\ell})$ such that $s = \ell \sigma$ and $t = r\sigma$
for some substitution $\sigma$ with
terms from $\T^\infty(\Sigma,\V)$. 

Let $s' \in \mathbb{S'}$ 
with $\Unlab(s') = s$. Thus, we also have $s' \in S'$ where $s' =
p_{in}^{\pi(p)}(\vec{s})$ (since
a term with a root symbol $p_{in}^I$ cannot be obtained from $S'$ if one has
performed at least one rewrite step with $\R_\P'$).
Due to the construction of $\R_\P'$, there exists
a rule $\ell^{\pi(p)} \to r' \in \R_\P'$ where $\Unlab(r') = r$. We define
$t'$ to be $r' \sigma$. Then we clearly
have $s' = \ell^{\pi(p)}\sigma \to_{\R_\P'} r'\sigma = t'$ and
$\Unlab(t') = t$.\\

\noindent
\underline{Case 2: $n \geq 1$ and  $s = u_{c,i}(\overline{s}, \vec{q})$, $\overline{s}
\to_{\R_\P} \overline{t}$, $t = u_{c,i}(\overline{t},\vec{q})$}

\noindent
Since $s \in \mathbb{S}^n$, there exists a $p_{in}(\vec{s})$ with 
$\vec{s} \in \vec{\T}^\infty(\Sigma,\V)$ such that $p_{in}(\vec{s}) \to^*_{\R_\P}
\overline{s}$, i.e., $\overline{s} \in \mathbb{S}^m$ for some $m \in \mathbb{N}$.
Since the reduction from  $p_{in}(\vec{s})$ to $\overline{s}$
is shorter than the overall reduction that led to $s$, we obtain that
$m < n$. 

Let $s'  \in \mathbb{S'}$ 
with $\Unlab(s') = s$. Hence, we have
$s' =  u_{c,i}^I(\overline{s'}, \vec{q})$ for some label $I$ 
and
$\Unlab(\overline{s'}) = \overline{s}$.
Since $s' \in \mathbb{S'}$, there exists a $p_{in}^J(\vec{s})$ with 
$\vec{s} \in \vec{\T}^\infty(\Sigma,\V)$ such that $p_{in}^J(\vec{s}) \to^*_{\R_\P'}
\overline{s'}$. Hence, $\overline{s'} \in \mathbb{S'}$ as well. Now the induction
hypothesis implies that there exists a $\overline{t'}$ such
that $\overline{s'} \to_{\R_\P'} \overline{t'}$ and $\Unlab(\overline{t'}) =
\overline{t}$.
We define $t' = u_{c,i}^I(\overline{t'}, \vec{q})$. Then we clearly have
$s' \to_{\R_\P'} t'$ and $\Unlab(t') = t$.\\

\noindent
\underline{Case 3: $n \geq 1$ and  $s = u_{c,i}(p_{out}(\vec{s}), \vec{q})$}

\noindent
Here, there exists a rule $\ell \to r \in \R_\P$ with $\ell =
u_{c,i}(p_{out}(\vec{\ell}),\vec{x})$ such that $s = \ell\sigma$ and $t =
r\sigma$.

Let $s'  \in \mathbb{S'}$ 
with $\Unlab(s') = s$. Hence, we have
$s' =  u_{c,i}^I(p_{out}^J(\vec{s}), \vec{q})$ for some labels $I$ and $J$.
Since $s' \in \mathbb{S'}$, $s'$ resulted from rewriting the term 
$u_{c,i}^I(p_{in}^J(\vec{s}), \vec{q})$ which must be an instantiated right-hand
side of a rule from $\R_\P'$. 
Due to the construction of $\R_\P'$, then there
also exists
a rule $\ell' \to r' \in \R_\P'$
where $\ell' = u_{c,i}^I(p_{out}^J(\vec{\ell}), \vec{x})$
and $\Unlab(r') = r$. We define $t' = r'\sigma$. Then we have $s' =
\ell'\sigma \to_{\R_\P'} r'\sigma = t'$ and clearly $\Unlab(t') = t$.

We now proceed to prove the theorem by contradiction. Assume there is a term
$s_0 \in S$ that is non-terminating w.r.t.\ $\R_\P$, i.e., there is an infinite sequence of terms
$s_0, s_1, s_2, \ldots$ with $s_i \to_{\R_\P} s_{i+1}$.
We must have $s_0 = p_{in}(\vec{t})$ with $\vec{t} \in
\vec{\T}^\infty(\Sigma,\V)$ and $\pi(p_{in}(\vec{t}))\in\T(\Sigma_{\P_\pi})$.
Let $s_0' = p_{in}^{\pi(p)}(\vec{t})$. Then $s_0' \in S'$, since 
$\pi'(p_{in}^{\pi(p)}(\vec{t}))
\in\T(\Sigma_{\P_{\pi'}})$. The reason is that $\pi'(p_{in}^{\pi(p)}) = \pi(p)
= \pi(p_{in})$ and for all $f \in \Sigma$ we have $\pi'(f) \subseteq \pi(f)$. 

So by (\ref{ilemma}), $s_0' \in \mathbb{S'}$ and $\Unlab(s_0') = s_0$ imply
that
there is an $s_1'$ with $\Unlab(s_1') = s_1$ and $s_0' \to_{\R_\P'}
s_1'$. Clearly, this also implies $s_1' \in \mathbb{S'}$. By applying
(\ref{ilemma}) repeatedly, we therefore obtain an infinite sequence of labelled terms
$s_0', s_1', s_2', \ldots$ with $s'_i \to_{\R_\P'} s'_{i+1}$.
\end{proof}

\section{Formal Comparison of the Transformational Approaches}
\label{sec:previous}

In this section we  formally compare the
power of the classical transformation from
\rSec{The Classical Transformation} with the power of our new approach.
In the classical approach, the class of queries is characterized by a moding
function whereas in our approach, it is characterized by an argument
filter. Therefore, the following definition
establishes a relationship between modings and argument
filters.

\begin{definition}[(Argument Filter Induced by Moding)]
\label{def:afsfrommode}
Let $(\Sigma,\Delta)$ be a signature and
let $m$ be a moding over the set of predicate symbols $\Delta$.
Then for every predicate symbol $p \in \Delta$ we define the \emph{induced argument filter $\pi_m$} over $\Sigma_\P$
as $\pi_m(p_{in}) = \pi_m(P_{in}) = \{i \mid m(p,i) = \mIn\}$ and
$\pi_m(p_{out}) = \{i \mid m(p,i) = \mOut\}$. All other function
symbols $f$ from $\Sigma_\P$ are not filtered, i.e.,
$\pi_m(f/n) = \{1,\ldots,n\}$.
\end{definition}

\begin{example}\label{compare1}
{\sl
Regard again the well-moded logic program from \rEx{ex_not_sm}.
\[\begin{array}{lll}
{\sf p}(X, X). & &\\
{\sf p}({\sf f}(X), {\sf g}(Y)) &\from& {\sf p}({\sf f}(X), {\sf f}(Z)),
{\sf p}(Z,{\sf g}(Y)).
\end{array}\]
We used the moding $m$ with  $m({\sf p},1) = \mIn$ and $m({\sf p},2) =
\mOut$.
Thus, for the induced argument filter $\pi_m$ we have
$\pi_m(\Fp_{in}) = \pi_m({\sf P}_{in}) = \{ 1 \}$ and $\pi_m(\Fp_{out}) = \{ 2
\}$.}
\end{example}

As the classical approach is only applicable to well-moded
logic programs, we restrict our comparison to this class.
For non-well-moded programs, our new approach is clearly
more powerful, since it can often prove termination (cf.\ \rSec{sec:experiments}), whereas the classical
transformation is never applicable.

Our goal is to show the connection between the TRSs resulting from the
two transformations. If one refines $\pi_m$ to a filter $\pi_m'$ by \rAlg{alg:refinement} using \emph{any}
arbitrary refinement heuristic, then the TRS of the classical transformation
corresponds to the TRS of our new
transformation after filtering it with
$\pi_m'$.

\begin{example}\label{compare2}{\sl
We continue with \rEx{compare1}.
The TRS $\R_\P$ resulting from our new transformation was given in
\rEx{ex_not_sm_we}:
\begin{align}
\rrule{32-1}{
{\sf p}_{in}(X,X) }{ {\sf p}_{out}(X,X)}\\
\rrule{32-2}{
{\sf p}_{in}({\sf f}(X), {\sf g}(Y)) }{ {\sf u}_1({\sf p}_{in}({\sf f}(X),
{\sf f}(Z)),X,Y)} \\
\rrule{32-3}{
{\sf u}_1({\sf p}_{out}({\sf f}(X), {\sf f}(Z)),X,Y) }{ {\sf u}_2({\sf p}_{in}(Z,{\sf g}(Y)),X,Y,Z)}\\
\rrule{32-4}{
{\sf u}_2({\sf p}_{out}(Z,{\sf g}(Y)),X,Y,Z) }{ {\sf p}_{out}({\sf f}(X),{\sf g}(Y))}
\end{align}
If
we apply the induced argument filter $\pi_m$, then we obtain the  TRS
$\pi_m(\R_\P)$:
\begin{align*}
\urule{{\sf p}_{in}(X) }{ {\sf p}_{out}(X)} \\
\urule{{\sf p}_{in}({\sf f}(X)) }{{\sf u}_1({\sf p}_{in}({\sf f}(X)),X,Y) } \\
\urule{{\sf u}_1({\sf p}_{out}({\sf f}(Z)),X,Y) }{ {\sf u}_2({\sf p}_{in}(Z),X,Y,Z)} \\
\urule{{\sf u}_2({\sf p}_{out}({\sf g}(Y)),X,Y,Z) }{ {\sf p}_{out}({\sf g}(Y))}
\end{align*}

The second rule has the ``extra'' variable $Y$ on the right-hand side, i.e.,
it does not satisfy the variable condition. Thus, we have to refine
the filter $\pi_m$ to a filter $\pi_m'$ with $\pi_m'(\Fu_1) = \pi_m'(\FU_1) =\{1,2\}$ and
$\pi_m'(\Fu_2) = \pi_m'(\FU_2) = \{1,2,4\}$. The resulting TRS $\pi_m'(\R_\P)$ is
identical to the TRS $\R_\P^{old}$ resulting from the classical
transformation, cf.\ \rEx{old transformation}:
\begin{align*}
\urule{{\sf p}_{in}(X) }{ {\sf p}_{out}(X)} \\
\urule{{\sf p}_{in}({\sf f}(X)) }{{\sf u}_1({\sf p}_{in}({\sf f}(X)),X) } \\
\urule{{\sf u}_1({\sf p}_{out}({\sf f}(Z)),X) }{ {\sf u}_2({\sf p}_{in}(Z),X,Z)} \\
\urule{{\sf u}_2({\sf p}_{out}({\sf g}(Y)),X,Z) }{ {\sf p}_{out}({\sf g}(Y))}
\end{align*}}
\end{example}

The following theorem shows that our approach (with Corollary \ref{Termination
of Logic Programs by Dependency Pairs}) succeeds
whenever the classical transformation of \rSec{The Classical
Transformation} yields a terminating TRS.

\begin{theorem}[(Subsumption of the Classical Transformation)]
\label{subsumption}
Let $\P$ be a well-moded logic program over a signature $(\Sigma,\Delta)$
w.r.t.\ the moding $m$. Let $\R_\P^{old}$ be the result
of applying the classical transformation of \rSec{The Classical
Transformation} and let
$\R_\P$ be the result of our new transformation from \rDef{unmodedTranslation}.
Then there is a refinement of $\pi_m'$ of $\pi_m$ such that (a)
$\pi_m'(\R_\P)$ and $\pi_m'(DP(\R_\P))$ satisfy the variable condition
and (b) if $\R_\P^{old}$ is terminating (with ordinary rewriting),
then there is no infinite $(DP(\R_\P),\R_\P,\pi_m')$-chain.
Thus, in particular, termination of $\R_\P^{old}$ implies that
$\R_\P$ is terminating (with infinitary constructor rewriting) for all terms  $p_{in}(\vec{t})$ with
$p\in\Delta$,
$\vec{t} \in \vec{\T}^\infty(\Sigma,\V)$, and $\pi(p_{in}(\vec{t}))\in \T(\Sigma_{\P_\pi})$.
\end{theorem}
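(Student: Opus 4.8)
The plan is to take $\pi_m'$ to be the filter produced by \rAlg{alg:refinement} starting from the induced filter $\pi_m$, and to prove the key structural fact that $\pi_m'(\R_\P)=\R_\P^{old}$ and $\pi_m'(DP(\R_\P))=DP(\R_\P^{old})$, \emph{independently of the refinement heuristic used}. First I would observe that, once $\pi_m$ is applied, the only rules of $\R_\P$ whose variable condition can fail are those with a right-hand side of the form $u_{c,j}(p_{j_{in}}(\vec s_j),\ldots)$. For the fact rules $p_{in}(\vec s)\to p_{out}(\vec s)$ and the closing rules $u_{c,k}(\ldots)\to p_{out}(\vec s)$ of a clause, well-modedness condition (a), $\V_{out}(H)\subseteq \V_{in}(H)\cup\V_{out}(B_1)\cup\ldots\cup\V_{out}(B_k)$, already gives $\V(\pi_m(r))\subseteq\V(\pi_m(\ell))$; and since $\pi_m$ (hence, as one checks, also $\pi_m'$) leaves all original function symbols of $\Sigma$ unfiltered, condition (b), $\V_{in}(B_i)\subseteq\V_{in}(H)\cup\V_{out}(B_1)\cup\ldots\cup\V_{out}(B_{i-1})$, shows that a violating variable can never occur inside the subterm $p_{j_{in}}(\vec s_j)$ of such a right-hand side, but only directly as one of the remaining arguments of $u_{c,j}$. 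Consequently every violating occurrence sits at a position of length one, and by \rDef{Refinement Heuristic} the heuristic is then \emph{forced} to return $(\rt(r),i)=(u_{c,j},i)$ for the offending argument position $i$. Since all steps of \rAlg{alg:refinement} are thus forced and only ever remove positions, the resulting $\pi_m'$ is uniquely determined and does not depend on $\rho$ or on the order in which violations are treated.

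Next I would determine exactly which argument positions of $u_{c,j}$ survive $\pi_m'$. By induction on $j$, using that the left-hand side of the rule that produces $u_{c,j}$ contains $p_{(j-1)_{out}}(\vec s_{j-1})$ (exposing $\V_{out}(B_{j-1})$ after filtering) together with the positions already retained at $u_{c,j-1}$, one shows that position $1$ of $u_{c,j}$ (the call $p_{j_{in}}(\vec s_j)$, filtered to its input arguments) is always kept, and that the variable-carrying argument positions of $u_{c,j}$ that survive are precisely those holding a variable of $\V_{in}(H)\cup\V_{out}(B_1)\cup\ldots\cup\V_{out}(B_{j-1})$. These are exactly the arguments the classical transformation of \rSec{The Classical Transformation} attaches to $u_{c,j}$, so $\pi_m'(\R_\P)=\R_\P^{old}$, as already illustrated in \rEx{compare2}. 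Setting $\pi_m'(F)=\pi_m'(f)$ for every defined symbol and using $1\in\pi_m'(u_{c,j})$, one moreover gets that filtering commutes with forming dependency pairs on $\R_\P$, i.e.\ $\pi_m'(DP(\R_\P))=DP(\pi_m'(\R_\P))=DP(\R_\P^{old})$. Part (a) of the theorem is now immediate: for a well-moded program $\P$ the classical transformation produces a well-formed TRS, i.e.\ $\V(r)\subseteq\V(\ell)$ for every rule of $\R_\P^{old}$ (cf.\ \cite{Ohlebusch01}), which is precisely the variable condition for $\pi_m'(\R_\P)$; and since $1\in\pi_m'(u_{c,j})$ and $\pi_m'(U_{c,j})=\pi_m'(u_{c,j})$, the variable condition for $DP(\R_\P)$ follows as noted at the end of \rSec{Refinement Algorithm for Argument Filters}.

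For part (b) I would argue by contraposition. Suppose there is an infinite $(DP(\R_\P),\R_\P,\pi_m')$-chain. Applying the (sound) argument filter processor of \rThm{Argument Filter Processor} turns it into an infinite $(\pi_m'(DP(\R_\P)),\pi_m'(\R_\P),id)$-chain, which by the identities above is an infinite $(DP(\R_\P^{old}),\R_\P^{old},id)$-chain; such a chain is in particular an ordinary dependency pair chain for $\R_\P^{old}$, so by soundness of the dependency pair method \cite{AG00} the TRS $\R_\P^{old}$ is non-terminating, contradicting the hypothesis. Hence no infinite $(DP(\R_\P),\R_\P,\pi_m')$-chain exists, and the final claim follows from \rThm{Proving Infinitary Termination}: since $\Sigma\subseteq\Sigma_C$ for $\R_\P$ and $\pi_m'$ refines $\pi_m$, every term $p_{in}(\vec t)$ with $\vec t\in\vec{\T}^\infty(\Sigma,\V)$ and $\pi_m(p_{in}(\vec t))\in\T(\Sigma_{\P_{\pi_m}})$ is terminating for $\R_\P$. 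The main obstacle is the bookkeeping in the first two paragraphs — verifying, via both well-modedness conditions, that the forced runs of \rAlg{alg:refinement} leave on each $u_{c,j}$ exactly the argument positions used by the classical transformation; once this is established, the rest is a routine combination of \rThm{Argument Filter Processor}, \cite{AG00}, and \rThm{Proving Infinitary Termination}.
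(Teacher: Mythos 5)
Your proposal is correct and follows essentially the same route as the paper's proof: it fixes $\pi_m'$ as the output of \rAlg{alg:refinement}, uses the two well-modedness conditions to show that every violation sits directly under some $u_{c,j}$ so that the refinement is forced and yields $\pi_m'(\R_\P)=\R_\P^{old}$ with $\pi_m'(DP(\R_\P))=DP(\R_\P^{old})$, and then combines \rThm{Argument Filter Processor}, the soundness of dependency pairs from \cite{AG00}, and \rThm{Proving Infinitary Termination}. The only (harmless) difference is that you additionally argue the outcome is independent of the heuristic, whereas the paper simply restricts to heuristics that never filter the first argument of a $u_{c,i}$.
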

\begin{proof}
Let $\pi_m'$ result from \rAlg{alg:refinement} using any refinement
heuristic $\rho$ which does not filter away the first argument
of any $u_{c,i}$.

We now analyze the structure of the TRS $\pi_m'(\R_\P)$.
For any predicate symbol $p \in \Delta$, let
``$p(\vec{s},\vec{t})$''  denote that $\vec{s}$
and $\vec{t}$ are
the sequences of terms on $p$'s in- and output positions w.r.t.\ the moding
$m$.

When \rAlg{alg:refinement} is applied to compute the refinement $\pi_m'$ of
$\pi_m$, one looks for a rule  $\el \to r$ from $\pi_m(\R_\P)$
such that $\V(r) \not\subseteq
\V(\el)$. Such a rule cannot result from the facts of the logic
program. The reason is that for each fact
$p(\vec{s},\vec{t})$,
$\pi_m(\R_\P)$
contains the rule
\[p_{in}(\vec{s}) \to
p_{out}(\vec{t})\]
and by well-modedness, we have
$\V(\vec{t}) \subseteq \V(\vec{s})$.

For each rule $c$ of the form
$p(\vec{s},\vec{t}) \from p_1(\vec{s}_1,\vec{t}_1), \ldots,
p_k(\vec{s}_k,\vec{t}_k)$ in $\P$, the TRS\linebreak
 $\pi_m(\R_\P)$ contains:
\[\begin{array}{l}
p_{in}(\vec{s}) \; \to \; u_{c,1}(p_{1_{in}}(\vec{s}_1),
\V(\vec{s}) \cup \V(\vec{t}))\\
u_{c,1}(p_{1_{out}}(\vec{t}_1), \V(\vec{s}) \cup \V(\vec{t})) \; \to \;
u_{c,2}(p_{2_{in}}(\vec{s}_2), \V(\vec{s})  \cup \V(\vec{t}) \cup \V(\vec{s}_1)\cup \V(\vec{t}_1))\\
\hspace*{5cm} \vdots\\
u_{c,k}(p_{k_{out}}(\vec{t}_k), \V(\vec{s})  \cup \V(\vec{t}) \cup \V(\vec{s}_1)\cup \V(\vec{t}_1) \cup \ldots
\cup \V(\vec{s}_{k-1})  \cup \V(\vec{t}_{k-1})) \; \to \; p_{out}(\vec{t})
\end{array}
\]

For the first rule, by well-modedness we have
$\V(\vec{s}_1) \subseteq \V(\vec{s})$ and thus, the only ``extra'' variables on the
right-hand side of the first rule must be from $\V(\vec{t})$. There is only
one possibility to refine the argument filter in order to remove them:
one has to filter away the respective argument positions of
$u_{c,1}$.
Hence, the filtered right-hand side of the first rule is $u_{c,1}(p_{1_{in}}(\vec{s}_1),
\V(\vec{s}))$ and the filtered left-hand side of the second rule is 
$u_{c,1}(p_{1_{out}}(\vec{t}_1), \V(\vec{s}))$.

Similarly, for the second rule, well-modedness implies
$\V(\vec{s}_2) \cup  \V(\vec{s})  \cup \V(\vec{s}_1)  \cup \V(\vec{t}_1) \subseteq \V(\vec{t}_1) \cup
\V(\vec{s})$. So the only ``extra'' variables on the right-hand side of the
second rule are again from $\V(\vec{t})$. As before, to remove them
one has to filter away the respective argument positions of $u_{c,2}$.
Moreover, since $\V(\vec{s}_1) \subseteq \V(\vec{s})$ we obtain the filtered 
right-hand side $u_{c,2}(p_{2_{in}}(\vec{s}_2), \V(\vec{s}) \cup
\V(\vec{t}_1))$ for the second rule and the filtered left-hand side $u_{c,2}(p_{2_{out}}(\vec{t}_2), \V(\vec{s}) \cup \V(\vec{t}_1))$
side in the third rule.

An analogous argument holds for the other rules. The last rule has no extra
variables, since
$\V(\vec{t}) \subseteq \V(\vec{s}) \cup \V(\vec{t}_1)\cup \ldots \cup \V(\vec{t}_k)$
by well-modedness.

So   for any rule $c$ of the logic program $\P$, $\pi_m'(\R_\P)$ has
the following rules:
\[\begin{array}{rcl}
p_{in}(\vec{s}) & \to & u_{c,1}(p_{1_{in}}(\vec{s}_1),
\V(\vec{s}))\\
u_{c,1}(p_{1_{out}}(\vec{t}_1), \V(\vec{s})) & \to &
u_{c,2}(p_{2_{in}}(\vec{s}_2), \V(\vec{s}) 
\cup \V(\vec{t}_1))\\
&  \vdots & \\
u_{c,k}(p_{k_{out}}(\vec{t}_k), \V(\vec{s}) 
\cup \V(\vec{t}_1) \cup \ldots
\cup \V(\vec{t}_{k-1})) & \to & p_{out}(\vec{t})
\end{array}
\]
Hence, $\pi_m'(\R_\P) = \R_\P^{old}$. Since the
refined argument filter $\pi_m'$ does not filter away the first argument of
any $u_{c,i}$, by defining $\pi_m'(U_{c,i}) := \pi_m'(u_{c,i})$, then the
variable condition is satisfied for both $\pi_m'(\R_\P)$ and $\pi_m'(DP(\R_\P))$
and, thus, (a) is fulfilled.

Now to prove (b), we assume that $\R_\P^{old}$ is terminating. We have
to show that then there is no infinite $(DP(\R_\P),\R_\P,\pi_m')$-chain. By
the soundness of the argument filter processor (\rThm{Argument Filter
Processor}), it suffices to show
that there is no infinite $(\pi_m'(DP(\R_\P)), \pi_m'(\R_\P), id)$-chain.

Note that $\pi_m'(DP(\R_\P)) = DP(\pi_m'(\R_\P))$. The reason is that all $u_{c,i}$ only occur on the
root level in $\R_\P$. Moreover, all $p_{in}$-symbols only occur in the first
argument of a $u_{c,i}$ and $1
\in \pi_m'(u_{c,i})$. In other words, occurrences of defined function symbols are not
removed by the filter $\pi_m'$.
So we have
\[\begin{array}{ll}
&u \to v \in \pi_m'(DP(\R_\P)) \vspace*{.1cm}\\
\mbox{iff}& \parbox[t]{10cm}{there is a rule $\ell \to r
\in \R_\P$ with $u = \pi_m'(\ell^\sharp), v = \pi_m'(t^\sharp)$\\for a subterm $t$ of $r$ with
defined root\vspace*{.2cm}}\\
\mbox{iff}& \parbox[t]{10cm}{there is a rule $\ell \to r
\in \R_\P$ with $u = (\pi_m'(\ell))^\sharp, v = (\pi_m'(t))^\sharp$\\for a subterm $\pi_m'(t)$ of $\pi_m'(r)$ with
defined root\vspace*{.2cm}}\\
\mbox{iff}& \parbox[t]{10cm}{there is a rule $\ell \to r
\in \pi_m'(\R_\P)$ with $u = \ell^\sharp, v = t^\sharp$\\ for a subterm $t$ of $r$ with
defined root\vspace*{.2cm}}\\
\mbox{iff}& u \to v \in DP(\pi_m'(\R_\P))
\end{array}\]
Hence,   $\pi_m'(\R_\P) = \R_\P^{old}$ and $\pi_m'(DP(\R_\P)) =
 DP(\pi_m'(\R_\P)) = DP(\R_\P^{old})$. Thus, it suffices to show absence of
infinite
$(DP(\R_\P^{old}), \R_\P^{old}, id)$-chains.
But this follows from termination of $\R_\P^{old}$, cf.\ \cite[Thm.\ 6]{AG00}, since
$(\P,\R, id)$-chains correspond to chains for ordinary (non-infinitary)
rewriting.

Hence by \rThm{Proving Infinitary Termination}, termination of $\R_\P^{old}$ also implies that
 all terms  $p_{in}(\vec{t})$ with
$p\in\Delta$,
$\vec{t} \in \vec{\T}^\infty(\Sigma,\V)$, and
$\pi(p_{in}(\vec{t}))\in\T(\Sigma_{\P_\pi})$ are terminating w.r.t.\ $\R_\P$ (using
infinitary constructor rewriting).
\qed
\end{proof}

The reverse direction of the above theorem does not hold, though.
As a counterexample, regard again the logic program from \rEx{ex_not_sm}, cf.\
\rEx{compare2}. As shown in \rEx{old transformation}, the TRS resulting from
the classical transformation is not terminating. Still,
for the filter $\pi_m'$ from \rEx{compare2}, there is no infinite
$(DP(\R_\P),\R_\P,\pi_m')$-chain and thus, our method of Corollary
\ref{Termination of Logic Programs by Dependency Pairs}
succeeds with the termination proof. In other words, our new approach is
\emph{strictly} more powerful than the classical transformation, even on
well-moded programs.

Thus, a termination analyzer based on our new transformation
should be strictly more successful in practice, too. That this
is in fact the case will be demonstrated in the next section.

\section{Experiments and Discussion}
\label{sec:experiments}

We integrated our approach (including all refinements presented) in the termination tool {\sf AProVE}
\cite{IJCAR06} which implements the DP framework.
 To evaluate our results, we
tested {\sf AProVE} against four other representative termination tools for logic
programming: {\sf TALP}~\cite{TALP} is the only other available tool based on
transformational methods (it uses the classical transformation of
\rSec{The Classical Transformation}),
whereas
{\sf Polytool}~\cite{Nguyen:DeSchreye06},
{\sf
TerminWeb}~\cite{Codish:Taboch}, and {\sf cTI}~\cite{Mesnard:Bagnara}
are based on
direct approaches.
\rSec{Experimental Evaluation} describes the results of our experimental
evaluation and in \rSec{sec:limitations} we discuss the limitations of our approach.

\subsection{Experimental Evaluation}\label{Experimental Evaluation}

We ran the tools
on a set of 296 examples in fully automatic mode.\footnote{We combined  \emph{termsize} and
\emph{list-length} norm for {\sf TerminWeb} and
 allowed 5 iterations
before widening for \textsf{cTI}. Apart from that, we used the default
settings of the tools. For both {\sf AProVE} and {\sf Polytool} we used the 
(fully automated)
original executables from the \emph{Termination Competition} 2007 \cite{Competition}.
To refine argument filters, this version of {\sf AProVE} uses the refinement heuristic
$\rho_{\mathit{tb}'}$ from \rDef{def:tb2}. For a list of the main termination
techniques used in {\sf AProVE}, we refer to \cite{LPAR04,JAR06}. Of these
techniques, only the ones in Section \ref{Automation by Adapting the DP
Framework} were adapted to infinitary 
constructor rewriting.
}
 This set
includes all logic programming examples from the \emph{Termination
Problem Data Base}  \cite{TPDB}  which is
used in the
annual international \emph{Termination Competition} \cite{Competition}. It
contains
collections
provided by the developers of several different tools including
all  examples from the experimental evaluation
of \cite{CodishTOPLAS}. However, to eliminate the  influence of the
translation  from
\textsf{Prolog}
 to logic programs,
we removed all examples that use non-trivial built-in
predicates or that are not definite logic
programs after ignoring the cut operator.
All tools were run locally on an AMD Athlon 64 at 2.2 GHz under GNU/Linux 2.6.
For each example we used a time limit of 60 seconds. This is similar to the
way that tools are evaluated in the annual competitions for termination
tools.

\begin{center}
\begin{tabular}{l||c|@{\quad\;}c@{\quad\;}|@{\qquad}c@{\qquad}|c|@{\quad\;}c@{\quad\;}}
          & {\sf AProVE}  & {\sf Polytool} &  {\sf TerminWeb} &{\sf cTI} &
{\sf TALP}\\
\hline
Successes &             232 &            204 & 177 &       167 &  163\\
Failures  &              57 &             82 & 118 &       129 &  112\\
Timeouts  &               7 &             10 &   1 &        0 &   21
\end{tabular}
\end{center}

As shown in the table above, {\sf AProVE} succeeds on more examples than
any other tool.
The comparison of {\sf AProVE} and {\sf TALP} shows that our approach
improves significantly upon
the previous transformational method that {\sf TALP} is based on, cf.\ Goals (A)
and (B).
In particular, {\sf TALP} fails for all
non-well-moded programs.

While we have shown our technique to be strictly more powerful than the
previous transformational method, due to the higher
arity of the function symbols produced by our transformation, proving termination could
take more time in some cases. However, 
in the above experiments this did not affect the practical
power of our implementation. In fact, {\sf AProVE} is able to prove termination
well within the time limit for all examples where {\sf TALP} succeeds. Further analysis shows
that while {\sf AProVE} never takes more than 15 seconds longer than {\sf TALP}, there are
indeed 6 examples where {\sf AProVE} is more than 15 seconds \emph{faster} than {\sf TALP}.

The comparison with {\sf Polytool}, {\sf TerminWeb}, and {\sf cTI} demonstrates that our new
transformational approach is not only comparable in power, but usually more
powerful
than direct approaches.
In fact,
there is only a single example where one of the other
tools (namely {\sf Polytool}) succeeds and {\sf AProVE} fails.
This is the rather contrived example from (2) in \rSec{sec:limitations} which
we developed to demonstrate the limitations of our method.
{\sf Polytool} is only able to handle this example via a pre-processing
step based on partial evaluation 
\cite{Nguyen:Bruynooghe:DeSchreye:Leuschel,DBLP:conf/lopstr/SerebrenikS03,Tamary:Codish}.
In this example, this 
pre-processing step results in a trivially
terminating logic program. Thus, if one combined this pre-processing with any
of the other tools, then they would also be able to prove termination of this
particular example.\footnote{Similarly, 
with such a pre-processing the existing ``direct'' tools would also be able to
prove termination of the program in \rEx{ex_not_sm}.}
Integrating some form of partial evaluation into
{\sf AProVE} might be an interesting possibility for further improvement.
For all other examples, {\sf AProVE} can show termination whenever at least one of the
other tools succeeds. Moreover, there are several examples where {\sf
AProVE} succeeds whereas no other tool shows the termination. 
These include examples where the termination proof requires more
complex orders. 
For instance, termination of the example {\tt SGST06/hbal\_tree.pl}
can be proved using recursive
path orders with status and termination of  {\tt
talp/apt/mergesort\_ap.pl} is shown using matrix orders.\footnote{For
recursive path orders with status 
and matrix orders see \cite{Lesc83} resp. \cite{MatrixIJCAR}.}

Note that 52 examples in this collection are known to be non-terminating,
i.e., there are at most 244 terminating examples. In other words, there are
only at
most 12 terminating examples where {\sf AProVE} did not manage to prove
termination. With this performance, {\sf AProVE} won
the \emph{Termination Competition} with {\sf Polytool} being the second most
powerful tool.
The best tool for non-termination analysis of logic programs was {\sf NTI} \cite{Payet:Mesnard}.

However, from the experiments above one should not draw the conclusion that
the transformational approach is always better than the direct approach to
termination analysis of logic programs. There are several extensions (e.g., termination
inference \cite{Codish:Taboch,Mesnard:Bagnara},
non-termination analysis \cite{Payet:Mesnard}, handling numerical data
structures \cite{Serebrenik:DeSchreye:numerical:TPLP,SD05}) that can currently only be handled by direct techniques
and tools.

Regarding the use of term rewriting techniques for termination analysis of
logic programs,
it is interesting to note that
the
currently most powerful tool for direct termination analysis of logic
programs ({\sf Polytool}) implements the framework of \cite{Nguyen:DeSchreye,Nguyen:DeSchreye06} for
applying techniques from term rewriting (most notably polynomial interpretations)
to logic programs directly. This framework forms the basis for further
extensions to other TRS-termination techniques.
For example, it can be extended further by
adapting also basic results of the dependency pair method to the logic programming setting
\cite{LOPSTR07}. Preliminary investigations with a prototypical implementation
indicate that in this way, one can prove termination of several examples where
the transformational approach with {\sf AProVE} currently fails.

So  transformational and direct approaches both have their advantages and
the most powerful solution might be to combine  direct
tools like {\sf Polytool}
with a transformational prover like {\sf AProVE}
which is based on the contributions of this paper. But it is clear that
it is indeed beneficial to use termination techniques from
TRSs for logic programs, both for direct and for transformational approaches.

In addition to the experiments described above (which compare different
termination provers), we also performed experiments with several versions of
{\sf AProVE} in order to evaluate the different heuristics and algorithms for
the computation of argument filters from 
\rSec{sec:refining}.
The following table shows that indeed our improved type-based refinement
heuristic (tb$'$) 
from
\rSec{Type-Based Refinement Heuristic}
significantly outperforms the simple improved outermost (om$'$) and innermost (im)
heuristics
from
\rSec{Simple Refinement Heuristics}. In fact, all examples that could be proved terminating by
any of the simple heuristics can also be proved terminating by the type-based
heuristic.

\begin{center}
\begin{tabular}{l||c|c|c}
          & {\sf AProVE} tb$'$ & {\sf AProVE} om$'$ & {\sf AProVE} im\\
\hline
Successes &             232 &             218 &             195\\
Failures  &              57 &              76 &              98\\
Timeouts  &               7 &               2 &               3
\end{tabular}
\end{center}

So far, for all experiments we used \rAlg{alg:ImprovedRefinement} (from
\rSec{sec:afsmoding})
in order to
compute a refined argument filter from the initial one.
To evaluate the advantage of this improved algorithm over
\rAlg{alg:refinement} (from \rSec{Refinement
Algorithm for Argument Filters}), we performed experiments with the two
algorithms (again using the type-based refinement heuristic
tb$'$ from \rSec{Type-Based Refinement Heuristic}).
The following table shows that  \rAlg{alg:ImprovedRefinement}
is indeed significantly more powerful than \rAlg{alg:refinement}.

\begin{center}
\begin{tabular}{l||c|c}
          & {\sf AProVE} \rAlg{alg:ImprovedRefinement} & {\sf AProVE} \rAlg{alg:refinement}\\
\hline
Successes &             232 &             212\\
Failures  &              57 &              74\\
Timeouts  &               7 &              10
\end{tabular}
\end{center}

As mentioned in \rSec{Structure of the Paper}, 
preliminary versions of parts of this paper appeared in
\cite{LOPSTR06}. However, the table below clearly shows that the results
of \rSec{sec:refining} (which are new compared to \cite{LOPSTR06})
improve the power of termination analysis substantially. To this end, we
compare our new implementation that uses the improved type-based refinement
heuristic (tb$'$) and the improved refinement algorithm (\rAlg{alg:ImprovedRefinement})
from \rSec{sec:refining} with 
the version of \textsf{AProVE} from the \emph{Termination Competition} 2006
that only contains the
results of \cite{LOPSTR06}.
To find argument filters, it uses a simple ad-hoc heuristic which turns out to
be clearly disadvantageous to the new sophisticated techniques from \rSec{sec:refining}. 

\begin{center}
\begin{tabular}{l||c|@{\quad\;}c@{\quad\;}}
          & {\sf AProVE} tb$'$  & {\sf AProVE} \cite{LOPSTR06}\\
\hline
Successes &             232 &            208\\
Failures  &              57 &             69\\
Timeouts  &               7 &             19
\end{tabular}
\end{center}

To run {\sf AProVE}, for details on
our experiments, and to access our
collection of examples,
we refer  to
\texttt{http://aprove.informatik.rwth-aachen.de/eval/TOCL/}.

\subsection{Limitations}
\label{sec:limitations}

Our experiments also contain examples which demonstrate the limitations
of our approach. Of course, our implementation in {\sf AProVE} usually fails
if there are features outside of pure logic programming (e.g., built-in
predicates, negation as failure, meta programming, etc.). We consider the
handling of meta-logical features such as cuts and meta programming as
future work. We think that techniques from term rewriting are especially
well-suited to handle meta programming as term rewriting does not rely
on a distinction between predicate and function symbols.

In the following, we discuss
the limitations of the approach when applying it for pure logic programming.
In principle, there could be three points of failure:
\begin{enumerate}
\item The transformation of \rThm{Soundness of the Transformation} could fail,
      i.e., there could be a logic program which is terminating for the set of
      queries, but not all corresponding terms are terminating in the
      transformed TRS. We do not know of any such example. It is currently
open whether this step is in fact complete.
\item \label{item:incomplete}The approach via dependency pairs (\rThm{Proving Infinitary Termination})
      can fail to prove termination of the transformed TRS, although the TRS is
      terminating. In particular, this can happen because of the variable
      condition required for \rThm{Proving Infinitary Termination}. This is
      demonstrated by the following logic program $\P$:
      \[
      \begin{array}{lll}
      \Fp(X)       & \from & \Fq(\Ff(Y)), \Fp(Y).\\
      \Fp(\Fg(X))  & \from & \Fp(X).\\
      \Fq(\Fg(Y)).
      \end{array}
      \]
      The resulting TRS $\R_\P$ is
      \begin{align*}
      \urule{\Fp_{in}(X)               }{ {\sf u}_1(\Fq_{in}(\Ff(Y)),X)}\\
      \urule{{\sf u}_1(\Fq_{out}(\Ff(Y)),X) }{ {\sf u}_2(\Fp_{in}(Y),X,Y)}\\
      \urule{{\sf u}_2(\Fp_{out}(Y),X,Y)    }{ \Fp_{out}(X)}\\
      \urule{\Fp_{in}(\Fg(X))          }{ {\sf u}_3(\Fp_{in}(X),X)}\\
      \urule{{\sf u}_3(\Fp_{out}(X),X)      }{ \Fp_{out}(\Fg(X))}\\
      \urule{\Fq_{in}(\Fg(Y))          }{ \Fq_{out}(\Fg(Y))}
      \end{align*}
      and there are the following dependency pairs.
      \begin{align}
       \lrule{DPL1}{{\sf P}_{in}(X)               }{ {\sf Q}_{in}(\Ff(Y))}\\
      \lrule{DPL2}{{\sf P}_{in}(X)               }{ {\sf U}_1(\Fq_{in}(\Ff(Y)),X)}\\
      \lrule{DPL3}{{\sf U}_1(\Fq_{out}(\Ff(Y)),X) }{ {\sf P}_{in}(Y)}\\
      \lrule{DPL4}{{\sf U}_1(\Fq_{out}(\Ff(Y)),X) }{ {\sf U}_2(\Fp_{in}(Y),X,Y)}\\
      \lrule{DPL5}{{\sf P}_{in}(\Fg(X))          }{ {\sf P}_{in}(X)}\\
      \lrule{DPL6}{{\sf P}_{in}(\Fg(X))          }{ {\sf U}_3(\Fp_{in}(X),X)}
     \end{align}
      We want to prove termination of all queries $\Fp(t)$ where $t$ is finite
and ground (i.e., $m(\Fp,1) = \mIn$). Looking at the
      logic program $\P$, it is obvious that they are all terminating.
      However, there is no argument filter $\pi$ such that $\pi(\R_\P)$ and
      $\pi(DP(\R_\P))$ satisfy the variable condition and such that there is
      no infinite $(DP(\R_\P),\R_\P,\pi)$-chain.

      To see this, note that if $\pi({\sf P}_{in}) = \varnothing$ or $\pi(\Fg) = \varnothing$
      then we can build an infinite chain with the last dependency pair where
      we instantiate $X$ by the infinite term $\Fg(\Fg(\ldots))$. So, let
      $\pi({\sf P}_{in}) = \pi(\Fg) = \{1\}$.
      Due to the variable condition
      of the dependency pair (\ref{DPL3}) we know $\pi(\Ff) = \pi(\Fq_{out}) = \{1\}$ and
      $1 \in \pi({\sf U}_1)$. Hence, to satisfy the variable condition in 
      dependency pair (\ref{DPL2}) we must set $\pi(\Fq_{in}) = \varnothing$. But then the
      last rule of $\pi(\R_\P)$ does not satisfy the variable condition.

\item Finally
 it can happen that the resulting DP problem of
      \rThm{Proving Infinitary Termination} is finite, but that our
      implementation fails to prove it. The reason can be that one should
      apply other DP processors or DP processors with other parameters.
      After all, finiteness of DP problems is undecidable. This is shown
      by the following example where we are interested in all queries
      $\Ff(t_1,t_2)$ where $t_1$ and $t_2$ are ground terms:
      \[
      \begin{array}{lll}
      \Ff(X,Y)                     & \from & \Fg(\Fs(\Fs(\Fs(\Fs(\Fs(X))))),Y).\\
      \Ff(\Fs(X),Y)                & \from & \Ff(X,Y).\\
      \Fg(\Fs(\Fs(\Fs(\Fs(\Fs(\Fs(X)))))),Y) & \from & \Ff(X,Y).
      \end{array}
      \]

      Termination can (for example) be proved if one uses a polynomial order
      with coefficients from $\{0,1,2,3,4,5\}$.
But the current automation does
      not use such polynomials and
      thus, it fails when trying to prove
      termination of this example.
\end{enumerate}

While the DP method can also be used for non-termination proofs if one
considers ordinary rewriting, this is less obvious for infinitary constructor
rewriting. The reason is that the main termination criterion is ``complete''
for ordinary rewriting, but incomplete for
infinitary constructor
rewriting (cf.\ the counterexample (\ref{item:incomplete}) to the completeness of
\rThm{Proving
Infinitary Termination} above). Therefore, in order to also prove
\emph{non-termination} of logic programs, a combination of our method with a
loop-checker for logic programs would be fruitful. As mentioned before, a very powerful
non-termination tool for logic programs is {\sf NTI} \cite{Payet:Mesnard}.
Our collection of 296 examples contains 233 terminating examples (232 of these
can be successfully shown by {\sf AProVE}), 52
non-terminating examples, and 11 examples whose termination behavior is
unknown.
{\sf NTI} can prove non-termination of 42 of the 52 non-terminating
examples. Hence, a combination of {\sf AProVE} and {\sf NTI} would
successfully analyze the termination behaviour of
274 of the 296 examples.

\section{Conclusion}
\label{sec:conclusion}
In this paper, we developed a new transformation from
logic programs $\P$ to
TRSs $\R_\P$. To prove the termination of a class of queries for $\P$, it is
now sufficient to
analyze the termination behavior of $\R_\P$ on a corresponding class of terms
w.r.t.\ infinitary constructor
rewriting.  This class of terms is characterized by a so-called argument
filter and we showed how to generate such argument filters from the given class
of queries for $\P$.
Our approach is even
sound for logic programming without occur check. To prove termination of
infinitary rewriting automatically,
we showed how to
adapt the DP framework of \cite{AG00,LPAR04,JAR06} from ordinary term rewriting to
infinitary constructor
rewriting. Then the DP framework can be used for termination
proofs of $\R_\P$ and thus, for automated termination analysis of
$\P$. Since \emph{any}
termination technique for TRSs can be formulated \pagebreak as a DP processor \cite{LPAR04},
now any such technique can also be used for logic programs.

In addition to the results presented in \cite{LOPSTR06},
we showed that our new approach subsumes the classical transformational
approach to termination analysis of logic programs. We also
provided new heuristics and algorithms for refining the initial argument filter
that improve the power of our method (and hence, also of its implementation)
substantially.

Moreover, we implemented all contributions in our termination prover {\sf
AProVE} and performed extensive experiments which demonstrate that our results
are indeed applicable in practice. More precisely, due to our contributions,
{\sf AProVE} has become the currently most powerful automated termination prover for logic
programs.

\begin{acknowledgments}
We thank  Mike Codish,
Danny De Schreye, and Fred Mesnard for helpful comments and discussions
on the results presented here and  Roberto
Bagnara, Samir Genaim, and Manh Thang Nguyen for help with the
experiments.
\end{acknowledgments}

\bibliographystyle{acmtrans}

\providecommand{\noopsort}[1]{}

\begin{received}
  Received March 2008;
  accepted July 2008
\end{received}
\end{document}